\date{}
\newcommand{\widebar}[1]{\mkern1.5mu\overline{\mkern-1.5mu#1\mkern-1.5mu}\mkern1.5mu}
\newcommand{\QED}{\hfill$\square$}
\title{Efficiently Computable Limits on EPR Pair Generation \\ in Quantum Broadcast Channels}
\author[1]{Patrick Hayden}
\author[2,3]{Debbie Leung}
\author[4,5]{Hjalmar Rall}
\author[3,6]{Farzin Salek}
\affil[1]{Department of Physics, Stanford University, Stanford, CA, USA}
\affil[2]{Perimeter Institute for Theoretical Physics, Ontario, Canada}
\affil[3]{Institute for Quantum Computing (IQC), University of Waterloo, Ontario, Canada}
\affil[4]{Department of Mathematics, Technical University of Munich, Munich, Germany}
\affil[5]{Munich Center for Quantum Science and Technology (MCQST), Munich, Germany}
\affil[6]{Dahlem Center for Complex Quantum Systems, Freie Universit\"{a}t Berlin, Berlin, Germany}
\newtheorem{theorem}{Theorem}
\newtheorem{proposition}[theorem]{Proposition}
\newtheorem{lemma}[theorem]{Lemma}
\newtheorem{remark}[theorem]{Remark}
\newtheorem{corollary}[theorem]{Corollary}
\newtheorem{definition}[theorem]{Definition}
\begin{document}

\maketitle

\begin{abstract}
We investigate the generation of EPR pairs between three observers in a general causally structured setting, where communication occurs via a noisy quantum broadcast channel. The most general quantum codes for this setup take the form of tripartite quantum channels. Since the receivers are constrained by causal ordering, additional temporal relationships naturally emerge between the parties. These causal constraints enforce intrinsic no-signalling conditions on any tripartite operation, ensuring that it constitutes a physically realizable quantum code for a quantum broadcast channel. We analyze these constraints and, more broadly, characterize the most general quantum codes for communication over such channels. We examine the capabilities of codes that are fully no-signalling among the three parties, positive partial transpose (PPT)-preserving, or both, and derive simple semidefinite programs to compute the achievable entanglement fidelity. We then establish a hierarchy of semidefinite programming converse bounds—both weak and strong—for the capacity of quantum broadcast channels for EPR pair generation, in both one-shot and asymptotic regimes. Notably, in the special case of a point-to-point channel, our strong converse bound recovers and strengthens existing results. Finally, we demonstrate how the PPT-preserving codes we develop can be leveraged to construct PPT-preserving entanglement combing schemes, and vice versa.
\end{abstract}

\section{Introduction}

The most general quantum codes for generating entanglement between two parties, a form of quantum communication processing equivalent to entanglement transmission or subspace transmission \cite{Devetak2005-jn}, are modelled as bipartite quantum channels \cite{shannon1961two,Childs2006-ge}. % 
This means that while the sender $A$ and receiver $B$ communicate through a noisy quantum channel, their most general encoding and decoding strategies can be described by a bipartite quantum channel. This framework accounts for possible cooperation between the two players, potentially granting them additional resources beyond the noisy channel itself.  
When the bipartite operation consists only of local operations—an encoder for the sender and a decoder for the receiver—the code is referred to as an \textbf{unassisted code (UN)}, since both players operate independently in their own laboratories and rely solely on the noisy channel for quantum communication.  
The study of unassisted quantum communication has been a fundamental topic in quantum information theory, particularly in understanding the capacity of a quantum channel to transmit information from $A$ to $B$. Among various aspects, significant attention has been given to quantum capacity—the maximum rate at which quantum information can be reliably transmitted. Assuming an asymptotically large number of channel uses, this question was answered by the LSD theorem \cite{Lloyd-S-D,L-Shor-D,L-S-Devetak}. The theorem establishes that the quantum capacity is given by the regularized coherent information, an expression that requires an optimization over infinitely many channel uses.   

In light of this result, a promising direction is to deepen our understanding of quantum capacity for channels where the LSD formula is known to be additive. According to the achievability part of the LSD theorem, if the communication rate is below the channel’s quantum capacity, there exists a coding scheme whose decoding error—measured by the fidelity—vanishes exponentially with the number of channel uses. Conversely, the theorem’s converse asserts that if the communication rate exceeds the quantum capacity, then the fidelity of any coding scheme remains strictly less than one in the asymptotic limit. This type of result is known as a \textit{weak converse}, as it allows for the possibility of a trade-off between communication rate and fidelity—suggesting that one might exceed the quantum capacity by tolerating a higher error, i.e., lower fidelity. In contrast, a \textit{strong converse} theorem rules out such trade-offs: it asserts that if the communication rate exceeds the channel’s capacity, the fidelity of any coding scheme must drop sharply, vanishing in the asymptotic limit of many channel uses. In this sense, the capacity acts as a sharp boundary—marking a clear threshold between achievable and unachievable rates—beyond which the quality of transmission deteriorates abruptly, reminiscent of a phase transition.
For classical communication, the strong converse property has been established for several classes of quantum channels \cite{Winter_1999,Wilde_2014,Wilde_2014-3}. For quantum communication, only a few cases have been studied \cite{Tomamichel_2017,Fang_2021,cheng2024strongconversetheoremsquantum,Leditzky_2016}. In particular, it was shown in \cite{Tomamichel_2017} that the Rains information \cite{Rains2001} serves as a strong converse rate for quantum information transmission.

In practical scenarios, resources are limited, and the number of times a channel can be used is finite. Additionally, the channel may not remain consistent across uses (i.e., it may not be memoryless), making it crucial to optimize communication strategies under these constraints. These challenges have driven significant interest in the non-asymptotic (finite blocklength) regime, which examines the trade-offs between the size of the transmitted quantum system, the number of channel uses, and the achievable fidelity \cite{HAYASHI_2017}.
There has been a surge of interest in understanding information-theoretic tasks beyond the asymptotic limit, leading to significant progress in both classical information theory \cite{5290292,5452208,CIT-086,6269084},  and, more recently, in quantum information theory \cite{Datta2016-xu,6574274,Wilde_2017,8861115,Matthews2014} \cite{PhysRevLett.108.200501,5967913,5961832}, where the challenges of finite blocklength analysis are particularly pronounced.

A key limitation of several bounds developed in these works is the lack of explicit or computationally efficient methods for evaluating the relevant quantities. 
 In many cases, it is unclear whether such computations are feasible at all. This underscores a broader and fundamental challenge—of both theoretical and practical importance—namely, the difficulty of obtaining quantum capacity expressions that are not only well-defined but also amenable to efficient computation.
Gaining insight into this generally intractable problem has led to a natural line of inquiry: analyzing how the inclusion of additional free resources in the coding scheme can enhance performance and potentially simplify capacity evaluation. While this topic is interesting in its own right—for instance, by considering quantum capacity when the sender and receivers have free access to entanglement—it also provides useful upper bounds on unassisted quantum capacity. This approach has been pursued in works such as \cite{Anshu_2019,Datta_2013}, though these results still face challenges related to uncomputability. 
A similar strategy has been employed in the context of entanglement distillation \cite{Rains2001}, where one-shot converse bounds under local operations and classical communication (LOCC) are derived. These bounds are expressed as semidefinite programs (SDPs) by considering the broader class of positive partial transpose (PPT)-preserving operations \cite{PhysRevA.60.179}. 

The search for computable bounds on quantum communication naturally leads to finding codes that allow for efficient optimization. A particularly effective approach is to impose restrictions on the bipartite quantum channel (the code), ensuring that the resulting restricted code enables efficiently computable upper bounds on unassisted quantum capacity.  This raises the fundamental question: what is the most general bipartite operation that can serve as a quantum code?  
It turns out that for a bipartite channel to qualify as a quantum code, it only needs to preserve causality \cite{Chiribella_2008,duan-winter}: The sender's action can directly influence the receiver’s, as it occurs earlier in time, but the reverse is not true. This causal ordering is enforced by a no-signalling condition, ensuring that the receiver cannot send information back to the sender. Apart from this, there are no further restrictions on the bipartite operation. Inspired by the pioneering work of Rains \cite{Rains2001}, this framework is referred to as \textbf{forward-assisted codes} \cite{leung-matthews}, which consist of all forward communication and local operations in the one-shot regime—where multiple channel uses may occur in parallel, without feedback or adaptive coding strategies. 

It is evident that the class of forward-assisted codes is too powerful to be meaningful, as they may allow the two parties to even bypass the noisy channel entirely and use the code itself to transmit information. Therefore, additional constraints must be imposed on the bipartite channel, leading to different classes of quantum codes. 
To facilitate the derivation of computable capacity bounds, prior studies have identified two particularly useful classes of quantum codes: \textbf{no-signaling (NS) codes} and \textbf{PPT-preserving codes}.
Specifically, NS codes encompass all assisting resources that do not enable communication between parties, including entanglement-assisted codes, while PPT-preserving codes include all LOCC operations. We briefly review these frameworks, which have been shown to admit semidefinite programming formulations for bounding code performance.  

Extending this line of work, \cite{leung-matthews} investigated the capabilities of no-signalling and PPT-preserving codes for quantum communication, expressing the entanglement fidelity (also known as channel fidelity) as an SDP to evaluate their performance—an approach that has since served as a key stepping stone for subsequent developments. The quantitative gap between the unassisted entanglement fidelity and the fidelity achievable with PPT-preserving or no-signalling codes was later analysed in \cite{Berta2022-aq}, presenting asymptotically converging hierarchy of SDP relaxations, enabling a precise quantification of the advantage offered by these assisted coding strategies. Building on the SDP formulation developed in \cite{leung-matthews}, \cite{wang-q} introduced the one-shot quantum capacity with a small error tolerance for PPT-preserving and no-signalling codes, characterizing it as an optimization problem. Using this framework, they derived SDP-based bounds to evaluate the one-shot capacity for a given error threshold. While \cite{Tomamichel2016} had previously provided efficiently computable converse bounds, the SDP bounds from \cite{wang-q} are generally tighter and can be strictly tighter for key examples such as the qubit amplitude damping channel and the qubit depolarizing channel. For efficiently computable limits on the classical capacity of quantum channels, see \cite{8012535,8638816}. Motivated by these advances, we turn our attention to the non-asymptotic analysis of quantum broadcast channels, extending this growing body of work to the more intricate setting of multi-user quantum communication. We develop this perspective in the discussion that follows.

\medskip

As quantum networks continue to expand, new communication scenarios become increasingly relevant. One fundamental setting involves a single transmitter sending information to two receivers over a shared channel with one input and two outputs. This model is known as a quantum broadcast channel \cite{q-yard}, and it represents a basic ingredient of any quantum network. The concept of broadcast channels was first introduced nearly five decades ago \cite{cover-broadcast} in the context of classical information theory, where channels were modeled by conditional probability distributions. In the classical setting, a transmitter aims to send two independent messages to two separate receivers, and the central question is to determine the set of all simultaneously achievable transmission rates. Despite extensive research, this problem remains largely unsolved in general~\cite{4797571,6145471,5673931}.

The importance of this network primitive has motivated its generalization from the classical to the quantum domain, giving rise to the classical-quantum (cq-) broadcast channel model \cite{q-yard}. A quantum broadcast channel is formally defined as a quantum channel with one input and two outputs: the sender, Alice, transmits a quantum state in Hilbert space $A$; the first receiver, Bob, receives a state in Hilbert space $B$; and the second receiver, Charlie, receives a state in Hilbert space $C$. This model captures scenarios in which the sender transmits classical information, resulting in quantum outputs for the receivers that are conditioned on the input. However, the noncommutativity of quantum output states makes characterizing the classical capacity significantly more challenging than in the classical case~\cite{salek2024threereceiverquantumbroadcastchannels,q-yard,savov-wilde,7412732}. In particular, the quantum nature of conditional structures leads to converse bounds that fail to tightly match achievable rate regions~\cite{6634255,8370123}.

When it comes to quantum information transmission over a quantum broadcast channel—where Alice’s input results in the two receivers obtaining marginals of an arbitrarily mixed bipartite quantum state—the problem becomes notoriously difficult. To date, the only known result involves transmitting the same coherent quantum message over isometric quantum channels \cite{q-yard}. This effectively amounts to generating a GHZ state shared among Alice, Bob, and Charlie, where the three parties are connected via an isometric channel. However, this setup is somewhat artificial, as Charlie is technically the inaccessible environment, reinterpreted as a legitimate receiver only because of the isometric nature of the channel.

There are two levels of difficulty in this context. The first is generating a Greenberger–Horne–Zeilinger (GHZ) state over a quantum broadcast channel, which enables the transmission of the same quantum message from Alice to both Bob and Charlie. The second, more challenging task is the simultaneous generation of two independent EPR pairs—one between Alice and Bob, and the other between Alice and Charlie—which would allow Alice to send independent quantum messages to each receiver. We expect the first task to be more tractable, while the second is significantly harder—not least because the classical capacity region is a special case of this problem.

Despite some progress, fully characterizing the quantum broadcast capacity remains an open challenge. Ongoing research continues to refine both inner and outer bounds~\cite{5466521,Bäuml_2017,7438836,9153029,Colomer2024-ro,cheng2023quantum,wang2017}.
A subset of the current authors will present new achievable rate regions for the quantum broadcast capacity in the most general setting of EPR pair generation~\cite{3-salek-hayden}. However, while our upcoming paper and existing work have significantly advanced our understanding, the current bounds remain largely intractable—either due to the need for regularization (which makes the capacity impossible to compute explicitly) or because of the inherent difficulty in evaluating the relevant entropic quantities. For example, \cite{seshadreesan2016} provides an upper bound on the quantum capacity in terms of a regularized quantity that, in general, is not computable.

Inspired by the existence of efficiently computable bounds for the capacity of point-to-point quantum channels, it is natural to seek similar bounds for quantum broadcast channels. In this more complex setting, the most general quantum code takes the form of a tripartite quantum operation, requiring an extension of the bipartite code models. This leads to the study of tripartite quantum operations as quantum codes aimed at generating EPR pairs between three parties. Unlike bipartite operations, tripartite operations involve more intricate definitions of no-signalling constraints. Depending on which specific constraints are imposed, an operation may qualify as a valid quantum code for a given task while still respecting causality. As we will show, this framework yields the most general class of quantum codes for generating EPR pairs between one sender and two receivers connected via a quantum broadcast channel. The structural constraints that define bipartite quantum codes can be naturally generalized to the tripartite case, giving rise to multiple classes of codes that are amenable to efficient computation. For the task of classical information transmission over quantum broadcast channels, this approach was explored in \cite{xie2017converse}, where the authors derived semidefinite programming bounds on the classical capacity using no-signalling and/or PPT-preserving codes.

The transmission of quantum information over a quantum channel is closely related to the task of entanglement distillation—a foundational area of quantum information theory focused on transforming multipartite quantum states into more useful or better-understood forms of entanglement. The objective is often to convert a given quantum state into a standard entangled state such as an EPR pair or a GHZ state, either to enhance utility or to gain insight into its entanglement structure. Depending on the initial and target states, available local operations, and the number of involved parties, various distillation protocols have been developed. 

One particularly significant form of entanglement distillation is entanglement combing. In this protocol, a multipartite entangled state is systematically restructured into a collection of EPR pairs, where a central "root" party shares bipartite entanglement with multiple "leaf" parties \cite{PhysRevLett.103.220501}. This is accomplished through sequential applications of state merging \cite{state-merging}, progressively merging the leaf parties into the root. The protocol also generalizes to mixed states \cite{salek2023newprotocolsconferencekey,dutil2011assistedentanglementdistillation}. In the asymptotic setting for pure states, \cite{PhysRevLett.103.220501} characterized the achievable EPR pair rate region as the convex hull of its extreme points, each corresponding to a specific order in which merging is performed. Traditionally, interior points of this region were accessed through time-sharing between these extreme protocols. However, recent insights from \cite[Sec.~6.3 \& Cor.~6.12]{Colomer2024-ro} show that simultaneous merging steps can directly achieve the entire rate region, eliminating the need for time-sharing. Advances for mixed-state combing protocols are further developed in \cite{salek2023newprotocolsconferencekey}. 

Our interest in entanglement combing stems from its conceptual and operational parallels with the generation of EPR pairs over quantum broadcast channels. In both scenarios, the goal is for a sender—Alice—to establish entanglement with two receivers, Bob and Charlie. In entanglement combing, the starting resource is a shared quantum state among the three parties, while in the broadcast setting, it is a noisy quantum channel with one input and two outputs. For point-to-point channels, a key connection between EPR distillation and quantum coding was established in \cite{PhysRevA.60.179, Rains2001}, where it was shown that PPT-preserving codes correspond to PPT-preserving distillation protocols via the Choi state of the channel. Building on this idea, \cite{leung-matthews} demonstrated that any PPT-preserving distillation protocol acting on the Choi state can equivalently be interpreted as a PPT-preserving quantum code for the channel. As a consequence, the maximum fidelity achievable by PPT-preserving codes upper bounds that of PPT-preserving distillation protocols. Furthermore, they provided criteria under which the converse holds—that is, when a PPT-preserving code can be realized via a corresponding distillation protocol, thereby achieving the same fidelity and saturating the bound.

\medskip

\textbf{Summary of our contribution:}
We study tripartite quantum channels as quantum codes in quantum networks, focusing on their no-signalling properties and PPT-preserving conditions. By analysing these constraints, we identify key structural properties of quantum broadcast coding.  We establish the most general code for quantum broadcast channels, clarifying aspects that remained implicit in prior work on point-to-point channels. A central result of our work is the formulation of channel fidelity for quantum broadcast channels assisted by PPT-preserving and/or NS codes as a semidefinite program. Using this SDP framework, we derive efficiently computable weak converse bounds on the quantum capacity of quantum broadcast channels for both the asymptotic and non-asymptotic settings. Our approach follows a layered strategy: We begin by defining a broad class of forward-assisted tripartite quantum codes, then restrict optimization to more structured NS and/or PPT-preserving assisted capacities, and finally relax the problem into an efficiently solvable SDP. Unlike in the classical case, quantum capacity optimization presents intricate constraints that do not straightforwardly reduce to an SDP. Addressing this challenge, we analyse multipartite NS conditions in terms of their role in enforcing causality, ensuring that our relaxations remain both meaningful and physically consistent, i.e., no relaxed NS-assisted code should violate causality.

We further introduce the one-shot quantum sum-capacity under PPT-preserving and NS-assisted codes, characterizing it as a nonlinear optimization problem. From this formulation, we derive SDP upper bounds on the one-shot capacity for a given error tolerance. For a relaxed variant of the $NS \cap PPT$ assisted capacity, we prove its additivity under tensor products, leading to a strong converse theorem: If the sum-rate exceeds capacity, fidelity vanishes in the asymptotic limit. This result generalizes and strengthens prior work on point-to-point channels \cite{wang-q}. Drawing inspiration from techniques developed for finite-resource bipartite quantum capacity \cite{wang-q} and classical message transmission over quantum broadcast channels \cite{xie2017converse}, we extend the SDP framework to account for the additional complexity of sending quantum information over quantum broadcast channels. Finally, building on prior work in entanglement distillation \cite{leung-matthews}, we explore how PPT-preserving codes can be used to simultaneously distill EPR pairs between Alice and both Bob and Charlie, given an initial shared quantum state. More precisely, we establish a connection between PPT-preserving coding and PPT-preserving entanglement combing, demonstrating when and how a PPT-preserving code can be leveraged as an entanglement combing protocol and vice versa.

\medskip

The paper is organized as follows: Section \ref{preliminaries} introduces basic concepts and notation. In Section \ref{class-of-codes}, we classify different types of codes and examine the interplay between causality preservation and no-signalling conditions in multipartite settings. Section \ref{sdp-fidelity} formulates semidefinite programs to evaluate the fidelity of PPT-preserving and no-signalling codes. In Section \ref{weak-converse}, we derive an optimization problem for the one-shot capacity and introduce a hierarchy of SDP relaxations to establish weak converse bounds, while Section \ref{sec:strong-converse} presents our strong converse bound. The connection between PPT-preserving entanglement combing protocols and PPT-preserving codes is explored in Section \ref{combing}. Finally, we conclude with discussions and future directions in Section \ref{discussion}, while additional material is provided in the Appendix \ref{appendix}.

\section{Preliminaries}
\label{preliminaries}
We begin by recalling some basic notions and notations before presenting more advanced background. This approach aligns with standard treatments in the literature, such as those found in the excellent textbooks \cite{khatri-wilde-book,Tomamichel_2016,Wilde_2013,gour2024resourcesquantumworld,Hayashi2017-cv}
We denote the Hilbert spaces associated with finite-dimensional quantum systems by capital letters, such as $A,B$, etc., and represent the dimension of a Hilbert space $A$ by $\abs{A}$. We will write the systems on which an operator is defined as a superscript. The same convention applies to all operators, except trace $\Tr $ and transpose operators $T$, for which we will use subscripts. An operator $\rho^{A}$ is a density operator, representing the state of a quantum system, if it is positive semidefinite, denoted $\rho^{A}\geq 0$, and $\Tr\rho^{A}=1$. The identity operator on Hilbert space $A$ is denoted by $\mathds{1}^{A}$. For operators $X$ and $Y$, we use the notation $\abs{X} \leq Y$ to denote that $-Y \leq X \leq Y$.
 The composition of two systems is represented by the tensor product of their Hilbert spaces, $AB = A \otimes B$. For example,
for a multipartite operator $\rho^{AB}$ acting on tensor product space $A\otimes B$, we may write $\rho^{AB}\in A\otimes B$. 
 For an operator $\rho^{AB}$, the corresponding reduced operator on $A$ is denoted as $\rho^A = \Tr_B \rho^{AB}$.  
For an operator $X^{A}$, its trace norm is defined as $\|X^{A}\|_1 = \Tr \sqrt{XX^\dagger}$. The fidelity between two state $\rho$ and $\sigma$ is defined as $F(\rho, \sigma) = \|\sqrt{\rho}\sqrt{\sigma}\|_1 = \Tr\sqrt{\rho\sigma}$. Note that when $\psi = \ketbra{\psi}$ is pure, the fidelity simplifies to $F(\psi,\sigma) = \bra{\psi}\sigma\ket{\psi}$. 
For two Hilbert spaces of the same dimension, we use the same letter with a tilde to distinguish them. For instance, $A$ and $\widetilde{A}$ represent two isometric Hilbert spaces. Additionally, operators with a prime symbol are intended to belong to the same local party, though their dimensions may differ. For example, $A$ and $A'$ could denote Alice's input and output spaces, both belonging to her locally.

A bipartite state $\rho^{AB}$ is called entangled if it is not a separabel state, i.e. if it cannot be written in the form
\begin{align*}
    \rho^{AB} = \sum_i \alpha_i \, \sigma_i^A \otimes \omega_i^B,
\end{align*}
where $\sigma_i^A$ and $\omega_i^B$ are states on $A$ and $B$, respectively, and $\alpha_i$ are probabilities. In other words, a state on $AB$ is entangled if it cannot be expressed as a probabilistic mixture of product states on $A$ and $B$.  
Separable states form a subclass of the broader family of positive partial transpose (PPT) states. A bipartite quantum state is said to be PPT if it remains positive semi-definite under partial transposition with respect to one of its subsystems. The subset of PPT states that are not separable are known as bound entangled states \cite{PhysRevLett.80.5239}. These are entangled quantum states from which no maximally entangled pairs (EPR pairs) can be distilled using two-way LOCC operations.

An EPR pair, or maximally entangled state, is given by $\phi^{A\widetilde{A}} = \ketbra{\phi}^{A\widetilde{A}}$, where
\begin{align*}
    \ket{\phi}^{A\widetilde{A}} = \frac{1}{\sqrt{|A|}} \sum_i \ket{i}^A \otimes \ket{i}^{\widetilde{A}}.
\end{align*}
Here, $|A| = |\widetilde{A}|$, and $\{\ket{i}^A\}$ and $\{\ket{i}^{\widetilde{A}}\}$ are standard orthonormal bases for $A$ and $\widetilde{A}$, respectively. EPR pairs do not belong to the PPT class and are a central resource in quantum information processing.
For convenience, we also define the unnormalized EPR pair as
\begin{align*}
    \ket{\widebar{\phi}}^{A\widetilde{A}} = \sum_i \ket{i}^A \otimes \ket{i}^{\widetilde{A}}.
\end{align*}
This unnormalized state will prove useful in simplifying expressions, such as in the definition of the Choi matrix.

For any state $\sigma^{A\widetilde{A}}$, its entanglement fidelity is defined as $F(\phi^{A\widetilde{A}},\sigma^{A\widetilde{A}})=\Tr\sigma^{A\widetilde{A}}\phi^{A\widetilde{A}}$. The partial transpose of an EPR pair results in an operator that includes negative eigenvalues. Specifically:  
\begin{align*}
    (\phi^{A\widetilde{A}})^{T_{A}} = \frac{1}{\abs{A}}(\mathbb{S}^{A\widetilde{A}} - \mathbb{A}^{A\widetilde{A}}),
\end{align*}
where $\mathbb{S}^{A\widetilde{A}}$ and $\mathbb{A}^{A\widetilde{A}}$ are the projectors onto the symmetric and anti-symmetric subspaces of $A\otimes \widetilde{A}$, respectively. These projectors satisfy the relation $\mathbb{S}^{A\widetilde{A}} + \mathbb{A}^{A\widetilde{A}} = \mathds{1}^{A\widetilde{A}}$, a property specific to bipartite operators. In terms of the flip operator $F^{A\widetilde{A}} = \sum_{i,j}\ket{i,j}\bra{j,i}^{A\widetilde{A}}$, the projectors can be expressed as $\mathbb{S}^{A\widetilde{A}} = \frac{1}{2}(\mathds{1}^{A\widetilde{A}} + F^{A\widetilde{A}})$ and $\mathbb{A}^{A\widetilde{A}} = \frac{1}{2}(\mathds{1}^{A\widetilde{A}} - F^{A\widetilde{A}})$.

Another useful property of the EPR pair, often referred to as the transpose trick, is the following identity. For any operator $X^{A}$, we have
\begin{align}
\label{eq:transposetrick}
    \big((X^{A})^{T} \otimes \mathds{1}^{\widetilde{A}}\big) \ket{\phi}^{A\widetilde{A}} = 
    \big(\mathds{1}^{A}\otimes X^{\widetilde{A}} \big)\ket{\phi}^{A\widetilde{A}}.
\end{align}
The EPR pair can also be used to express the trace of an operator as follows:
\begin{align}
\label{trace-EPR}
    \Tr X^A = \frac{1}{\abs{A}} \Tr(X^A \otimes \mathds{1}^{\widetilde{A}}) =  \abs{A}\bra{\phi}^{A\widetilde{A}} (X^A \otimes \mathds{1}^{\widetilde{A}}) \ket{\phi}^{A\widetilde{A}}.
\end{align}
Note that the identity for the trace of an operator in Eq. \eqref{trace-EPR} holds for any member of the Bell state family. For example, in the qubit case, one can replace $\ket{\phi}^{A\widetilde{A}}$ in Eq. \eqref{trace-EPR} with either $\frac{1}{\sqrt{2}}(\ket{00}-\ket{11})$, $\frac{1}{\sqrt{2}}(\ket{01}+\ket{10})$, or $\frac{1}{\sqrt{2}}(\ket{10}-\ket{01})$. However, and importantly, the transpose trick in Eq. \eqref{eq:transposetrick} only applies to the EPR pair $\ket{\phi} = \frac{1}{\sqrt{2}}(\ket{00}+\ket{11})$. This has important consequences for the study of teleportation, as we will see later in Sec. \ref{combing}.    

\medskip

The cyclicity property of the (partial) transpose operation within the (partial) trace states that for any operators $L^{AB}$ and $H^{AB}$, we have \begin{align*} \Tr_A\left\{(L^{AB})^{T_A} H^{AB} \right\} = \Tr_A\left\{ L^{AB} (H^{AB})^{T_A} \right\}, \end{align*} but this is not necessarily equal to $\Tr_A\left\{ L^{AB} H^{AB} \right\}$. Notably, the transpose operation cannot generally be removed even when applied to the entire system. Specifically, while we have
\begin{align*} 
 \Tr\left\{(L^{AB})^{T_{AB}} H^{AB} \right\} = \Tr\left\{ L^{AB} (H^{AB})^{T_{AB}} \right\}, \end{align*} 
this is not necessarily equal to $\Tr\left\{ L^{AB} H^{AB} \right\}$, even if $L^{AB}$ and $H^{AB}$ are Hermitian. As a simple counterexample, let $Y = \sigma_y$ be the Pauli $Y$ operator, and define $L = (I + Y)/2$ and $H = (I - Y)/2$. Clearly, we have $L^T = H$, yet $\Tr L^T H = 1$, whereas $\Tr L H = 0$.

\medskip
A quantum channel $\mathcal{N}^{A'\to B}$ is a linear, completely positive (CP) and trace-preserving (TP) map which accepts input quantum states in $A'$ and produces output quantum states in $B$. Quantum channels can be represented in various forms, including the Stinespring representation, Kraus operators, and the Choi-Jamio\l{}kowski (commonly abbreviated as Choi) representation. In this paper, we extensively use the Choi representation. The Choi representation of a channel $\mathcal{N}^{\widetilde{A',} \to B}$ is the operator $N^{A'B} \in A' \otimes B$, defined as: 

\begin{align*} 
N^{A'B} &= |A'| \, (\text{id}^{A'} \otimes \mathcal{N}^{\widetilde{A'} \to B})(\phi^{A'\widetilde{A'}})\\
& = (\text{id}^{A'} \otimes \mathcal{N}^{\widetilde{A'} \to B})(\widebar{\phi}^{A'\widetilde{A'}})
\end{align*} 
where $\text{id}^{A'}$ is the identity channel on $A'$, and $\ket{\widebar{\phi}}^{A'\widetilde{A'}} = \sum_{i}\ket{i,i}^{A'\widetilde{A'}}$ is the unnormalized maximally entangled state. (In the literature, $N^{A'B}$ is sometimes referred to as Choi matrix with respect to unnormalized EPR pair). Note that $N^{A'B}$ is not a quantum state, as it results from applying a quantum channel to an unnormalized state. We refer to this operator as the Choi matrix of the channel. In some cases, it is convenient to define the Choi representation using the normalized Bell state. For example, in Sec.~\ref{combing}, we denote the corresponding Choi state with a tilde, $\widetilde{N}^{A'B}$, which relates to the Choi matrix as  
\begin{align*}
   \widetilde{N}^{A'B} = \frac{N^{A'B}}{|A'|}. 
\end{align*}
Throughout this paper, we adopt the convention that quantum channels are denoted by calligraphic letters (e.g., $\mathcal{N}^{A'\to B}$), while their corresponding Choi matrices are represented by the same letter in regular font (e.g., $N^{A'B}$). Additionally, note that we may use the labels $A'$ and $\widetilde{A'}$ interchangeably, as they correspond to isometric spaces. 
It is a well-established result that the channel $\mathcal{N}^{A'\to B}$ is completely positive if and only if $N^{A'B} \geq 0$, and it is trace-preserving if and only if $\Tr_B(N^{A'B}) = \mathds{1}^{A'}$. By employing the Choi matrix, the action of the channel $\mathcal{N}^{A'\to B}$ can be expressed as
\begin{align*}
    \mathcal{N}^{A'\to B}(\rho^{A'}) = \Tr_{A'}\left\{N^{A'B} \big((\rho^{A'})^{T} \otimes \mathds{1}^B\big)\right\},
\end{align*}
where the transpose is taken with respect to the basis chosen for the maximally entangled state $\ket{\phi}^{A'\widetilde{A'}}$. An important property of a quantum channel $\mathcal{N}^{A' \to B}$ is its \textit{channel fidelity}. This quantity is defined as the fidelity between an EPR pair $\phi^{A'\widetilde{A}'}$ and the state obtained by sending one half of the EPR pair through the channel. In essence, channel fidelity quantifies how well a channel preserves the entanglement of an EPR pair. This definition assumes that the channel's output space is isomorphic to its input space. 
Using the definitions of the Choi matrix and pure state fidelity, the channel fidelity of $\mathcal{N}^{A' \to B}$ is given by  
$\frac{1}{|A'|} \Tr \big( N^{A'B} \phi^{A'\widetilde{A}'} \big)$.

A quantum broadcast channel $N^{A' \to BC}$ refers to a quantum channel with one input and two outputs that are physically separated. For simplicity, we often think of the users as people, with Alice controlling the input $A'$, while Bob $B$ and Charlie $C$ are associated with the outputs. The Choi matrix of the broadcast channel $N^{A' \to BC}$ is defined as before by applying the channel to half of an EPR pair $\phi^{A'\widetilde{A'}}$. When Alice sends the state $\rho^{A'}$ through the channel, using the Choi representation of the channel, Bob and Charlie receive the following states, respectively:
\begin{align*}
   \Tr_C \big\{\mathcal{N}^{A'\to BC}(\rho^{A'})\big\} &= \Tr_{A'C}\left\{N^{A'BC} \big((\rho^{A'})^T \otimes \mathds{1}^{BC}\big)\right\},\\
    \Tr_B \big\{\mathcal{N}^{A'\to BC}(\rho^{A'})\big\} &= \Tr_{A'B}\left\{N^{A'BC} \big((\rho^{A'})^T \otimes \mathds{1}^{BC}\big)\right\}.
\end{align*}
Moving on, a tripartite channel $\mathcal{Z}^{ABC \to A'B'C'}$ is a three-input three-output channel where Alice controls both the input system $A$ and the output system $A'$, Bob controls both the input system $B$ and the output system $B'$, and Charlie controls both the input system $C$ and the output system $C'$. The Choi matrix of this channel is defined by preparing three EPR pairs of the appropriate sizes and sending half of them through the channel, as needed (note that we additionally multiply by the dimensions of the EPR pairs, so in this paper, the Choi matrix is actually defined with respect to unnormalized EPR pairs):
\begin{align}
\label{tri-Choi}
    Z^{ABCA'B'C'} = \abs{A}\abs{B}\abs{C}\left(\text{id}^{A}\text{id}^{B}\text{id}^{C}\otimes 
\mathcal{Z}^{\widetilde{A}\widetilde{B}\widetilde{C}\to A'B'C'}\right) (\phi^{A\widetilde{A}} \otimes \phi^{B\widetilde{B}} \otimes \phi^{C\widetilde{C}}).
\end{align}

\bigskip
Choi states (or matrices) reveal important properties of quantum channels when analyzed as quantum states through the lens of entanglement theory. A channel $\mathcal{N}^{A' \to B}$ is called binding entanglement if its Choi state is bound entangled. Equivalently, the necessary and sufficient conditions for a channel to be bound entangled are: (i) its two-way LOCC-assisted quantum capacity is zero, and (ii) it is possible to generate bipartite bound entangled states using one or more instances of the channel along with LOCC operations \cite{Horodecki2000-pi}. Another important class of channels, which includes binding entanglement channels, is the class of PPT channels. A channel $\mathcal{N}^{A' \to B}$ is referred to as a PPT channel if its Choi state is a PPT state, meaning $(\widetilde{N}^{A'B})^{T_{A'}} \geq 0$. Equivalently, such point-to-point channels always output PPT states for any bipartite input: 
\begin{align*} 
\left(\text{id}^{R}\otimes \mathcal{N}^{A' \to B} (\rho^{RA'}) \right)^{T_B} \geq 0, \end{align*} 
and therefore preserve the set of PPT states \cite{PhysRevA.60.179}. While some references use the term PPT-preserving channels to describe such maps \cite{Tomamichel2016}, following, e.g., \cite{Rains2001,leung-matthews}, we reserve that term for multipartite scenarios and refer to the point-to-point case simply as PPT channels.

For channels with multiple inputs and outputs, the notion of being PPT (positive partial transpose) is extended by considering the action of partial transposition across any bipartite cut. In the bipartite setting, these ideas were studied in \cite{Rains2001,PhysRevA.60.179}, where the term PPT-preserving channels was introduced. We now describe this concept through the example of a tripartite channel.
Specifically, a tripartite channel $\mathcal{Z}^{ABC \to A'B'C'}$ is said to be PPT-preserving with respect to the $A$-$BC$ partition if it maps any state that is PPT across that cut to another state that is also PPT across the same cut. We denote this property by PPT$_A$.
Analogously, one can define PPT-preserving properties with respect to the $B$-$AC$ and $C$-$AB$ partitions, denoted by PPT$_B$ and PPT$_C$, respectively. As in the bipartite case \cite{Rains2001}, each of these properties can be characterized by the Choi matrix of the channel. In particular, a tripartite channel is PPT-preserving with respect to the $A$-$BC$ cut (i.e., it satisfies PPT$A$) if and only if its Choi matrix $Z^{ABCA'B'C'}$ is PPT across the same cut:
\begin{align*}
\left(Z^{ABCA'B'C'}\right)^{T{AA'}} \geq 0.
\end{align*}
The same condition applies to the $B$-$AC$ and $C$-$AB$ cuts, where the partial transpose is taken over $BB'$ and $CC'$, respectively. The equivalence between this Choi-based condition and the preservation of PPT states under the channel action can be established through an argument analogous to the bipartite case \cite{Rains2001}; see also \cite[Proposition 3.43]{khatri-wilde-book}. For further discussion, including an alternative definition of PPT-preserving operations, see \cite{985948}, particularly the remarks around Footnote 7.

\begin{remark}
    The condition that the Choi state (or matrix) is PPT across a given cut is stronger than requiring that the channel merely preserves the set of PPT states with respect to that cut. Namely, if the Choi matrix of a tripartite channel satisfies $(Z^{ABCA'B'C'})^{T_{AA'}} \geq 0$, one can show that
    \begin{align*}
        \left(\text{id}^{R} \otimes \mathcal{Z}^{ABC \to A'B'C'} (\rho^{RABC})\right)^{T_{A'}} \geq 0,
    \end{align*}
    for any input state $\rho^{RABC}$ with an arbitrary reference system $R$, provided that $(\rho^{RABC})^{T_A} \geq 0$. For this reason, channels whose Choi matrices are PPT with respect to a given cut are referred to in recent works as \textit{completely PPT-preserving} channels \cite{10005080,PhysRevA.103.062422}, while those that only preserve the set of PPT states are called \textit{PPT-preserving} channels; see, e.g., \cite[Sec.~13.9]{gour2024resourcesquantumworld} and \cite[Sec.~3.2]{khatri-wilde-book}. This subtle distinction is discussed in detail in \cite{Chitambar2020-ty}, where completely PPT-preserving channels are referred to as \textit{PPT maps} (see the beginning of Sec.~3.3 therein). Hence, what earlier works such as \cite{Rains2001,leung-matthews} referred to as PPT-preserving channels are now more precisely called completely PPT-preserving. Although this newer terminology is more precise, in this paper we follow the convention of \cite{Rains2001,leung-matthews} and use the term PPT-preserving to refer to what are now often called completely PPT-preserving operations.
\end{remark}

The no-signaling conditions for tripartite operations are more intricate than those for bipartite operations \cite{PhysRevA.64.052309, TEggeling_2002}. Following \cite{xie2017converse}, we define two distinct sets of no-signalling conditions, whose application to the tripartite channel can be derived by extending the definitions from bipartite case \cite{Chiribella_2008}. We use the crossed arrow notation $\not\to$ to indicate that the party(ies) on the left-hand side cannot signal to the party(ies) on the right-hand side. 
The first set comprises three two-to-one no-signalling conditions: (1) Bob and Charlie cannot signal to Alice, (2) Alice and Charlie cannot signal to Bob, and (3) Alice and Bob cannot signal to Charlie. These are denoted as $BC \not\to A$, $AC \not\to B$, and $AB \not\to C$, respectively.
The second set involves three one-to-two no-signalling conditions: (1) Alice cannot signal to Bob and Charlie, (2) Bob cannot signal to Alice and Charlie, and (3) Charlie cannot signal to Alice and Bob. These are represented as $A \not\to BC$, $B \not\to AC$, and $C \not\to AB$, respectively.
Note, importantly, that when two parties are grouped on either side of the no-signalling arrow $\not\to$, they are treated as a single, unified party. For further details including no-signalling conditions between two individual parties, refer to Section \ref{discussion}.

To explain each of the six no-signalling conditions, we analyse one condition from each set, and the analysis applies similarly to the remaining two conditions in each set. These no-signalling conditions may be expressed in equivalent ways \cite{leung-matthews,PhysRevA.74.012305,duan-winter}. We say that a tripartite operation $\mathcal{Z}^{ABC\to A'B'C'}$ is no-signalling from Bob and Charlie to Alice, i.e. $BC\not\to A$, if Alice's output cannot be influenced in any way by Bob and Charlie's inputs. This condition can be formally written as 
\begin{align}
\label{no-signalling-def-1}
    \Tr_{B'C'} \circ \mathcal{Z}^{ABC\to A'B'C'} (\cdot) = \mathcal{Z}_{a}^{A\to A'} \circ \Tr_{BC} (\cdot),
\end{align}   
for some local channel $\mathcal{Z}_{a}^{A\to A'}$ of Alice. That is, the marginal state corresponding to Alice's output is achieved by applying some local operation of Alice to her marginal share of the tripartite input state. To translate this condition into a condition on the Choi matrix of the tripartite channel, Eq. \eqref{tri-Choi}, we apply the above relation using unnormalized EPR pairs as inputs, i.e., $\abs{A}\abs{B}\abs{C}\phi^{A\widetilde{A}}\phi^{B\widetilde{B}}\phi^{C\widetilde{C}}$. Consequently, we obtain:
\begin{align}
\label{no-signalling}
    \Tr_{B'C'} Z^{ABCA'B'C'} 
    =\abs{A}\mathcal{Z}_{a}^{\widetilde{A}\to A'}(\phi^{A\widetilde{A}}) \otimes \mathds{1}^{BC},
\end{align}
where $\abs{A}\mathcal{Z}_{a}^{\widetilde{A}\to A'}(\phi^{A\widetilde{A}})$ represents the Choi matrix of Alice's local operation (defined with respect to an unnormalized EPR pair, due to the presence of $\abs{A}$, as per the convention in this paper). By further tracing out systems $B$ and $C$ from both side, we obtain the following expression for the Choi matrix of Alice's local channel in terms of $Z^{ABCA'B'C'}$:
\begin{align}
\nonumber
    Z^{AA'}_a &= \abs{A}\mathcal{Z}_{a}^{\widetilde{A}\to A'}(\phi^{A\widetilde{A}})\\ 
    \label{choi-matrix-state}
    &= \frac{1}{\abs{B}\abs{C}}\Tr_{BCB'C'} Z^{ABCA'B'C'}.
\end{align}
Substituting this expression into Eq. \eqref{no-signalling}, we derive the no-signalling condition $BC \not\to A$ as a condition on the Choi matrix as follows:
\begin{align}
\label{def-no-sig-bc-a}
    \Tr_{B'C'} Z^{ABCA'B'C'} 
     = \Tr_{BCB'C'} Z^{ABCA'B'C'} \otimes \frac{\mathds{1}^{BC}}{\abs{B}\abs{C}}.
\end{align}
We note that the coefficient $ 1/{\abs{B} \abs{C}} $ in Eq. \eqref{choi-matrix-state} arises from our choice to work with the Choi matrix rather than the (normalized) Choi state. In other words, the operator $ \Tr_{BCB'C'} Z^{ABCA'B'C'} $ is neither a normalized state nor is the Choi matrix $ Z^{ABCA'B'C'} $ itself. If instead we worked with the normalized Choi state $ \widetilde{Z}^{ABCA'B'C'} $, we would obtain $ \widetilde{Z}^{AA'}_a = \widetilde{Z}^{AA'} $. However, this distinction is ultimately irrelevant for the final expression of the no-signalling condition in Eq. \eqref{def-no-sig-bc-a}. That is, one can replace the Choi matrix $Z^{ABCA'B'C'}$ with the Choi state $\widetilde{Z}^{ABCA'B'C'}$ in Eq. \eqref{def-no-sig-bc-a}.

Similarly, we can derive the no-signalling conditions $AB\not\to C$ and $AC\not\to B$ on the Choi matrix of the tripartite channel, respectively, finding:
\begin{align*}
     \Tr_{A'B'} Z^{ABCA'B'C'} 
     = \Tr_{ABA'B'} \{Z^{ABCA'B'C'}\} \otimes \frac{\mathds{1}^{AB}}{\abs{A}\abs{B}}, \\
       \Tr_{A'C'} Z^{ABCA'B'C'} 
     = \Tr_{ACA'C'} \{Z^{ABCA'B'C'}\} \otimes \frac{\mathds{1}^{AC}}{\abs{A}\abs{C}}.
\end{align*}

The argument proceeds similarly for the second set of the no-signalling conditions. We say that a tripartite operation $\mathcal{Z}^{ABC\to A'B'C'}$ is no-signalling from Alice to Bob and Charlie, i.e. $A\not\to BC$, if 
\begin{align}
\label{no-signalling-def-2}
    \Tr_{A'} \circ \mathcal{Z}^{ABC\to A'B'C'} (\cdot) = \mathcal{Z}_{bc}^{BC\to B'C'} \circ \Tr_{A} (\cdot),
\end{align}   
for some joint bipartite channel $\mathcal{Z}_{bc}^{BC\to B'C'}$ of Bob and Charlie. 
That is, the marginal bipartite state corresponding to Bob and Charlie's outputs is achieved by applying some joint operation of Bob and Charlie to their marginal share of the tripartite input state. The Choi matrix of the bipartite channel $\mathcal{Z}_{bc}^{BC\to B'C'}$ is thus derived as $    Z_{bc}^{BC B'C'} = (1/\abs{A}) \Tr_{AA'} Z^{ABC A'B'C'}$. Using this, the no-signalling condition $A\not\to BC$ can be expressed as follows as a condition on the Choi matrix:
\begin{align*}
    \Tr_{A'} Z^{ABCA'B'C'} 
     = \Tr_{AA'} \{Z^{ABCA'B'C'}\} \otimes \frac{\mathds{1}^{A}}{\abs{A}}.
\end{align*}
Similarly for $B\not\to AC$ and $C\not\to AB$, respectively, we find:
\begin{align*}
       \Tr_{B'} Z^{ABCA'B'C'} 
    & = \Tr_{BB'} \{Z^{ABCA'B'C'}\} \otimes \frac{\mathds{1}^{B}}{\abs{B}},\\
     \Tr_{C'} Z^{ABCA'B'C'} 
    & = \Tr_{CC'} \{Z^{ABCA'B'C'}\} \otimes \frac{\mathds{1}^{C}}{\abs{C}}.
\end{align*}
We note that the final equations for the no-signalling conditions, derived as constraints on the Choi matrix, are independent of the specific definition of the Choi matrix; they remain valid even if we work with the Choi state defined with respect to (normalized) EPR pairs. In other words, in the final expressions, one can replace any Choi matrix $Z^{ABCA'B'C'}$ with the corresponding Choi state $\widetilde{Z}^{ABCA'B'C'}$. However, there were certain subtleties in deriving the final conditions, which depend on whether we use the Choi matrix or Choi state, and these have already been elaborated upon.
 Moreover, while previous work \cite{xie2017converse} may suggest that all six constraints are potentially required to establish full no-signaling between the three parties, we find that only the second set is necessary, as the conditions of the first set follow easily. One way to demonstrate this might be by generalizing the classical no-signaling condition \cite[Lemma 2.7]{hanggi} to the quantum setting. However, we provide a simpler argument instead.
\begin{lemma}
\label{no-signalling-enough}
  The no-signaling conditions ${A\not\to BC, B\not\to AC, C\not\to AB}$ establish full no-signalling between the three parties; in particular, they imply the conditions ${BC\not\to A, AB\not\to C, AC\not\to B}$
\end{lemma}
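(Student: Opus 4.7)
The plan is to derive each two-to-one no-signalling condition by combining two of the given one-to-two conditions via further partial traces of both sides. The key observation is that each one-to-two condition, expressed on the Choi matrix, already factorises a partial trace of $Z^{ABCA'B'C'}$ into a further reduced Choi matrix tensored with a maximally mixed state on the ``sending'' subsystem. Stacking two such factorisations should produce the factorisation needed for a two-to-one condition. Since all three two-to-one conditions are related by permutations of $\{A,B,C\}$, it suffices to prove one of them in full and then invoke symmetry.

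Concretely, to obtain $BC \not\to A$ I plan to combine the assumptions $B \not\to AC$ and $C \not\to AB$. Starting from
\begin{align*}
    \Tr_{B'} Z^{ABCA'B'C'}
    = \Tr_{BB'} Z^{ABCA'B'C'} \otimes \frac{\mathds{1}^{B}}{\abs{B}},
\end{align*}
I apply $\Tr_{C'}$ to both sides to obtain
\begin{align*}
    \Tr_{B'C'} Z^{ABCA'B'C'}
    = \Tr_{BB'C'} Z^{ABCA'B'C'} \otimes \frac{\mathds{1}^{B}}{\abs{B}}.
\end{align*}
Next, applying $\Tr_{BB'}$ to both sides of the $C \not\to AB$ condition yields
\begin{align*}
    \Tr_{BB'C'} Z^{ABCA'B'C'}
    = \Tr_{BCB'C'} Z^{ABCA'B'C'} \otimes \frac{\mathds{1}^{C}}{\abs{C}}.
\end{align*}
Substituting this into the previous display produces the desired Choi-matrix form of $BC \not\to A$, with the correct $\mathds{1}^{BC}/(\abs{B}\abs{C})$ tensor factor.

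The remaining two conditions will follow by permuting the roles of the three parties: $AB \not\to C$ by combining $A \not\to BC$ with $B \not\to AC$, and $AC \not\to B$ by combining $A \not\to BC$ with $C \not\to AB$. I do not anticipate any real obstacle; the only thing to be careful about is bookkeeping of which partial traces commute with which tensor factors, but since the identity factor arising from the first step always lives on a subsystem disjoint from the one being traced next, the two factorisations stack cleanly. A more abstract alternative would be to work directly with the channel-level definitions \eqref{no-signalling-def-1}--\eqref{no-signalling-def-2} and compose the induced reduced channels, but the Choi-matrix calculation above is the most economical presentation.
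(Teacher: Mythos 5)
Your proposal is correct and follows essentially the same route as the paper: combine $B\not\to AC$ and $C\not\to AB$ by taking an extra partial trace and substituting to derive $BC\not\to A$, then invoke symmetry for the remaining two implications. The paper organizes the computation as a single chain of equalities starting from $\Tr_{C'}\Tr_{B'} Z$, whereas you split it into two displayed steps followed by a substitution, but the underlying manipulation is identical.
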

\begin{proof}
    We establish this by showing that any two conditions in the set $\{A\not\to BC, B\not\to AC, C\not\to AB\}$ imply one of the conditions in the other set. For example, if $B\not\to AC$ and $C\not\to AB$, then $BC\not\to A$.
    We begin by tracing out $C'$ from both sides of the condition corresponding to $B\not\to AC$:
    \begin{align*}
       \Tr_{C'} \Big(\Tr_{B'}\{Z^{ABCA'B'C'}\}\Big) &= \Tr_{C'} \Big(\frac{\mathds{1}^{B}}{\abs{B}} \Tr_{BB'}\{Z^{ABCA'B'C'}\}\Big)\\
       & =  \frac{\mathds{1}^{B}}{\abs{B}} \Tr_{BB'}\left\{\Tr_{C'} (Z^{ABCA'B'C'})\right\}\\
       & = \frac{\mathds{1}^{B}}{\abs{B}} \Tr_{BB'}\left\{ \frac{\mathds{1}^{C}}{\abs{C}} \Tr_{CC'}\{Z^{ABCA'B'C'}\} \right\} \\
       & = \frac{\mathds{1}^{B}}{\abs{B}} \frac{\mathds{1}^{C}}{\abs{C}} \Tr_{BCB'C'}\left\{ Z^{ABCA'B'C'} \right\},
    \end{align*}
where in the third line, we use $C\not\to AB$. Similarly, one can show that $B\not\to AC$ and $A\not\to BC$ imply $AB\not\to C$, and that $A\not\to BC$ and $C\not\to AB$ imply $AC\not\to B$.
\end{proof}

We next summarize some key concepts about representations of finite groups. Further details can be found in \cite{fulton1991representation,hayashi2017group,hayashi2017groupQI}, while a more accessible review is available in, e.g., \cite{Onorati2019}.
Let $G$ be a finite group and $V$ a finite-dimensional complex vector space. The group of invertible linear transformations on $V$ is denoted by $\text{GL}(V)$. A representation $\rho$ of $G$ on $V$ is a map  
\begin{align*}
    \rho: G \to \text{GL}(V), \quad g \mapsto \rho(g),
\end{align*} 
satisfying the property
\begin{align*}
  \rho(g)\rho(h) = \rho(gh), \quad \forall g, h \in G.  
\end{align*} 
We typically assume that $\rho(g)$ is unitary for all $g \in G$, and denote it by $U_g$.  
If there exists a non-trivial subspace $W \subseteq V$ such that
\begin{align*}
   \rho(g)w \in W, \quad \forall w \in W, \, \forall g \in G, 
\end{align*}  
then the representation $\rho$ is called reducible. The restriction of $\rho$ to $W$ is referred to as a subrepresentation. A representation is termed irreducible if no such non-trivial subspace $W$ exists.  

   Two representations $\rho$ and $\rho'$ of $G$ on vector spaces $V$ and $V'$ are equivalent if there exists an invertible linear map $T: V \to V'$ such that  
   \begin{align*}
     T \circ \rho(g) = \rho'(g) \circ T, \quad \forall g \in G.  
   \end{align*}
This equivalence is denoted as $\rho \cong \rho'$.  
For any linear map $X: V \to V$, we can define the twirl of $X$ with respect to $\rho$ as
\begin{align*}
   \widebar{X} = \frac{1}{|G|} \sum_{g \in G} U_g X U_g^\dagger. 
\end{align*}
When the representation $\rho$ decomposes into irreducible subrepresentations without multiplicities, i.e., 
\begin{align*}
    \rho(g) \cong \bigoplus_i \rho_i(g), \quad \forall g \in G,
\end{align*}  
where $\rho_i$ are irreducible and inequivalent, the twirl simplifies significantly. For any $X: V \to V$, we have
\begin{align}
\label{projection}
   \widebar{X} = \sum_i \frac{\text{Tr}(X \Pi_i)}{\text{Tr}(\Pi_i)} \Pi_i, 
\end{align} 
where $\Pi_i$ is the projector onto the subspace corresponding to the irreducible subrepresentation $\rho_i$.

\bigskip

\section{Classes of quantum codes for broadcast channels}
\label{class-of-codes}
While the tripartite operation $\mathcal{Z}^{ABC\to A'B'C'}$ between three space-like separated parties is generally a physically realizable CPTP map, a particular quantum communication setup may introduce additional time-like separation between the parties. Time-like separation refers to the causal ordering between events in spacetime, meaning one event can directly influence another because it occurs earlier in time within the causal light cone. This time-like separation may impose further constraints for the operation to qualify as a causally and physically realizable quantum code. Depending on the nature of the time-like separation and the corresponding constraint, the operation $\mathcal{Z}^{ABC\to A'B'C'}$ may serve as a code for either a quantum broadcast channel or a quantum multiple access channel. 
Our aim is to find the most generally realizable tripartite operation as a quantum code for transmission of quantum information over a two-receiver quantum broadcast channel. 
We will now explain the time-like separation induced by a quantum broadcast channel and how this condition naturally arises as a no-signalling condition for the tripartite operation.
We are interested in a code which helps transmit information over a broadcast channel $\mathcal{N}^{A'\to BC}$, whose input is controlled by Alice and receives are Bob and Charlie. 
The channel connects output $A'$ to the inputs $B$ and $C$ of the tripartite operation $\mathcal{Z}^{ABC\to A'B'C'}$; see Fig. \ref{fig:forward-assissted}. 
Therefore, the quantum broadcast channel $\mathcal{N}^{A'\to BC}$ represents a time-like separation because Bob’s and Charlie's inputs (or operations) depend on what they receive from channel whose input belongs to Alice. In other words, Alice's action causally influences Bob and Charlie.
Although the three players are spatially distant, the sequential nature imposes time-like separations as well. This separation means that information flows in a specific causal order, Alice to Bob and Charlie, and this influences how the code is modelled and constrained.
 Specifically, the actions of Bob and Charlie have to take place in the future of Alice's action.

\begin{figure}
\centering
\includegraphics[width=0.9\textwidth]{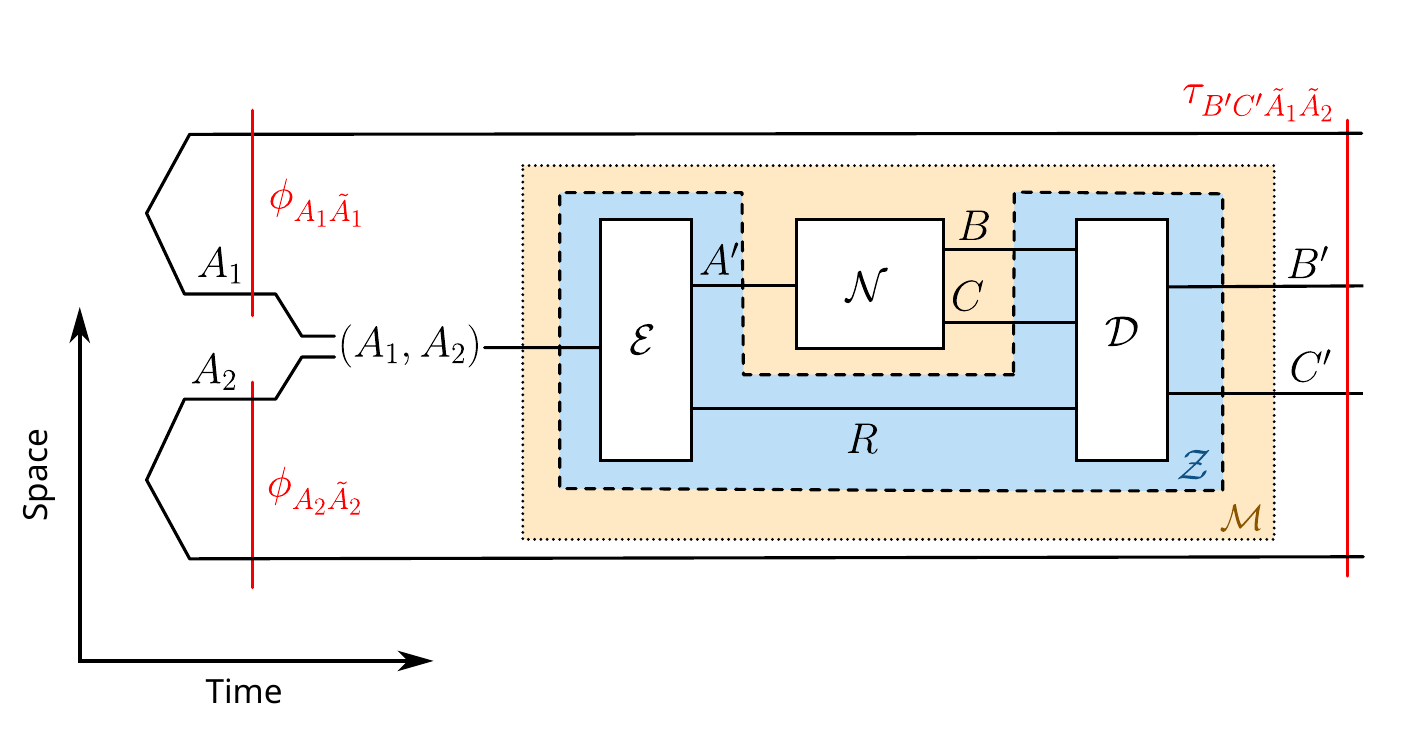}
\caption{Alice controls the inputs $A_1A_2$, such that she sends halves of her EPR pairs $\phi^{A_1\widetilde{A}_1}\phi^{A_2\widetilde{A}_2}$ through the encoder $\mathcal{E}$, and then over the channel $\mathcal{N}^{A'\to BC}$. The dashed (blue) box represents the most general forward-assisted quantum code, which can be interpreted as a superchannel converting the channel $\mathcal{N}^{A'\to BC}$ into the channel $\mathcal{M}^{A_1A_2\to B'C'}$, the latter being depicted by the dotted (orange) line box.}
\label{fig:forward-assissted}
\end{figure}

 Obviously, communication from Alice to Bob and Charlie cannot help Bob and Charlie to send a message (either classical or quantum) to Alice, so the operation is no-signalling from Bob and Charlie to Alice, i.e. $BC\not\to A$. Operations with this property are also referred to as semicausal in the literature \cite{PhysRevA.64.052309,TEggeling_2002}. In summary, the condition $BC\not\to A$ is required for the tripartite quantum channel $\mathcal{Z}^{ABC\to A'B'C'}$ to be a physically realizable quantum code. This in turn raises the question of what the most general classes of quantum codes for the quantum broadcast channel are, arising from the CPTP map $\mathcal{Z}^{ABC\to A'B'C'}$, where the only constraint is that Bob and Charlie cannot send messages to Alice, i.e. the no-signalling condition $BC\not\to A$. In the following, we show that this class corresponds to the class of semilocalizable operations \cite{TEggeling_2002}. We develop a proof along the same lines for the case of bipartite operations \cite[Sec. IV]{PhysRevA.74.012305}. (See also \cite[Theorem 7.1]{gour2024resourcesquantumworld}.) (This was originally proved with different techniques in \cite{TEggeling_2002}.):
\begin{lemma}
\label{forward-assisted-decompose}
    The class of quantum codes for the quantum broadcast channel $\mathcal{N}^{A'\to BC}$ satisfying the no-signaling condition $BC\not\to A$ is equivalent to a composition of local operations by Alice, a global operation by Bob and Charlie, and one-way quantum communication from Alice to Bob and Charlie. See Fig. \ref{fig:forward-assissted}
\end{lemma}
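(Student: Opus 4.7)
The plan is to follow the standard structure theorem for semicausal channels, adapted from the bipartite argument in \cite[Sec.~IV]{PhysRevA.74.012305}, and to leverage the essential uniqueness of Stinespring dilations. First, I would fix an isometric dilation $V : ABC \to A'B'C'E$ of $\mathcal{Z}$, so that $\mathcal{Z}(\rho^{ABC}) = \Tr_E(V\rho^{ABC}V^\dagger)$. The hypothesis $BC\not\to A$, in the form of Eq.~\eqref{no-signalling-def-1}, supplies a local channel $\mathcal{Z}_a^{A\to A'}$ with $\Tr_{B'C'}\circ\mathcal{Z} = \mathcal{Z}_a\circ\Tr_{BC}$. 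Let $V_a:A\to A'F$ be any Stinespring isometry of $\mathcal{Z}_a$.

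The key step is to exhibit two Stinespring dilations of the same marginal channel $\Phi := \Tr_{B'C'}\circ\mathcal{Z}:ABC\to A'$. One is $V$ itself, with environment $B'C'E$; the other is $U := V_a\otimes \mathds{1}^{BC}$, regarded as an isometry $ABC\to A'BCF$ with environment $BCF$, since tracing out $BCF$ manifestly returns $\mathcal{Z}_a\circ\Tr_{BC} = \Phi$. After, if necessary, padding the environment of $V$ with a fresh ancilla in a fixed pure state so that $|B'C'E|\ge |BCF|$, Stinespring's essential uniqueness produces an isometry $W : BCF\to B'C'E$ with
\begin{align*}
V = (\mathds{1}^{A'}\otimes W)\,(V_a\otimes \mathds{1}^{BC}).
\end{align*}
This factorization is the heart of the argument; everything after it is substitution.

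Plugging back into the definition of $\mathcal{Z}$ and tracing out $E$ yields
\begin{align*}
\mathcal{Z}(\rho^{ABC}) = \Tr_{E}\!\left[(\mathds{1}^{A'}\!\otimes W)(V_a\otimes \mathds{1}^{BC})\,\rho^{ABC}\,(V_a^\dagger\otimes \mathds{1}^{BC})(\mathds{1}^{A'}\!\otimes W^\dagger)\right].
\end{align*}
Setting $M := F$, I would identify Alice's isometric encoder $\mathcal{E}^{A\to A'M}(\sigma) := V_a\sigma V_a^\dagger$, which produces her local output $A'$ together with a quantum message $M$ forwarded noiselessly to the two receivers, and the joint CPTP map $\mathcal{D}^{BCM\to B'C'}(\tau) := \Tr_E(W\tau W^\dagger)$, which Bob and Charlie apply jointly to their inputs together with the received message. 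This gives
\begin{align*}
\mathcal{Z}^{ABC\to A'B'C'} = \mathcal{D}^{BCM\to B'C'}\circ\bigl(\mathcal{E}^{A\to A'M}\otimes \text{id}^{BC}\bigr),
\end{align*}
which is precisely the semilocalizable form depicted in Fig.~\ref{fig:forward-assissted}. The converse direction is immediate: any channel of this form automatically satisfies $BC\not\to A$, since tracing out $B'C'$ after $\mathcal{D}$ removes all dependence on the inputs $BC$ and leaves only the marginal $\Tr_M\circ\mathcal{E}$ acting on $\rho^A$.

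The main obstacle I anticipate is purely technical: matching the dimensions of the two Stinespring environments so that the uniqueness theorem can be invoked in the direction I want (an isometry from $BCF$ into $B'C'E$ rather than the other way round). This is dispatched by the standard padding trick outlined above. The only genuine conceptual move is the realization that the no-signalling constraint is exactly the statement that $V$ factors through a Stinespring dilation of Alice's marginal channel, and that realization is what Stinespring uniqueness certifies.
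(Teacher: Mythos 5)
Your proof is correct and is, in essence, the same argument the paper gives, transported from the Choi picture to the channel picture: where the paper takes a purification $\ket{\Psi}$ of the Choi matrix $Z^{ABCA'B'C'}$, builds a second purification of the marginal $Z^{ABCA'}=Z^{AA'}\otimes\mathds{1}^{BC}/(|B||C|)$ from a purification $\ket{\theta}^{AA'R}$ of $Z^{AA'}$ tensored with unnormalized EPR pairs on $B\widetilde{B}$ and $C\widetilde{C}$, and then invokes essential uniqueness of purifications to produce the decoding isometry, you take a Stinespring dilation $V$ of $\mathcal{Z}$, exhibit $V_a\otimes\mathds{1}^{BC}$ as a second dilation of the marginal $\Tr_{B'C'}\circ\mathcal{Z}=\mathcal{Z}_a\circ\Tr_{BC}$, and invoke Stinespring's essential uniqueness to produce $W$. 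These are the same step under the Choi--Jamio\l{}kowski dictionary (a purification of the Choi matrix \emph{is} a Stinespring isometry), and both track the bipartite template of \cite{PhysRevA.74.012305}, so the proposal matches the paper's proof in all essentials.
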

\begin{proof}
    Recall from Eq. \eqref{def-no-sig-bc-a} that the no-signalling condition $BC \not\to A$ for the Choi matrix $Z^{ABC \to A'B'C'}$ is expressed as follows:
\begin{align*}
 Z^{ABCA'} 
     = Z^{AA'} \otimes \frac{\mathds{1}^{BC}}{\abs{B}\abs{C}}.
\end{align*}
Let $\ket{\Psi}^{ABCA'B'C'P}$ be an unnormalized pure state, serving as purification of the Choi matrix $Z^{ABCA'B'C'}$. Obviously it constitutes a purification of the operator $Z^{ABCA'}$ as well. Since $Z^{ABCA'} 
     = Z^{AA'} \otimes \frac{\mathds{1}^{BC}}{\abs{B}\abs{C}}$, another purification of $Z^{ABCA'}$ can be written as:
     \begin{align*}
         \ket{\zeta}^{RABCA'} = \ket{\theta}^{AA'R} \otimes  \ket{\widebar{\phi}}^{B\widetilde{B}}\otimes \ket{\widebar{\phi}}^{C\widetilde{C}},
     \end{align*}
where $\ket{\theta}^{AA'R}$ is a purification of $Z^{AA'}/\abs{B}\abs{C}$, and $\ket{\widebar{\phi}}^{B\widetilde{B}} \otimes \ket{\widebar{\phi}}^{C\widetilde{C}}$ is a purification of the identity operator $\mathds{1}^{BC}$ (recall that we use $\widebar{\phi}$ to denote an unnormalized EPR pair). Since $Z^{ABC} = \mathds{1}^{ABC}$ (due to trace-preserving condition), we obtain $Z^{A} = \abs{B}\abs{C}\mathds{1}^{A}$. On the other hand, $\theta^{A} = Z^{A}/\abs{B}\abs{C}$ and therefore, $\theta^{A} = \mathds{1}^{A}$. Since unnormalized maximally entangled state $\ket{\widebar{\phi}}^{A\widetilde{A}}$ is another purification of $\theta^{A}$, this implies that there exists an (encoding) isometry $\mathcal{E}^{\widetilde{A}\to A'R}: \widetilde{A}\to A'\otimes R$ such that
\begin{align}
\label{encoding-isometry}
  \theta^{RAA'} =  \mathcal{E}^{\widetilde{A}\to A'R} \left(  \ketbra{\widebar{\phi}}^{A\widetilde{A}}\right).
\end{align}
As noted earlier, the unnormalized state $\ket{\Psi}^{ABCA'B'C'P}$ serves as a purification of both the Choi matrix $Z^{ABCA'B'C'}$ and $Z^{ABCA'}$. Since purifications are unique up to isometries on the purifying systems, there exists a (decoding) isometry $\mathcal{V}^{ R\widetilde{B}\widetilde{C} \to B'C'P }$ that relates the two purifications of $Z^{ABCA'}$ as follows:
\begin{align*}
   \ket{\Psi}^{ABCA'B'C'P} = \mathcal{V}^{ R\widetilde{B}\widetilde{C} \to B'C'P } \ket{\theta}^{AA'R} \otimes  \ket{\widebar{\phi}}^{B\widetilde{B}}\otimes \ket{\widebar{\phi}}^{C\widetilde{C}}.
\end{align*} 
Let $\mathcal{D}^{R\widetilde{B}\widetilde{C}\to  B'C'} (\cdot) = \Tr_{P} \circ \mathcal{V}^{ R\widetilde{B}\widetilde{C} \to B'C'P } (\cdot) $. We have:
\begin{align*}
    Z^{ABCA'B'C'} &= \mathcal{D}^{R\widetilde{B}\widetilde{C}\to  B'C'} \left( \theta^{AA'R} \otimes  \widebar{\phi}^{B\widetilde{B}} \otimes \widebar{\phi}^{C\widetilde{C}} \right) \\
    &= \mathcal{D}^{R\widetilde{B}\widetilde{C}\to  B'C'} \left(  \mathcal{E}^{\widetilde{A}\to A'R}  \big(\widebar{\phi}^{A\widetilde{A}}\big) \otimes \widebar{\phi}^{B\widetilde{B}} \otimes \widebar{\phi}^{C\widetilde{C}}\right)\\
    &= \mathcal{D}^{R\widetilde{B}\widetilde{C}\to  B'C'} \circ  \mathcal{E}^{\widetilde{A}\to A'R}  \left(\widebar{\phi}^{A\widetilde{A}} \otimes \widebar{\phi}^{B\widetilde{B}} \otimes \widebar{\phi}^{C\widetilde{C}}\right),
\end{align*}
where the first line follows from $\Tr_{P}\Psi^{ABCA'B'C'P} = Z^{ABCA'B'C'}$ and in the second line we use Eq. \eqref{encoding-isometry}.
Note that Alice's action $\mathcal{E}^{\widetilde{A}\to A'R}$ is an isometry and the joint channel of Bob and Charlie $\mathcal{D}^{R\widetilde{B}\widetilde{C}\to  B'C'}$ is a CPTP map.
\end{proof}

 This class of codes is referred to as \textit{forward-assisted codes} for point-to-point channels in \cite{leung-matthews}. However, our model simplifies the framework of forward-assisted codes presented in \cite{leung-matthews} (see Fig.~1 therein). Specifically, we observe that the operation $\mathcal{F}^{Q \to R}$ introduced in \cite[Fig. 1]{leung-matthews} is unnecessary, as it can be fully absorbed into the encoding isometry $\mathcal{E}$. Thus, the two classes of codes are essentially equivalent.
 We shall refer to these codes as forward-assisted codes for quantum broadcast channels, aligning with the terminology previously used in \cite{leung-matthews}. In the following, we formally define the class of forward-assisted codes for quantum broadcast channels:
\begin{definition}
    A forward-assisted code for the quantum broadcast channel $\mathcal{N}^{A' \to BC}$, as illustrated in Fig.~\ref{fig:forward-assissted}, consists of Alice's encoding isometry $\mathcal{E}^{A \to A'R}$ and Bob and Charlie's joint decoding channel $\mathcal{D}^{RBC \to B'C'}$. The states to be transmitted by Alice live in the Hilbert spaces $A_1$ and $A_2$, where $\abs{A_1} = \abs{B'} = r_1$ and $\abs{A_2} = \abs{C'} = r_2$. Alice transmits the system $A'$ through the quantum broadcast channel $\mathcal{N}^{A' \to BC}$, while the system $R$ is sent through an identity channel. This corresponds to the tripartite operation $\mathcal{Z}^{ABC\to A'B'C'} = \mathcal{D}^{RBC \to B'C'} \circ \mathcal{E}^{A \to A'R}$ outlined with dashes in Fig. \ref{fig:forward-assissted}. We refer to the pair $(r_1, r_2)$ as the size of the code.
\end{definition}

It is important to note that the class of forward-assisted codes does not necessarily represent the most general class of codes; most importantly, forward-assisted codes do not allow for feedback and adaptive encoding between channel uses.

The encoding isometry $\mathcal{E}^{A \to A'R}$, where $A = A_1 \otimes A_2$, acting on halves of the EPR pairs $\phi^{A_1\widetilde{A}_1}\phi^{A_2\widetilde{A}_2}$ results in an \textit{average channel input state} defined as follows:
\begin{align} 
\label{avg-input}
\rho^{A'} = \Tr_{\widetilde{A}_1\widetilde{A}_2 R} \mathcal{E}^{A \to A'R} \phi^{A_1\widetilde{A}_1}\phi^{A_2\widetilde{A}_2}. 
\end{align} 
This state is particularly useful as it serves as a variable in our SDPs. The Choi matrix of the tripartite operation $\mathcal{Z}^{ABC\to A'B'C'} = \mathcal{D}^{RBC \to B'C'} \circ \mathcal{E}^{A \to A'R}$, i.e. the code, can be written as follows in terms of the Choi matrices of the encoding isometry $E^{A A'R}$ and the decodiong channel $D^{RBCB'C'}$:
\begin{align*}
    Z^{ABC A'B'C'} = \Tr_{R} D^{RBCB'C'} (E^{A A'R})^{T_R}.
\end{align*}

The process of encoding and decoding results in an overall CPTP map $\mathcal{M}^{A_1A_2 \to B'C'} = \mathcal{D}^{RBC \to B'C'} \circ \mathcal{N}^{A' \to BC} \circ \mathcal{E}^{A \to A'R}$, where $\abs{A_1} = \abs{B'}=r_1$ and $\abs{A_2} = \abs{C'}= r_2$. To put it another way, the forward-assisted code $\mathcal{Z}^{ABC\to A'B'C'}$ constitutes a linear map from quantum broadcast channels $\mathcal{N}^{A' \to BC}$ to quantum CPTP maps $\mathcal{M}^{A_1A_2 \to B'C'}$ \cite{Chiribella_2008}. The Choi matrix of the induced channel $\mathcal{M}^{A_1A_2 \to B'C'}$ can be expressed as follows in terms of the Choi matrices of its constituent channels:
\begin{align*}
    M^{AB'C'} & = \Tr_{A'RBC} D^{RBCB'C'} (N^{A'BC})^{T_{BC}} (E^{A A'R})^{T_{A'R}} \\
    & = \Tr_{A'RBC} D^{RBCB'C'} (N^{A'BC})^{T_{A'BC}} (E^{A A'R})^{T_{R}}\\
    & = \Tr_{A'BC} Z^{ABCA'B'C'} (N^{A'BC})^{T} 
\end{align*}
Note that the average channel input defined in Eq. \eqref{avg-input} can be expressed using the Choi matrix $Z^{ABCA'B'C'}$ as follows:
\begin{align}
\label{avg-input-2}
   \rho^{A'} = \Tr_{ABCB'C'} Z^{ABCA'B'C'} \frac{\mathds{1}^{A}}{\abs{A}}\frac{\mathds{1}^{B}}{\abs{B}}\frac{\mathds{1}^{C}}{\abs{C}}.
\end{align}
This emphasizes that the average input state depends only on the encoding isometry $\mathcal{E}^{A\to A'R}$, i.e., it does not depend on any decoding map or the particular quantum broadcast channel used for the transmission of information.

For the resulting quantum channel $\mathcal{M}^{A_1A_2\to B'C'}$ with $\abs{A_1}=\abs{B'}$, $\abs{A_2}=\abs{C'}$, its channel fidelity is defined as follows:
\begin{align*}
    \Tr \phi^{\widetilde{A}_1B'}\phi^{\widetilde{A}_2C'} \mathcal{M}^{A_1A_2\to B'C'}(\phi^{A_1\widetilde{A}_1}\phi^{A_2\widetilde{A}_2}).
\end{align*}
(Note that here and elsewhere, we freely switch between systems with and without tildes for ease of notation.)
When Alice's inputs are halves of the EPR pairs $\phi^{A_1\widetilde{A}_1}$ and $\phi^{A_2\widetilde{A}_2}$, the overall effect of the encoded transmission yields a state $\tau^{\widetilde{A}_1\widetilde{A}_2B'C'}$, as shown in Fig.~\ref{fig:forward-assissted}. The channel fidelity of $\mathcal{M}^{\widetilde{A}_1\widetilde{A}_2\to B'C'}$ is the entanglement fidelity of the state $\tau^{\widetilde{A}_1\widetilde{A}_2B'C'}$, i.e.
\begin{align*}
    \Tr \phi^{\widetilde{A}_1B'}\phi^{A_2\widetilde{A}_2C'}\tau^{\widetilde{A}_1\widetilde{A}_2B'C'},
\end{align*}
and this is called the channel fidelity of the forward-assisted code. In terms of the Choi matrices, the channel fidelity of the code is expressed as follows: 
\begin{align}
\label{channel-fidelity-express}
    \frac{1}{r_1r_2} \Tr \phi^{A_1B'}\phi^{A_2C'} M^{A_1A_2B'C'}
    = \frac{1}{r_1r_2} \Tr \phi^{A_1B'}\phi^{A_2C'}Z^{ABCA'B'C'} (N^{A'BC})^{T}.
\end{align}

\medskip
 The class of forward-assisted codes, being the most general class of codes for a single use of a quantum broadcast channel, is too powerful to be practically interesting. First, Alice can use the register $R$ to transmit information to Bob and Charlie without even using the quantum broadcast channel. Moreover, the decoding channel $\mathcal{D}^{RBC \to B'C'}$ is a joint operation under the control of both Bob and Charlie, even though these parties are physically separated.
In contrast, the most restricted class of codes, known as \textbf{unassisted (UA) codes}, allows neither forward assistance nor cooperation between the receivers. Here, the operation $\mathcal{Z}^{ABC \to A'B'C'}$ must take the product form 
\begin{align*}
    \mathcal{Z}^{ABC \to A'B'C'} =  \mathcal{D}^{C \to C'}_c \mathcal{D}^{B \to B'}_b \mathcal{E}^{A \to A'},
\end{align*}
where the operations $\mathcal{E}^{A \to A'}$, $\mathcal{D}^{B \to B'}_b$, and $\mathcal{D}^{C \to C'}_c$ are performed locally by Alice, Bob, and Charlie, respectively, and are otherwise arbitrary.
Between these extremes, we aim to identify genuinely interesting classes of quantum codes. Specifically, we focus on classes that are both practically relevant and lend themselves to meaningful optimization. As mentioned earlier, forward-assisted codes are constrained by the no-signalling condition from Bob and Charlie to Alice, i.e., $BC \not\to A$.
We propose two new classes of codes that satisfy these criteria: NS codes and PPT-preserving codes. NS codes form a subclass of forward-assisted codes, where no party can signal to any other party. According to Lemma~\ref{no-signalling-enough}, this means the conditions $A \not\to BC$, $B \not\to AC$, and $C \not\to AB$ must hold. While these conditions imply $BC \not\to A$, we emphasize the former conditions for reasons that will become clear later.
A particularly promising class for tractable optimization is the subclass of PPT-preserving codes. In these codes, the tripartite operation $\mathcal{Z}^{ABC \to A'B'C'}$ preserves the positive partial transpose (PPT) property across all three bipartitions. Additionally, we will study the codes which are both NS and PPT-preserving.
We might occasionally be interested in the class of \textbf{entanglement-assisted (EA) codes}. Here, the parties are provided unlimited shared entanglement. Since entanglement is a no-signalling resource, the EA codes are a subclass of NS codes.

We are now in a position to formally define, for any class of codes $\Omega$, where $\Omega \in \{\text{UN},\text{EA}, \text{PPT}, \text{NS}, \text{PPT} \cap \text{NS}\}$, the maximal channel fidelities as well as one-shot and asymptotic quantum capacities of the quantum broadcast channel $\mathcal{N}^{A' \to BC}$.
\begin{definition}
    Let $\mathcal{M}^{A_1A_2\to B'C'}_{\Omega}$ be the channel resulting from the composition of the tripartite operation $\mathcal{Z}^{A_1A_2BC\to A'B'C'}$ corresponding to the code $\Omega$ and the quantum broadcast channel $\mathcal{N}^{A'\to BC}$. The maximum channel fidelity of the broadcast channel $\mathcal{N}^{A'\to BC}$ assisted by the code $\Omega$ is defined as 
    \begin{align*}
        F^{\Omega}(\mathcal{N}, r_1,r_2)\coloneqq
        \sup_{\Omega}\Tr{\phi^{\widetilde{A}_1B'} \phi^{\widetilde{A}_2C'}\mathcal{M}^{A_1A_2\to B'C'}_{\Omega}(\phi^{\widetilde{A}_1A_1} \phi^{\widetilde{A}_2A_2})},
    \end{align*}
where $r_1=\abs{A_1}=|\widetilde{A}_1|=\abs{B'}$ and $r_2=\abs{A_2}=|\widetilde{A}_2|=\abs{C'}$. The pair $(r_1,r_2)$ is called the size of the code $\Omega$ and the supremum is taken over the codes $\Omega \in \{\text{UN}, \text{EA}, \text{PPT}, \text{NS}, \text{PPT}\cap \text{NS}\}$.
\end{definition}

 The following inequalities illustrate the relationships between the maximal channel fidelities achieved by the various classes of codes discussed above:
\begin{align}
    F^{\text{UA}}(\mathcal{N}, r_1,r_2) &\leq F^{\text{EA}}(\mathcal{N}, r_1,r_2) \leq F^{\text{NS}}(\mathcal{N}, r_1,r_2), \\
    F^{\text{UA}}(\mathcal{N}, r_1,r_2) &\leq F^{\text{NS}\cap \text{PPT}}(\mathcal{N}, r_1,r_2) \leq  F^{\text{PPT}}(\mathcal{N}, r_1,r_2),\\
    F^{\text{UA}}(\mathcal{N}, r_1,r_2) &\leq F^{\text{NS}\cap \text{PPT}}(\mathcal{N}, r_1,r_2) \leq  F^{\text{NS}}(\mathcal{N}, r_1,r_2).
\end{align}

We next define the one-shot and asymptotic capacities. For the asymptotic capacity, we assume that the channel is memoryless, with its operation for $n$ uses given by $\mathcal{N}^{\otimes n}$.

\begin{definition}
\label{one-shot-def}
 Let $\varepsilon\geq 0$. For a quantum broadcast channel $\mathcal{N}^{A'\to BC}$, a rate pair $(R_1\coloneqq\log r_1,R_2\coloneqq \log r_2)$ is called $\varepsilon$-one-shot achievable if $F^{\Omega}(\mathcal{N},r_1,r_2)\geq 1-\varepsilon$. The $\varepsilon$-error quantum capacity region is defined as the closure of the set of all $\varepsilon$-one-shot achievable rate pairs $(R_1,R_2)$. The corresponding asymptotic capacity region is defined as $\{(R_1,R_2): \lim_{n\to\infty} F^{\Omega}(\mathcal{N}^{\otimes n},2^{nR_1},2^{nR_2})=1\}$.
 Sometimes we are interested in the $\varepsilon$-one-shot sum-capacity defined as 
 \begin{align*}
     Q^{\Omega}_{1}(\mathcal{N},\varepsilon)=\max \{R_1+R_2: (R_1,R_2) \, \text{is $\varepsilon$-one-shot achievable} \}.
 \end{align*}
The corresponding asymptotic sum-capacity is defined as
 \begin{align}
 \label{asym-sum-capacity}
     Q^{\Omega}(\mathcal{N})=\lim_{\varepsilon\to 0}\lim_{n\to\infty}\frac{1}{n}Q^{\Omega}_{1}(\mathcal{N}^{\otimes n},\varepsilon).
 \end{align}
\end{definition}

As discussed earlier, our motivation for considering NS and/or PPT-preserving codes arises from their compatibility with efficient optimization techniques, enabling the search for good codes. In particular, the next section demonstrates how the optimal channel fidelity of forward-assisted codes, which are either no-signaling, PPT-preserving, or both, can be expressed as semidefinite programs (SDPs) \cite{watrous_lecture_notes,boyd2004convex}. For an introduction to SDPs. See also \cite[Sec. 2.4]{khatri-wilde-book}. SDPs possess several notable properties: they facilitate efficient numerical computation, any feasible solution to the dual SDP serves as an upper bound for the primal problem, and when strong duality holds (as will be explained later), the dual solution directly provides the optimal value.

\bigskip

\section{Semidefinite programs for PPT-preserving and no-signalling codes}
\label{sdp-fidelity}
We have seen that the general class of forward-assisted codes of size $(r_1,r_2)$ corresponds to CPTP maps $\mathcal{Z}^{ABC\to A'B'C'}$ which are no-signalling from Bob and Charlie to Alice, i.e. $BC\not\to A$. These necessary yet minimal requirements are manifested in the Choi matrix of the tripartite operation $\mathcal{Z}^{ABC\to A'B'C'}$ as follows:

\begin{align}
\label{CP}
  \textbf{CP}\,\,\, &: \,\, Z^{ABCA'B'C'}\geq 0,\\
    \label{TP}
   \textbf{TP}\,\,\, &: \,\, \Tr_{A'B'C'}=\mathds{1}^{ABC},\\
    \label{BCnotA}
  \textbf{NS}\,\,(BC\not\to A)\,\,\, &: \,\,  \Tr_{B'C'}Z^{ABCA'B'C'} = \frac{\mathds{1}^{BC}}{\abs{B}\abs{C}} \Tr_{BCB'C'}Z^{ABCA'B'C'}.
\end{align}
Here, Eqs. \eqref{CP} and \eqref{TP} correspond to the operation being completely positive and trace-preserving, respectively. The equality in Eq. \eqref{BCnotA} represents the constraint that the operation is causal, i.e. Bob and Charlie, even when considered together as a single entity, cannot send messages to Alice (see Eq. \eqref{def-no-sig-bc-a}).

The forward-assisted code is no-signalling if and only if no party can communicate messages to any other party. From Lemma \ref{no-signalling-enough}, this condition is equivalent to the following three constraints on the Choi matrix:

\begin{align}
\label{anottobc}
    \textbf{NS}\,\,(A\not\to BC)\,&:\,\, \Tr_{A'} Z = \mathds{1}^{A_1A_2}/r_1 r_2 \otimes \Tr_{A'A_1A_2}Z^{ABCA'B'C'},\\
    \label{bnottoac}
     \textbf{NS}\,\,(B\not\to AC)\,&:\,\, \Tr_{B'} Z = \mathds{1}^{B}/\abs{B} \otimes \Tr_{B'B}Z^{ABCA'B'C'},\\
     \label{cnottoab}
     \textbf{NS}\,\,(C\not\to AB)\,&:\,\, \Tr_{C'} Z = \mathds{1}^{C}/\abs{C} \otimes \Tr_{C'C}Z^{ABCA'B'C'}.
     \end{align}
    Note that, according to Lemma \ref{no-signalling-enough}, these three conditions imply Eq. \eqref{BCnotA}. However, we will use the constraint in Eq.~\eqref{BCnotA} to simplify our SDPs, as will be demonstrated shortly.

A forward-assisted code is PPT-preserving if and only if the tripartite operation remains PPT-preserving across all three parties. This condition can be expressed in terms of the Choi matrix as follows:

     \begin{align}
     \label{ppt-a}\textbf{PPT}\textbf{\textsubscript{A}}\, &:\,\, (Z^{ABCA'B'C'})^{T_{AA'}}\geq 0,\\  
     \label{ppt-b}\textbf{PPT}\textbf{\textsubscript{B}}\, &:\,\, (Z^{ABCA'B'C'})^{T_{BB'}}\geq 0,\\
     \label{ppt-c}
     \textbf{PPT}\textbf{\textsubscript{C}}\, &:\,\, (Z^{ABCA'B'C'})^{T_{CC'}}\geq 0.
\end{align}

Since these codes are fully characterized as conditions on the Choi matrix of the tripartite channel, we aim to express the channel fidelity directly in terms of the Choi matrix. This allows us to maximize the fidelity under the given conditions, formulating the problem as a semidefinite program (SDP). As noted earlier (Eq. \eqref{channel-fidelity-express}), the channel fidelity is given by
\begin{align}
    \label{channel-f-express}
    f_c = \frac{1}{r_1r_2} \Tr \phi^{A_1B'}\phi^{A_2C'}Z^{ABCA'B'C'} (N^{A'BC})^{T}.
\end{align}
Our goal in this section is to maximize $f_c$ subject to Eqs. \eqref{CP}-\eqref{BCnotA}, together with the additional constraints Eqs. \eqref{anottobc}-\eqref{ppt-c}, as appropriate. 

A crucial step in finding the maximal channel fidelity as a reasonably simple SDP is to consider a highly symmetric version of the Choi matrix, known as the \textit{twirled Choi matrix}, where the Choi matrix is twirled according to some unitary group. A well-known result, foundational to many benchmarking protocols in quantum computation (see, e.g., \cite{onorati2024}), states that multiplying the inputs and outputs of a channel by certain unitaries in a specific way does not alter the channel fidelity \cite{PhysRevA.60.1888}. Here, we apply this result to demonstrate that the twirled Choi matrix can be used without any loss of generality.

We are particularly interested in multipartite operators, specifically focusing on closed groups of unitaries of the form $\left(U^{A_1} \otimes U^{A_2}\right)$, where $U^{A_1}$ and $U^{A_2}$ are unitary matrices in $A_1$ and $A_2$, respectively. For simplicity, we have streamlined the notation, as we will consistently refer to this specific unitary representation in the following discussion. Being a closed subgroup of the unitary group, this group is necessarily compact and thus admits a unique Haar measure, denoted by $\mu(U)$. This property allows us to define the concept of a twirl.
Let $U^\ast$ represent the complex conjugate of $U$.
By using the transpose trick, i.e. $U^{A_1} \ket{\phi}^{A_1 B'} =(U^{B'})^{T}\ket{\phi}^{A_1 B'}, U^{A_2}\ket{\phi}^{A_2 C'}= (U^{C'})^{T}\ket{\phi}^{A_2 C'}$, we have
\begin{align*}
    (U^{A_1})^{T} (U^{B'})^{\dagger}\ket{\phi}^{A_1 B'}=\ket{\phi}^{A_1 B'},\\
     (U^{A_2})^{T} (U^{C'})^{\dagger}\ket{\phi}^{A_2 C'}=\ket{\phi}^{A_2 C'},
\end{align*}
Consequently,
\begin{align*}
   (U^{A_1})^{T} (U^{B'})^{\dagger}(U^{A_2})^{T}\ (U^{C'})^{\dagger}\ket{\phi}^{A_1 B'} \ket{\phi}^{A_2 C'}=\ket{\phi}^{A_1 B'}\ket{\phi}^{A_2 C'},
\end{align*}
or 
\begin{align}
\label{transposed}
    (U^{A_1})^{T} (U^{B'})^{\dagger}(U^{A_2})^{T}\ (U^{C'})^{\dagger}(\phi^{A_1 B'} \phi^{A_2 C'})(U^{A_1})^{\ast} U^{B'}(U^{A_2})^{\ast}\ U^{C'}=\phi^{A_1 B'} \phi^{A_2 C'}.
\end{align}
For conciseness, we denote
\begin{align}
\label{concise-U}
  \widehat{U}^{A_1A_2B'C'}\coloneqq (U^{A_1})^{T} (U^{B'})^{\dagger}(U^{A_2})^{T}\ (U^{C'})^{\dagger}.
\end{align}
From this, the channel fidelity Eq. \eqref{channel-f-express} satisfies 
\begin{align*}
      f_c
    &=
    \frac{1}{r_1r_2}
    \Tr\big\{(\phi^{A_1B'} \phi^{A_2C'})Z^{A_1A_2BCA'B'C'}(N^{A'BC})^{T}
    \big\}\\
    &=\frac{1}{r_1r_2}\int_{U}d\mu(U)
    \Tr\big\{\widehat{U}^{A_1A_2B'C'}(\phi^{A_1B'} \phi^{A_2C'})(\widehat{U}^{A_1A_2B'C'})^{\dagger}Z^{A_1A_2BCA'B'C'}(N^{A'BC})^{T}
    \big\}\\
    &=\frac{1}{r_1r_2}\int_{U}d\mu(U)
    \Tr\big\{(\phi^{A_1B'} \phi^{A_2C'})(\widehat{U}^{A_1A_2B'C'})^{\dagger}Z^{A_1A_2BCA'B'C'}\widehat{U}^{A_1A_2B'C'}(N^{A'BC})^{T}
    \big\}\\
       &=\frac{1}{r_1r_2}
    \Tr\big\{(\phi^{A_1B'} \phi^{A_2C'})\widebar{Z}^{A_1A_2BCA'B'C'}(N^{A'BC})^{T}
    \big\}.
\end{align*}
The first line follows from the definition of the channel fidelity in Eq.\eqref{channel-f-express}, the second line from Eq.~\eqref{transposed}, and the third line from the cyclic property, the linearity of the trace, and the definition of the Haar measure. The operator $\widebar{Z}^{A_1A_2BCA'B'C'}$ in the final line is referred to as the twirled Choi matrix, which is given by:
\begin{align}
\label{twirled-choi-1}
  \widebar{Z}^{A_1A_2BCA'B'C'}  =
  \int_{U}d\mu(U)
  (\widehat{U}^{A_1A_2B'C'})^{\dagger}Z^{A_1A_2BCA'B'C'}\widehat{U}^{A_1A_2B'C'}.
\end{align}

While it is relatively straightforward to see that the twirled Choi matrix achieves the same channel fidelity, it is necessary to ensure that other properties of the Choi matrix, namely, the no-signalling and PPT-preserving properties, are maintained after twirling. To establish this, we begin by identifying the tripartite CPTP map corresponding to the twirled Choi matrix. We refer to this tripartite operation as the twirled tripartite operation and denote it by $\widebar{\mathcal{Z}}^{ABC \to A'B'C'}$. 
We will show that applying specific unitaries before and after the action of the tripartite channel $\mathcal{Z}^{A_1A_2BC \to A'B'C'}$ and averaging over all such unitaries yields the twirled version of this operator. Let us consider the Choi matrix with an instance of the unitary $U^{A_1A_1B'C'}$ given by Eq. \eqref{concise-U} as follows:
\begin{align*}
    (\widehat{U}^{A_1A_2B'C'})^{\dagger}Z^{A_1A_2BCA'B'C'}\widehat{U}^{A_1A_2B'C'} = 
    (U^{A_1})^{\ast} U^{B'} (U^{A_2})^{\ast} U^{C'} Z^{A_1A_2BCA'B'C'}
    (U^{A_1})^{T} (U^{B'})^{\dagger} (U^{A_2})^{T} (U^{C'})^{\dagger}.
\end{align*}

Now, consider the action of the unitary $(U^{A_1})^{\ast} (U^{A_2})^{\ast}$ on the Choi matrix. Using the transpose trick, these unitaries can be taken inside the action of the tripartite operation $\mathcal{Z}^{A_1A_2BC \to A'B'C'}$. This implies that these unitaries are applied to the inputs $A_1A_2$ prior to the tripartite channel acting on the state. On the other hand, the action of the unitaries $U^{B'} U^{C'}$ can be interpreted as first applying the tripartite operation, followed by applying these unitaries to the outputs $B'C'$. Thus, the twirled Choi matrix in Eq.~\eqref{twirled-choi-1} corresponds to the following modified tripartite operation:
\begin{align}
\label{twirled-code}
    \widebar{\mathcal{Z}}^{A_1A_2BC\to A' B'C'}(\cdot)=\int_{U}d\mu(U) U^{B'C'}\left( \mathcal{Z}^{A_1A_2BC\to A' B'C'}\left[(U^{A_1A_2})^{\dagger}(\cdot) U^{A_1A_2}\right]\right)(U^{B'C'})^{\dagger},
\end{align}
where we defined $U^{A_1A_2}\coloneqq U^{A_1} U^{A_2}$ and $U^{B'C'}\coloneqq U^{B'}U^{C'}$.

This twirled tripartite channel $\widebar{\mathcal{Z}}^{A_1A_2BC\to A' B'C'}$ can be implemented as follows: Alice shares a random variable with Bob and Charlie identifying a unitary $U^{A_1A_2}$ drawn according to Haar measure $\mu(U)$. Before the communication begins, Alice applies $(U^{A_1A_2})^{\dagger}$ to her inputs $A_1A_2$. Alice, Bob and Charlie then use the forward-assisted code according to $\mathcal{Z}^{A_1A_2BC\to A' B'C'}$. Subsequently, Bob and Charlie apply the unitaries $B^{B'}$ and $U^{C'}$, respectively, inverting Alice's operation on the input.
This means that the operation $\widebar{\mathcal{Z}}^{A_1A_2BC\to A' B'C'}$ can be implemented by using the operation $\mathcal{Z}^{A_1A_2BC\to A' B'C'}$ together with local operations and shared classical randomness. Therefore, if the operation $\mathcal{Z}^{A_1A_2BC\to A' B'C'}$ was non-signalling or PPT-preserving, $\overline{\mathcal{Z}}^{A_1A_2BC\to A' B'C'}$ will inherit these properties. This argument follows \cite{leung-matthews}, but \cite{8012535} provides slightly different idea, but they are essentially the same. This says that for a given broadcast channel $\mathcal{N}^{A'\to BC}$, as far as channel fidelity is concerned, the code operations $\overline{\mathcal{Z}}^{A_1A_2BC\to A' B'C'}$ and $\mathcal{Z}^{A_1A_2BC\to A' B'C'}$, subject to no-signalling and PPT-preserving constraints, yield identical results. However, the twirled operation exhibits symmetry properties that are particularly advantageous for simplifying the computation of channel fidelity. 

\bigskip

In order to establish the symmetry of the twirled Choi matrix Eq. \eqref{twirled-choi-1}, we exploit the projection formula given by Eq. \eqref{projection}. For that, we need to identify sub-representations of the unitary group $(\widehat{U}^{A_1A_2B'C'})$ on four-fold tensor product space. We identify the following projectors onto these four subspaces:
\begin{align*}
      \Pi_1^{A_1A_2B'C'} =& \phi^{A_1B'}\otimes \phi^{A_2C'},\\
   \Pi_2^{A_1A_2B'C'} = &\phi^{A_1B'}\otimes (\mathds{1}^{A_2C'}-\phi^{A_2C'}),\\
  \Pi_3^{A_1A_2B'C'} =  &(\mathds{1}^{A_1B'}-\phi^{A_1B'})\otimes \phi^{A_2C'},\\
   \Pi_4^{A_1A_2B'C'} = &(\mathds{1}^{A_1B'}-\phi^{A_1B'})\otimes (\mathds{1}^{A_2C'}-\phi^{A_2C'}).
\end{align*}
It is straightforward to check that $\Pi_1^{A_1A_2B'C'}+\Pi_2^{A_1A_2B'C'}+\Pi_3^{A_1A_2B'C'}+\Pi_4^{A_1A_2B'C'} = \mathds{1}^{A_1A_2B'C'}$. We can now use the projection formula to write the twirled operator $\widebar{Z}^{A_1A_2BCA'B'C'}$ as follows:
\begin{align}
\nonumber
    \widebar{Z}^{ABCA'B'C'} =&
r_1r_2\bigg( \phi^{A_1B'}   \phi^{A_2C'}    E_1^{A'BC} +
\phi^{A_1B'}   (\mathds{1}^{A_2C'}-\phi^{A_2C'})    E_2^{A'BC}\\
\label{twirled-choi}
&+(\mathds{1}^{A_1B'}-\phi^{A_1B'})   \phi^{A_2C'}    E_3^{A'BC}
+(\mathds{1}^{A_1B'}-\phi^{A_1B'})   (\mathds{1}^{A_2C'}-\phi^{A_2C'})    E_4^{A'BC}\bigg),
\end{align}
for some operators $E_1^{A'BC},E_2^{A'BC},E_3^{A'BC}$ and $E_4^{A'BC}$. As noted earlier, here and throughout the paper, we define $\abs{A_1}\coloneqq r_1,\abs{A_2}\coloneqq r_2$. When we write the operators $\{E_i\}_i$ superscripted with only a subset of the systems $\{A',B,C\}$, we refer to the partial trace of the operator over the respective subsystem, for example $E_1^{A'} = \Tr_{BC}E_1^{A'BC}$.
The average channel input for the twirled forward-assisted code $\widebar{\mathcal{Z}}^{ABC\to A'B'C'}$ is obtained by substituting the twirled Choi matrix from Eq. \eqref{twirled-choi} into the corresponding expression provided in Eq. \eqref{avg-input-2}, as follows:

\begin{align}
\nonumber
    \rho^{A'} &= \Tr_{ABCB'C'}\left\{\widebar{Z}^{ABCA'B'C'}\frac{\mathds{1}^{ABC}}{r_1r_2\abs{B}\abs{C}}\right\}\\
\label{avg-twirled}
    &=\frac{1}{\abs{B}\abs{C}}\left(E_1^{A'} + (r_2^2-1)E_2^{A'} + (r_1^2-1)E_3^{A'} +(r_1^2-1)(r_2^2-1)E_4^{A'} \right).
\end{align}

We are now ready to derive an expression for the quantum broadcast channel fidelity in terms of the twirled Choi matrix and the average channel input. This forms the foundation for the subsequent results presented in this work.

\begin{theorem}
\label{full-fledged}
    For a quantum broadcast channel $\mathcal{N}^{A'\to BC}$, there is a forward-assisted code (See Fig. \ref{forward-assisted-decompose}) of size $(r_1,r_2)$, average channel input $\rho^{A'}$ and channel fidelity $f_c$, which is PPT-preserving and/or no-signalling if and only if the following SDP has a feasible solution
    \end{theorem}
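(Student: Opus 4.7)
The strategy exploits the twirling argument already set up just before the theorem: by Eq.~\eqref{twirled-code}, any forward-assisted code $\mathcal{Z}$ can be twirled to a code $\widebar{\mathcal{Z}}$ whose implementation only adds shared classical randomness and local unitaries, so the no-signalling and PPT-preserving properties are all preserved, and the resulting Choi matrix takes the highly symmetric form of Eq.~\eqref{twirled-choi}. This form is completely determined by four operators $E_1^{A'BC},\ldots,E_4^{A'BC}$, and these will be the main SDP variables. The plan is to translate each of the conditions \eqref{CP}--\eqref{BCnotA} and, as appropriate, \eqref{anottobc}--\eqref{ppt-c} into a linear equation or PSD inequality on the $E_j$'s, and to identify the channel fidelity and the average channel input as simple expressions in the $E_j$'s.

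For the ``only if'' direction, I would take a code with Choi matrix $Z$ satisfying the claimed properties and twirl it to obtain $\widebar{Z}$. Orthogonality of the four projectors $\Pi_j^{A_1A_2B'C'}$ gives that $\widebar{Z}\geq 0$ is equivalent to $E_j\geq 0$ for each $j$. The identity $\Tr[(\phi^{A_1B'}\phi^{A_2C'})\Pi_j]=\delta_{j,1}$ collapses the fidelity expression Eq.~\eqref{channel-f-express} to $f_c=\Tr[E_1^{A'BC}(N^{A'BC})^T]$, and Eq.~\eqref{avg-twirled} already expresses the average input $\rho^{A'}$ as a linear combination of the $E_j^{A'}$'s. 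The trace-preserving condition \eqref{TP}, once one computes the constants $\Tr_{B'C'}\Pi_j\propto\mathds{1}^{A_1A_2}$, becomes a single linear equation in the marginals $E_j^{BC}=\Tr_{A'}E_j^{A'BC}$. Each no-signalling condition \eqref{BCnotA}--\eqref{cnottoab}, after similar bookkeeping on the projector side, becomes a linear equation relating partial traces of the $E_j$'s; Lemma~\ref{no-signalling-enough} ensures that imposing only the three one-to-two conditions \eqref{anottobc}--\eqref{cnottoab} already enforces full no-signalling when desired. Finally, each PPT-preserving condition \eqref{ppt-a}--\eqref{ppt-c} becomes a PSD condition on the appropriate partial transpose of a linear combination of the $E_j$'s, using that partial transposes of the $\phi$-factors in $\Pi_j$ produce flip operators via the identity $(\phi^{XY})^{T_X}=\tfrac{1}{|X|}F^{XY}$.

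For the converse direction, given a feasible tuple $(E_1,\ldots,E_4)$, I reconstruct $\widebar{Z}^{ABCA'B'C'}$ via Eq.~\eqref{twirled-choi}. The PSD constraints on the $E_j$ together with orthogonality of the $\Pi_j$ yield $\widebar{Z}\geq 0$, i.e.\ complete positivity \eqref{CP}. The linear constraints translate back to \eqref{TP} and to whichever of \eqref{BCnotA}--\eqref{cnottoab} are imposed, and the PSD PPT constraints translate back to whichever of \eqref{ppt-a}--\eqref{ppt-c} are imposed. Since $\widebar{Z}$ in particular satisfies the minimal semicausality condition \eqref{BCnotA}, Lemma~\ref{forward-assisted-decompose} guarantees that it is realized by some encoding isometry $\mathcal{E}^{A\to A'R}$ together with a joint decoding channel $\mathcal{D}^{RBC\to B'C'}$, i.e.\ by a genuine forward-assisted code of the requested size, channel fidelity $f_c$, and average channel input $\rho^{A'}$.

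The main technical obstacle will be the explicit translation of the PPT-preserving constraints \eqref{ppt-a}--\eqref{ppt-c}. Across each of the cuts $AA'$, $BB'$, $CC'$ the partial transpose acts simultaneously on the $\Pi_j$-factor and on the $E_j$-factor; on the projector side this mixes $\phi$'s and $(\mathds{1}-\phi)$'s with flip operators, and the resulting terms are not all mutually orthogonal, so reorganising them into clean PSD inequalities on linear combinations of $E_j^{T_{A'}}$, $E_j^{T_{B}}$, or $E_j^{T_{C}}$ requires careful bookkeeping. The no-signalling side is considerably simpler thanks to Lemma~\ref{no-signalling-enough}, and the CP, TP, and fidelity translations are essentially immediate from the block structure of Eq.~\eqref{twirled-choi}.
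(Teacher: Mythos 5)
Your strategy is essentially the paper's: twirl to the symmetric form of Eq.~\eqref{twirled-choi}, read off $f_c=\Tr E_1^{A'BC}(N^{A'BC})^T$ from orthogonality of the $\Pi_j$'s, translate CP/TP/NS constraints into linear (in)equalities on the $E_j$'s, and reconstruct a bona fide forward-assisted code via Lemma~\ref{forward-assisted-decompose} in the converse direction. Two points worth flagging. First, the orthogonality obstacle you anticipate in the PPT step dissolves more cleanly than you suggest: after partial transposition, $\phi^{XY}$ and $\mathds{1}^{XY}-\phi^{XY}$ both become linear combinations of the symmetric/antisymmetric projectors $\mathbb{S}^{XY},\mathbb{A}^{XY}$ (via $(\phi^{XY})^{T_X}=\tfrac{1}{|X|}(\mathbb{S}^{XY}-\mathbb{A}^{XY})$), which \emph{are} mutually orthogonal and tensor-factor across the $A_1B'$ and $A_2C'$ blocks; regrouping by these four orthogonal projectors is exactly what lets each block-coefficient be forced nonnegative independently, and for the $AA'$-cut one transposes over $BB'CC'$ instead so that $\rho^{A'}\mathds{1}^{BC}$ is unaffected. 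Second, the theorem's SDP does not carry all four $E_j$'s as free variables: the semicausal constraint \eqref{BCnotA}, rewritten as Eq.~\eqref{causality}, is used to eliminate $E_4^{A'BC}$ everywhere and to make $\rho^{A'}$ an explicit variable, which is why the CP condition $E_4\geq 0$ appears as constraint~\eqref{bcnottoa} and why the NS and PPT conditions take the specific forms displayed; your sketch keeps $E_4$ and so does not quite land on the stated constraint set without this substitution step.
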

    \begin{align}
    \label{channel-fidelity}
         &f_c =
    \Tr E_1^{A'BC}(N^{A'BC})^{T}\\
    \label{cp-conditions}
     & E^{A'BC}_i\geq 0, \quad \text{for} \quad i=1,2,3,\\
    \label{bcnottoa}
    & E_{1}^{A'BC}+ (r_2^2  - 1)E_{2}^{A'BC} + (r_1^2  - 1)E_{3}^{A'BC} \leq  \rho^{A'}\mathds{1}^{BC}.\\
    \label{tp+ns}
     \textbf{NS}\,\,(A\not\to BC):&\,\, E_i^{BC}=\frac{\mathds{1}^{BC}}{r_1^2r_2^2}, \quad i=1,2,3,\\
    \label{BnototAC-one}
     \textbf{NS}\,\,(B\not\to AC):&\,\, E_1^{A' BC} + (r_1^2 - 1)E_3^{A' BC} =  \big(E_1^{A' C} + (r_1^2 - 1)E_3^{A' C}\big) \otimes \frac{\mathds{1}^B}{\abs{B}},\\
    \label{CnototAB-one}
     \textbf{NS}\,\,(C\not\to AB):&\,\, E_1^{A' BC} + (r_2^2 - 1)E_2^{A' BC} =  \big(E_1^{A' B} + (r_2^2 - 1)E_2^{A' B}\big) \otimes \frac{\mathds{1}^C}{\abs{C}}.
    \end{align}
    \begin{align}
    \label{ppt-a-theorem}
    \textbf{PPT}\textbf{\textsubscript{A}}:&
    \begin{cases}
    &     - \frac{1}{r_1r_2}\rho^{A'}\mathds{1}^{BC}  - (1+\frac{1}{r_2}+\frac{1}{r_1})E_1^{T_{BC}}\leq \frac{r_2^2-1}{r_2}E_2^{T_{BC}}
  + \frac{r_1^2-1}{r_1}E_3^{T_{BC}} \leq \frac{1}{r_1r_2}\rho^{A'}\mathds{1}^{BC} + (1-\frac{1}{r_2}-\frac{1}{r_1})E_1^{T_{BC}}\\
& - \frac{1}{r_1r_2}\rho^{A'}\mathds{1}^{BC} +  (1-\frac{1}{r_2}+\frac{1}{r_1})E_1^{T_{BC}}\leq   \frac{r_2^2-1}{r_2}E_2^{T_{BC}}
  - \frac{r_1^2-1}{r_1}E_3^{T_{BC}} \leq \frac{1}{r_1r_2}\rho^{A'}\mathds{1}^{BC} - (1+\frac{1}{r_2}-\frac{1}{r_1})E_1^{T_{BC}}.
    \end{cases}
\\
    \label{ppt-b-theorem}
    \textbf{PPT}\textbf{\textsubscript{B}}:&
    \begin{cases}
        \hspace{3.6cm} -(r_1 -1)E_3^{T_B} \leq E_1^{T_B} \leq (r_1 +1)E_3^{T_B},\\
- (  \rho^{A'}\mathds{1}^{BC} - (r_1 ^2 - 1)E_3^{T_B} - E_1^{T_B})    \leq r_1(r_2 ^2-1) E_2^{T_B}
 \leq \rho^{A'}\mathds{1}^{BC}  - (r_1 ^2 - 1)E_3^{T_B} - E_1^{T_B}.
    \end{cases}\\
    \label{ppt-c-theorem}
    \textbf{PPT}\textbf{\textsubscript{C}}:&
    \begin{cases}
        \hspace{3.6cm} -(r_2 -1)E_2^{T_C} \leq E_1^{T_C} \leq (r_2 +1)E_2^{T_C},\\
-( \rho^{A'}\mathds{1}^{BC} - (r_2 ^2 - 1)E_2^{T_C} - E_1^{T_C} ) \leq   r_2(r_1 ^2-1) E_3^{T_C} \leq  \rho^{A'}\mathds{1}^{BC} - (r_2 ^2 - 1)E_2^{T_C} - E_1^{T_C}.
    \end{cases}
\end{align}

\begin{proof}
    We begin by obtaining the expression for the channel fidelity \eqref{channel-fidelity}. The desired expression follows from substituting the twirled Choi matrix \eqref{twirled-choi} into \eqref{channel-f-express}. Notably, only the first term, including $\phi^{A_1B'}   \phi^{A_2C'}$, contributes, while the other terms vanish due to orthogonality of the latter state to the following states: $
\phi^{A_1B'}   (\mathds{1}^{A_2C'}-\phi^{A_2C'}),(\mathds{1}^{A_1B'}-\phi^{A_1B'})   \phi^{A_2C'},
(\mathds{1}^{A_1B'}-\phi^{A_1B'})   (\mathds{1}^{A_2C'}-\phi^{A_2C'})$. 

 We next consider the constraints \eqref{CP}, \eqref{TP}, and \eqref{BCnotA}, which are necessary in order for the code to qualify as a CPTP map. Using Eqs. \eqref{twirled-choi} and \eqref{avg-twirled}, the $BC\not\to A$ condition, Eq. \eqref{BCnotA}, is equivalent to 
 \begin{align}
 \label{causality}
     & E_{1}^{A'BC}+ (r_2^2  - 1)E_{2}^{A'BC} + (r_1^2  - 1)E_{3}^{A'BC} + (r_2^2  - 1)(r_1^2  - 1)E_{4}^{A'BC}= \rho^{A'}\mathds{1}^{BC}.
 \end{align}
We will use this equality to eliminate $E_{4}^{A'BC}$ in the other constraints. By applying the trace-preserving condition, Eq. \ref{TP}, to Eq. \eqref{twirled-choi}, we obtain
\begin{align}
\label{tp-twiled}
    \qquad E_{1}^{BC}+ (r_2^2  - 1)E_{2}^{BC} + (r_1^2  - 1)E_{3}^{BC} + (r_2^2  - 1)(r_1^2  - 1)E_{4}^{BC}=\mathds{1}^{BC}.
\end{align}
Notice that in light of Eq. \eqref{causality}, the above trace-preserving condition is equivalent to the following condition
\begin{align}
\label{tp-rho}
\Tr\rho^{A'} = 1.
\end{align}
We can use either Eq. \eqref{tp-twiled} or \eqref{tp-rho} to enforce trace-preservation. We will see that in the no-signalling codes, in particular, when we impose $A\not\to BC$, using the condition \eqref{tp-twiled}, leads to the simple equality given by 
Eq. \eqref{tp+ns} which simultaneously satisfies both trace-preservation and no-signalling condition $A\not\to BC$ (as we will explain in detail shortly). For the codes without no-signalling constraint $A\not\to BC$, we use $\Tr\rho^{A'} = 1$ for simplicity. 

Since $\phi^{A_1B'}\phi^{A_2C'},\phi^{A_1B'}   (\mathds{1}^{A_2C'}-\phi^{A_2C'}),(\mathds{1}^{A_1B'}-\phi^{A_1B'})   \phi^{A_2C'}$, and $(\mathds{1}^{A_1B'}-\phi^{A_1B'})   (\mathds{1}^{A_2C'}-\phi^{A_2C'})$ project onto orthogonal subspaces and they are positive semidefinite operators, the twirled Choi matrix in Eq. \eqref{twirled-choi} satisfies the completely positive constraint \eqref{CP} if and only if $E_i^{A'BC}\geq 0$, for $i=1,2,3,4$. The first three of these are constraints in Eq. \eqref{cp-conditions}. Since we aim to eliminate $E_4^{A'BC}$ from all other constraints using \eqref{BCnotA}, we substitute the expression for $E_4^{A'BC}$ from \eqref{causality} into the constraint $E_4^{A'BC}\geq 0$, leading to Eq. \eqref{bcnottoa}. 

Now, if we want the forward-assisted code to be no-signalling from Alice to Bob and Charlie, i.e. satisfying Eq. \eqref{anottobc}, then this is equivalent to 
\begin{align*}
  & r_1 r_2  \big( \phi^{A_1B'}   \phi^{A_2C'} E_{1}^{ B C} +\phi^{A_1B'}       (\mathds{1} - \phi)^{A_2C'} E_{2}^{B C}
+(\mathds{1} - \phi)_{A_1B'} \phi_{A_2C'} E_{3}^{B C } + (\mathds{1} - \phi)_{A_1B'} (\mathds{1} - \phi)_{A_2C'} E_{4}^{B C }\big)\\
   =& \mathds{1}^{A_1A_2 B' C' }/r_1r_2 \otimes \big(E_{1}^{BC}+ (r_2^2   - 1)E_{2}^{BC} + (r_1^2  - 1)E_{3}^{BC} + (r_2^2  - 1)(r_1^2  - 1)E_{4}^{BC}\big).
\end{align*}

For the four orthogonal projectors on $A_1A_2B'C'$ on the left-hand-side to be sum to the identity, necessarily $E_1^{BC}=E_2^{BC}=E_3^{BC}=E_4^{BC}$. Using the trace-preserving constraint Eq. \eqref{tp-twiled}, the latter holds if and only if $E_i^{BC}=\mathds{1}^{BC}/r_1^2r_2^2$ for $i=1,2,3,4$. We eliminate $E_4^{BC}$ using Eq. \eqref{causality}, which comes with no additional constraint, and the remaining three constraints correspond to Eq. \eqref{tp+ns} in our theorem. 

Next, we consider the no-signalling constraints $B\not\to AC$ and $C\not\to AB$. As mentioned earlier, the three one-party-to-two-party no-signalling conditions ensure full no-signalling among the three parties; in particular, they imply $BC\not\to A$. However, we retain this latter constraint as it not only simplifies $B\not\to AC$ and $C\not\to AB$, as we will see shortly, but also simplifies other constraints more generally. We begin with $B\not\to AC$ and use the twirled Choi matrix \eqref{twirled-choi} in the constraint Eq. \eqref{bnottoac} as follows:
 \begin{align*}  
  \Tr_{B'} \bar{Z} &=
r_2 \mathds{1}^{A_1} \Big( \phi^{A_2C'}( E_{1}^{ A'B C} +(r_1^2-1)E_{3}^{A'B C }) +  
  (\mathds{1} - \phi)^{A_2C'} (E_{2}^{A'B C}
+(r_1^2-1)E_{4}^{A'B C })\Big)\\
&= \mathds{1}^{B}/\abs{B} \otimes r_2 \mathds{1}^{A_1}  \Big( \phi^{A_2C'}( E_{1}^{ A' C} +(r_1^2-1)E_{3}^{A' C } +  
  (\mathds{1} - \phi)^{A_2C'} (E_{2}^{A' C}
+(r_1^2-1)E_{4}^{A' C }\Big)\\
&=\mathds{1}^{B}/\abs{B} \otimes \Tr_{B'B}\bar{Z}^{ABCA'B'C'}
 \end{align*}
Noting that the projectors $\phi^{A_2C'}$ and $(\mathds{1} - \phi)^{A_2C'}$ span orthogonal subspaces, we obtain:
\begin{align*}
   & E_1^{A' BC} + (r_1^2 - 1)E_3^{A' BC} = \frac{\mathds{1}^B}{\abs{B}} \otimes \big(E_1^{A' C} + (r_1^2 - 1)E_3^{A' C}\big)\\
    & E_2^{A' BC} + (r_1^2 - 1)E_4^{A' BC} = \frac{\mathds{1}^B}{\abs{B}} \otimes \big(E_2^{A' C} + (r_1^2 - 1)E_4^{A' C}\big).
\end{align*}
The first equation above is constraint \eqref{BnototAC-one}, and the second equation is implied by combing the first and Eq. \eqref{causality}, so it is not included. Following a similar argument, the no-signalling condition $C\not\to AB$ is equivalent to 
\begin{align*}
    & E_1^{A' BC} + (r_2^2 - 1)E_2^{A' BC} = \frac{\mathds{1}^C}{\abs{C}} \otimes \big(E_1^{A' B} + (r_2^2 - 1)E_2^{A' B}\big)\\
    & E_3^{A' BC} + (r_2^2 - 1)E_4^{A' BC} = \frac{\mathds{1}^C}{\abs{C}} \otimes \big(E_3^{A' B} + (r_2^2 - 1)E_4^{A' B}\big).
\end{align*}
The first of these is constraint \eqref{CnototAB-one}, and the second equation is implied by combing the first and Eq. \eqref{causality}, so it is not included. 

Next, we consider PPT-preserving constraints. We begin with partial transpose on Alice's systems Eq. \eqref{ppt-a}. We will show that \eqref{twirled-choi} is PPT preserving on $AA'$ if and only if conditions \eqref{ppt-a-theorem} hold. To see this, note that as far as the PPT condition is concerned, a partial transpose on $AA'$ is equivalent to one on $BB'CC'$, since a standard transpose does not alter the eigenvalues of an operator. Moreover, we use the fact that $(\phi^{A_1B'})^{T_{A_1}} = (\mathbb{S}^{A_1B'} - \mathbb{A}^{A_1B'})/r_1$, where $\mathbb{S}^{A_1B'}$ and $\mathbb{A}^{A_1B'}$ are the projectors onto the symmetric and anti-symmetric subspaces of $A\otimes B'$, respectively. We also use that $\mathbb{S}^{A_1B'} + \mathbb{A}^{A_1B'} = \mathds{1}^{A_1B'}$ (and similarly for $\phi^{A_2C'}$). From these, we observe that 
\begin{align*}
    r_1(\phi^{A_1B'})^{T_{B'}} &= \mathbb{S}^{A_1B'} - \mathbb{A}^{A_1B'}\\
r_2(\phi^{A_2C'})^{T_{C'}} &= \mathbb{S}^{A_2C'} - \mathbb{A}^{A_2C'}\\
r_1(\mathds{1} - \phi_{B'A_1})^{T_{B'}} &= (r_1-1)\mathbb{S}^{A_1B'} + (r_1+1)\mathbb{A}^{A_1B'}\\
r_2(\mathds{1} - \phi^{A_2C'})^{T_{C'}} &= (r_2-1)\mathbb{S}^{A_2C'} + (r_2+1)\mathbb{A}^{A_2C'} .
\end{align*}
We now apply $T_{BB'CC'}$ to both sides of \eqref{twirled-choi} (Here and throughout the remainder of the proof, we omit the superscripts $A'BC$ from the operators $E_i$, $i=1,2,3,4$, for simplicity of notation). We obtain:
\begin{align*}    \widebar{Z}^{T_{B'BC'C}} =&  \mathbb{S}^{A_1B'}\mathbb{S}^{A_2C'}\left(E_1^{T_{BC}} + (r_2-1)E_2^{T_{BC}} + (r_1-1)E_3^{T_{BC}} + (r_2-1)(r_1-1)E_4^{T_{BC}}\right)\\
  &+ \mathbb{S}^{A_1B'}\mathbb{A}^{A_2C'}\left(-E_1^{T_{BC}} + (r_2+1)E_2^{T_{BC}} - (r_1-1)E_3^{T_{BC}} + (r_2+1)(r_1-1)E_4^{T_{BC}}\right)\\
  &+ \mathbb{A}^{A_1B'}\mathbb{S}^{A_2C'}\left(-E_1^{T_{BC}} - (r_2-1)E_2^{T_{BC}} + (r_1+1)E_3^{T_{BC}} + (r_2-1)(r_1+1)E_4^{T_{BC}}\right)\\
  &+ \mathbb{A}^{A_1B'}\mathbb{A}^{A_2C'}\left(E_1^{T_{BC}} - (r_2+1)E_2^{T_{BC}} - (r_1+1)E_3^{T_{BC}} + (r_2+1)(r_1+1)E_4^{T_{BC}}\right).
\end{align*}
Using the fact that each pair of symmetric/antisymmetric projectors is orthogonal to the others, we see that $\widebar{Z}^{T_{BB'CC'}} \geq 0$ is equivalent to requiring each individual line to be positive semi-definite. This, in turn, implies that the expressions inside the parentheses on each line must be positive semi-definite, i.e.
\begin{align*}
  &E_1^{T_{BC}} + (r_2 -1)E_2^{T_{BC}} + (r_1 -1)E_3^{T_{BC}} + (r_2 -1)(r_1 -1)E_4^{T_{BC}} \succeq 0,\\
  -&E_1^{T_{BC}} + (r_2 +1)E_2^{T_{BC}} - (r_1 -1)E_3^{T_{BC}} + (r_2 +1)(r_1 -1)E_4^{T_{BC}} \succeq 0,\\
  -&E_1^{T_{BC}} - (r_2 -1)E_2^{T_{BC}} + (r_1 +1)E_3^{T_{BC}} + (r_2 -1)(r_1 +1)E_4^{T_{BC}} \succeq 0,\\
  &E_1^{T_{BC}} - (r_2 +1)E_2^{T_{BC}} - (r_1 +1)E_3^{T_{BC}} + (r_2 +1)(r_1 +1)E_4^{T_{BC}} \succeq 0.
\end{align*}
We remove $E_4^{T_{BC}}$ using Eq. \eqref{causality} as follows: (it should now be clear why we chose to use the partial transpose over $BB'CC'$ rather than $AA'$, as the latter would result in a new operator $(\rho^{A'})^{T_{A}}$, whereas applying it to $\mathds{1}^{BC}$ has no effect)
\begin{align*}
(r_1r_2+r_1+r_2)E_1^{T_{BC}} + r_1(r_2^2-1)E_2^{T_{BC}}
  + r_2(r_1^2-1)E_3^{T_{BC}} +  \rho^{A'}\mathds{1}^{BC}\succeq 0\\
    -(r_1r_2-r_1+r_2)E_1^{T_{BC}} + r_1(r_2^2-1)E_2^{T_{BC}}
  - r_2(r_1^2-1)E_3^{T_{BC}} +  \rho^{A'}\mathds{1}^{BC}\succeq 0\\
    -(r_1r_2+r_1-r_2)E_1^{T_{BC}} - r_1(r_2^2-1)E_2^{T_{BC}}
  + r_2(r_1^2-1)E_3^{T_{BC}} +  \rho^{A'}\mathds{1}^{BC} \succeq 0\\
     (r_1r_2-r_1-r_2)E_1^{T_{BC}} - r_1(r_2^2-1)E_2^{T_{BC}}
  - r_2(r_1^2-1)E_3^{T_{BC}} +  \rho^{A'}\mathds{1}^{BC} \succeq 0,
\end{align*}
These in turn can be equivalently but more compactly written as follows:
\begin{align*}
  - \frac{1}{r_1r_2}\rho^{A'}\mathds{1}^{BC}  - (1+\frac{1}{r_2}+\frac{1}{r_1})E_1^{T_{BC}}\leq \frac{(r_2^2-1)}{r_2}E_2^{T_{BC}}
  + \frac{(r_1^2-1)}{r_1}E_3^{T_{BC}} \leq \frac{1}{r_1r_2} \rho^{A'}\mathds{1}^{BC} + (1-\frac{1}{r_2}-\frac{1}{r_1})E_1^{T_{BC}},\\
 - \frac{1}{r_1r_2}\rho^{A'}\mathds{1}^{BC} +  (1-\frac{1}{r_2}+\frac{1}{r_1})E_1^{T_{BC}}\leq   \frac{(r_2^2-1)}{r_2}E_2^{T_{BC}}
  - \frac{(r_1^2-1)}{r_1}E_3^{T_{BC}} \leq \frac{1}{r_1r_2} \rho^{A'}\mathds{1}^{BC} - (1+\frac{1}{r_2}-\frac{1}{r_1})E_1^{T_{BC}}.
\end{align*}
These are the constraints Eq. \eqref{ppt-a-theorem} in our theorem.

We continue by applying the partial transpose on Bob's systems as in Eq. \eqref{ppt-b}.
\begin{align*}
  \widebar{Z}^{T_{BB'}} =& r_2 (\mathbb{S}^{A_1B'} - \mathbb{A}^{A_1B'}) \phi^{A_2C'} E_1^{T_B} + r_2 (\mathbb{S}^{A_1B'} - \mathbb{A}^{A_1B'})(\mathds{1} - \phi^{A_2C'})E_2^{T_B}\\
                    &+ r_2 \big((r_1 -1)\mathbb{S}^{A_1B'} + (r_1 +1)\mathbb{A}^{A_1B'}\big)\phi^{A_2C'}E_3^{T_B}\\
                    &+ r_2 \big((r_1 -1)\mathbb{S}^{A_1B'} + (r_1 +1)\mathbb{A}^{A_1B'}\big)(\mathds{1} - \phi^{A_2C'})E_4^{T_B}\\
  =& r_2 \phi^{A_2C'}\left((\mathbb{S}^{A_1B'}-\mathbb{A}^{A_1B'})E_1^{T_B} + \big((r_1 -1)\mathbb{S}^{A_1B'} + (r_1 +1)\mathbb{A}^{A_1B'}\big)E_3^{T_B}\right)\\
  &+ r_2 (\mathds{1}-\phi^{A_2C'})\left((\mathbb{S}^{A_1B'}-\mathbb{A}^{A_1B'})E_2^{T_B} + \big((r_1 -1)\mathbb{S}^{A_1B'} + (r_1 +1)\mathbb{A}^{A_1B'}\big)E_4^{T_B}\right)
\end{align*}
Since \(\phi^{A_2C'}\) and \(\mathds{1}-\phi^{A_2C'}\) are orthogonal projectors, one has $\widebar{Z}^{T_{BB'}}\geq 0$ if and only if
\begin{align*}
&(\mathbb{S}^{A_1B'}-\mathbb{A}^{A_1B'})E_1^{T_B} + ((r_1 -1)\mathbb{S}^{A_1B'} + (r_1 +1)\mathbb{A}^{A_1B'})E_3^{T_B} \geq 0 ,\\
&(\mathbb{S}^{A_1B'}-\mathbb{A}^{A_1B'})E_2^{T_B} + ((r_1 -1)\mathbb{S}^{A_1B'} + (r_1 +1)\mathbb{A}^{A_1B'})E_4^{T_B} \geq 0.
\end{align*}
Grouping each line according to $\mathbb{S}^{A_1B'}$ and $\mathbb{A}^{A_1B'}$, and using the orthogonality of these projectors, we find that each line, respectively, is positive semidefinite if and only if the following conditions hold:  
\begin{align*}
    -(r_1 -1)E_3^{T_B} \leq E_1^{T_B} \leq (r_1 +1)E_3^{T_B},\\
    -(r_1 -1)E_4^{T_B} \leq E_2^{T_B} \leq (r_1 +1)E_4^{T_B}.
\end{align*}
We use Eq. \eqref{causality} to omit $E_4^{T_B}$ in the second line as follows:
\begin{align*}
- (   \rho^{A'}\mathds{1}^{BC} - (r_1 ^2 - 1)E_3^{T_B} - E_1^{T_B})    \leq r_1(r_2 ^2-1) E_2^{T_B}
 \leq  \rho^{A'}\mathds{1}^{BC}  - (r_1 ^2 - 1)E_3^{T_B} - E_1^{T_B}.
\end{align*}
The two conditions are Eq. \eqref{ppt-b-theorem} in our theorem. The partial transpose on Chalie's systems $CC'$ follows along the same lines, so that $\widebar{Z}^{T_{CC'}}\geq 0$ if and only if 
\begin{align*}
    -(r_2 -1)E_2^{T_C} \leq E_1^{T_C} \leq (r_2 +1)E_2^{T_C}, \\
     -(r_2 -1)E_4^{T_C} \leq E_3^{T_C} \leq (r_2 +1)E_4^{T_C}.
\end{align*}
Removing $E_4^{T_C}$ in the second line using Eq. \eqref{causality}, we obtain
\begin{align*}
 -( \rho^{A'}\mathds{1}^{BC} - (r_2 ^2 - 1)E_2^{T_C} - E_1^{T_C} ) \leq   r_2(r_1 ^2-1) E_3^{T_C} \leq  \rho^{A'}\mathds{1}^{BC} - (r_2 ^2 - 1)E_2^{T_C} - E_1^{T_C}.
\end{align*}
These are constraints Eq. \eqref{ppt-c-theorem} in our theorem. This concludes the proof.
\end{proof}

\bigskip

\begin{corollary}
   The optimal solution of the SDP in Theorem \ref{full-fledged} corresponds to $F^{\textbf{NS}\cap\textbf{PPT}}(\mathcal{N},r_1,r_2)$. To obtain $F^{\textbf{NS}}(\mathcal{N},r_1,r_2)$, we maximize the expression in Eq. \eqref{channel-fidelity} subject only to the no-signalling conditions Eqs. \eqref{tp+ns}, \eqref{BnototAC-one}, and \eqref{CnototAB-one}. Similarly, to obtain $F^{\textbf{PPT}}(\mathcal{N},r_1,r_2)$, we maximize Eq. \eqref{channel-fidelity} subject only to the PPT-preserving conditions Eqs. \eqref{ppt-a-theorem}, \eqref{ppt-b-theorem}, and \eqref{ppt-c-theorem}, and include $\Tr\rho^{A'}=1$ (the latter is required when \eqref{anottobc} is dropped, as it implicitly enforces trace-preserving condition, see the proof of Theorem \ref{full-fledged} for details).   
\end{corollary}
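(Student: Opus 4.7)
The corollary is essentially a direct specialization of Theorem \ref{full-fledged} to each of the three code classes by retaining precisely the constraints that characterize that class in the twirled Choi picture. My plan is to check this correspondence case by case, while flagging one subtlety in the PPT-only case concerning the trace-preserving condition.

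For the $\textbf{NS}\cap\textbf{PPT}$ case the claim is immediate: the full SDP in Theorem \ref{full-fledged} already imposes the Choi-matrix reformulations of both the no-signalling conditions \eqref{anottobc}--\eqref{cnottoab} (via \eqref{tp+ns}, \eqref{BnototAC-one}, \eqref{CnototAB-one}) and the three partial-transpose conditions \eqref{ppt-a}--\eqref{ppt-c} (via \eqref{ppt-a-theorem}--\eqref{ppt-c-theorem}), together with the baseline CP and causality constraints \eqref{cp-conditions} and \eqref{bcnottoa}. Maximizing \eqref{channel-fidelity} over this feasible set thus returns $F^{\textbf{NS}\cap\textbf{PPT}}(\mathcal{N},r_1,r_2)$ by Theorem \ref{full-fledged}.

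For $F^{\textbf{NS}}(\mathcal{N},r_1,r_2)$, I would observe that the derivations of \eqref{tp+ns}, \eqref{BnototAC-one}, and \eqref{CnototAB-one} inside the proof of Theorem \ref{full-fledged} are \emph{equivalences}, so these three equations together give the exact Choi-matrix translation of the one-to-two no-signalling conditions \eqref{anottobc}--\eqref{cnottoab}. Combined with Lemma \ref{no-signalling-enough}, which shows that the one-to-two conditions imply the two-to-one conditions, they fully characterize NS forward-assisted codes once the baseline \eqref{cp-conditions} and \eqref{bcnottoa} are in place. Dropping the PPT constraints therefore simply removes the PPT requirement, leaving an SDP whose optimum is $F^{\textbf{NS}}(\mathcal{N},r_1,r_2)$. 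Here the trace-preserving condition is still automatically captured by \eqref{tp+ns}, as noted in the proof of Theorem \ref{full-fledged}, so no additional constraint is needed.

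For $F^{\textbf{PPT}}(\mathcal{N},r_1,r_2)$, the main obstacle is a subtlety in how the trace-preserving (TP) condition was implemented in Theorem \ref{full-fledged}. There, TP was absorbed into the NS equation \eqref{tp+ns}, exploiting the fact that under $A\not\to BC$ the TP condition collapses to the uniform form $E_i^{BC}=\mathds{1}^{BC}/r_1^2 r_2^2$. When the NS constraints are dropped, this implicit enforcement is lost, so I would explicitly reinstate trace-preservation by adjoining $\Tr\rho^{A'}=1$, equivalent to \eqref{tp-rho} (which, together with the causality constraint \eqref{causality}, is equivalent to the original TP condition \eqref{TP}). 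Combined with the baseline CP/causality constraints \eqref{cp-conditions}--\eqref{bcnottoa} and the PPT constraints \eqref{ppt-a-theorem}--\eqref{ppt-c-theorem}, this yields a complete SDP characterization of PPT forward-assisted codes, and its optimum is $F^{\textbf{PPT}}(\mathcal{N},r_1,r_2)$.
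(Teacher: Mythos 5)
Your proposal is correct and matches the paper's intended reasoning: the corollary is stated without an explicit proof precisely because it is a direct specialization of Theorem~\ref{full-fledged}, and your case-by-case check correctly tracks which constraints characterize each code class in the twirled-Choi picture. You also correctly pinpoint the one genuine subtlety—that dropping the NS constraint removes the implicit enforcement of trace-preservation via Eq.~\eqref{tp+ns}, requiring $\Tr\rho^{A'}=1$ to be added back for the pure PPT case—which is exactly the point the corollary flags parenthetically; one minor clarification worth keeping in mind is that ``subject only to'' in the corollary should be read as excluding the other class's constraints while always retaining the baseline CP and causality conditions \eqref{cp-conditions}--\eqref{bcnottoa}, as your argument (and the later Proposition~\ref{full-fledged-os-capacity}) makes explicit.
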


\bigskip
In the following we show that the SDP corresponding to $F^{\textbf{NS}}(\mathcal{N},r_1,r_2)$ can be significantly simplified.
\begin{proposition}
\label{no-signalling-fidelity}
For a quantum broadcast channel $\mathcal{N}^{A'\to BC}$, there is a forward-assisted code of size $(r_1,r_2)$, average channel input $\rho^{A'}$ and channel fidelity $f_c$, which is no-signalling if and only if the following SDP has a feasible solution
    \begin{align*}
             f_c &=
    \Tr \Lambda^{A'BC}(N^{A'BC})^{T}\\
   0 &\leq \Lambda^{A'BC} \leq \rho^{A'}\mathds{1}^{BC}, \,  \Lambda^{BC}=\frac{\mathds{1}^{BC}}{r_1^2r_2^2}
\end{align*}
\end{proposition}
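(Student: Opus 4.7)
The proof is a biconditional that I would establish by specializing Theorem~\ref{full-fledged} to NS codes and then either extracting or constructing the auxiliary operators $E_2,E_3$.

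\emph{Forward direction.} Suppose an NS forward-assisted code of size $(r_1,r_2)$ exists with input $\rho^{A'}$ and fidelity $f_c$. Applying Theorem~\ref{full-fledged} with only the NS constraints retained (dropping~\eqref{ppt-a-theorem}--\eqref{ppt-c-theorem}), there exist $E_1,E_2,E_3\ge 0$ satisfying~\eqref{cp-conditions}--\eqref{CnototAB-one} with $f_c=\Tr E_1(N^{A'BC})^T$. Set $\Lambda:=E_1$. Then $\Lambda\ge 0$ is immediate, $\Lambda^{BC}=\mathds{1}^{BC}/(r_1^2 r_2^2)$ is precisely~\eqref{tp+ns} at $i=1$, and $\Lambda\le\rho^{A'}\mathds{1}^{BC}$ follows because $E_2,E_3\ge 0$ combined with~\eqref{bcnottoa} give $\Lambda\le E_1+(r_2^2-1)E_2+(r_1^2-1)E_3\le\rho^{A'}\mathds{1}^{BC}$. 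Thus $(\Lambda,\rho^{A'},f_c)$ is feasible for the simplified SDP with the same objective value.

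\emph{Converse direction.} Given a feasible $(\Lambda,\rho^{A'},f_c)$ of the simplified SDP, I would construct $E_2,E_3\ge 0$ such that, with $E_1:=\Lambda$, all NS constraints of Theorem~\ref{full-fledged} hold. The natural template, which makes the factorization conditions~\eqref{BnototAC-one} and~\eqref{CnototAB-one} automatic, is
\begin{align*}
E_3=\tfrac{1}{r_1^2-1}\bigl(Y^{A'C}\otimes\mathds{1}^B/|B|-\Lambda\bigr),\qquad E_2=\tfrac{1}{r_2^2-1}\bigl(W^{A'B}\otimes\mathds{1}^C/|C|-\Lambda\bigr),
\end{align*}
for auxiliary operators $Y^{A'C}$ and $W^{A'B}$. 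Using $\Lambda^{BC}=\mathds{1}^{BC}/(r_1^2 r_2^2)$, the marginal condition~\eqref{tp+ns} forces $Y^C=|B|\mathds{1}^C/r_2^2$ and $W^B=|C|\mathds{1}^B/r_1^2$, while positivity of $E_2,E_3$ reduces to the domination conditions $Y^{A'C}\otimes\mathds{1}^B/|B|\ge\Lambda$ and $W^{A'B}\otimes\mathds{1}^C/|C|\ge\Lambda$; the sum constraint~\eqref{bcnottoa} further requires $W^{A'B}\otimes\mathds{1}^C/|C|+Y^{A'C}\otimes\mathds{1}^B/|B|-\Lambda\le\rho^{A'}\mathds{1}^{BC}$.

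\emph{Main obstacle.} The crux of the converse is exhibiting $Y^{A'C}$ and $W^{A'B}$ with the prescribed marginals whose factored extensions dominate $\Lambda$. Naive candidates such as $Y^{A'C}=|B|\rho^{A'}\mathds{1}^C/r_2^2$ realize the marginal but demand the stronger inequality $\rho^{A'}\mathds{1}^{BC}/r_2^2\ge\Lambda$, which is not directly given. My plan is therefore to cast the joint existence of $(Y,W)$ as an auxiliary SDP feasibility problem and use SDP duality (or a Farkas-type argument) to show that its feasibility is equivalent exactly to the three constraints on $\Lambda$ in the simplified SDP. An alternative direct approach would combine a factored piece built from $\Lambda^{A'C}$ (resp.\ $\Lambda^{A'B}$) with a positive correction proportional to $\rho^{A'}\mathds{1}^{BC}-\Lambda\ge 0$, tuned to hit both the marginal and domination constraints simultaneously. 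Since the fidelity depends only on $E_1=\Lambda$, matching of optimal values follows once feasibility is established.
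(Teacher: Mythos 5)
Your forward direction is correct and matches the ``only if'' half of the paper's Fourier--Motzkin argument: with $\Lambda=E_1$, positivity and the marginal condition \eqref{tp+ns} are immediate, and $\Lambda\le\rho^{A'}\mathds{1}^{BC}$ follows from \eqref{bcnottoa} after dropping the nonnegative terms $(r_2^2-1)E_2$ and $(r_1^2-1)E_3$.

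The converse direction is where you have a genuine gap. Your reduction is correct: parameterizing $E_2$ and $E_3$ via the factorizations forced by \eqref{BnototAC-one} and \eqref{CnototAB-one}, and using \eqref{tp+ns} and \eqref{bcnottoa}, the existence of a feasible NS code is equivalent to the existence of Hermitian $W^{A'B}$, $Y^{A'C}$ with $W^B=|C|\mathds{1}^B/r_1^2$, $Y^C=|B|\mathds{1}^C/r_2^2$, $W^{A'B}\otimes\mathds{1}^C/|C|\ge\Lambda$, $Y^{A'C}\otimes\mathds{1}^B/|B|\ge\Lambda$, and $W^{A'B}\otimes\mathds{1}^C/|C|+Y^{A'C}\otimes\mathds{1}^B/|B|\le\rho^{A'}\mathds{1}^{BC}+\Lambda$. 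But you then stop: you note that the naive product candidate $Y^{A'C}=|B|\rho^{A'}\mathds{1}^C/r_2^2$ would require the unavailable hypothesis $\rho^{A'}\mathds{1}^{BC}/r_2^2\ge\Lambda$, and instead of constructing $(W,Y)$ you only propose a plan (an auxiliary feasibility SDP plus duality or a Farkas argument, or a tuned correction term) that is never executed. Without that construction the ``if'' half of the biconditional is unproven. This is exactly where the nontrivial content of the Fourier--Motzkin step lives: operator variables subject to several one-sided PSD bounds do not enjoy the Riesz interpolation property, and $W$ and $Y$ are further coupled through the sum constraint, so the elimination cannot be performed one variable at a time and genuinely needs an explicit witness or a completed duality argument. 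The paper's own proof is equally terse (it simply invokes Fourier--Motzkin elimination of $E_2,E_3$), but it asserts that the elimination goes through, whereas your write-up only conjectures that it does.
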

\begin{proof}
   This follows by applying the Fourier-Motzkin elimination to the no-signaling SDP (including Eqs. \eqref{channel-fidelity} to \eqref{CnototAB-one}) in Theorem \ref{full-fledged}, to eliminate the variables $E_2^{A'BC}$ and $E_3^{A'BC}$.
\end{proof}

The key insight from this result is that no-signalling codes with the only no-signalling condition being from the sender to the receivers, i.e., $A\not\to BC$, perform no better than a no-signalling code with full no-signalling among the three parties. However, when the code is further constrained to be PPT-preserving, these no-signalling conditions may limit its effectiveness. Thus, without PPT-preserving restrictions, fully no-signalling codes are as effective as no-signalling codes with $A\not\to BC$. The reason is that the Fourier-Motzkin elimination does not generally work after introducing PPT-preserving conditions.

\medskip

The exact SDP presented in Theorem \ref{full-fledged} is too complex to use for finding bounds on EPR pair generation in quantum networks. In particular, when the code size $(r_1, r_2)$ becomes a variable, Theorem \ref{full-fledged} leads to a highly nonlinear optimization problem, which may even resist relaxations aimed at maximizing the code size. For this reason, the following proposition introduces a relaxation of Theorem \ref{full-fledged}, which proves particularly useful for deriving converse bounds. Indeed, the specific relaxations chosen allow us to establish converse bounds with desirable properties.

\begin{proposition}
\label{fidelity-relax-1}
    For any quantum broadcast channel $\mathcal{N}^{A'\to BC}$, its channel fidelity, assisted with PPT-preserving and no-signalling codes, is bounded as follows: 
\begin{align*}
  F_{\text{PPT} \,\cap\, \text{NS}}(\mathcal{N},r_1,r_2) & \,\leq \, \max \, \Tr \Lambda^{A'BC}(N^{A'BC})^T\\
    & \text{s.t.}   \hspace{0.4cm} 0 \leq \Lambda^{A'BC}\leq  \rho^{A'}\mathds{1}^{BC},\,\,\,\Tr\rho^{A'}=1, \,\,\, \Lambda^{BC}=\frac{1}{r_1^2r_2^2}\mathds{1}^{BC},\\
    &\hspace{0.8cm} - \frac{1}{r_1r_2}\rho^{A'}\mathds{1}^{BC} \leq  (\Lambda^{A'BC})^{T_{BC}} \leq \frac{1}{r_1r_2}\rho^{A'}\mathds{1}^{BC},\\
   & \hspace{0.8cm} - \rho^{A'}\mathds{1}^{BC} \leq  (\Lambda^{A'BC})^{T_{B}} \leq \rho^{A'}\mathds{1}^{BC},\\
      & \hspace{0.8cm} - \rho^{A'}\mathds{1}^{BC} \leq  (\Lambda^{A'BC})^{T_{C}} \leq \rho^{A'}\mathds{1}^{BC}.
\end{align*}
\end{proposition}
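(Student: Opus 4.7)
The plan is to present the proposition as a relaxation of the exact SDP of Theorem \ref{full-fledged}. Specifically, I will set $\Lambda^{A'BC} := E_1^{A'BC}$ and verify that any feasible tuple $(E_1, E_2, E_3, \rho^{A'})$ of the $\textbf{NS}\cap\textbf{PPT}$ SDP produces a feasible $(\Lambda, \rho^{A'})$ for the new SDP with the same objective value. Since the objective $\Tr \Lambda^{A'BC}(N^{A'BC})^T$ is identical to Eq.~\eqref{channel-fidelity}, this immediately yields the upper bound on $F^{\textbf{NS}\cap\textbf{PPT}}(\mathcal{N},r_1,r_2)$.

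The easy constraints come for free: $\Lambda \geq 0$ is Eq.~\eqref{cp-conditions} for $i=1$; the upper bound $\Lambda \leq \rho^{A'}\mathds{1}^{BC}$ follows from Eq.~\eqref{bcnottoa} after discarding the non-negative terms $(r_2^2-1)E_2 + (r_1^2-1)E_3$; the partial-trace constraint $\Lambda^{BC}=\mathds{1}^{BC}/(r_1^2 r_2^2)$ is just Eq.~\eqref{tp+ns} with $i=1$; and $\Tr\rho^{A'}=1$ is the trace-preserving condition \eqref{tp-rho}. The substantive work is to derive the three transpose bounds from the $\textbf{PPT}\textbf{\textsubscript{A,B,C}}$ blocks \eqref{ppt-a-theorem}--\eqref{ppt-c-theorem}.

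For the $T_{BC}$ bound, I would use the two lines of Eq.~\eqref{ppt-a-theorem} separately and chain the outer inequality leftmost $\leq$ rightmost in each. In line one the $E_2, E_3$ contributions drop out and one obtains $2E_1^{T_{BC}} \geq -\tfrac{2}{r_1 r_2}\rho^{A'}\mathds{1}^{BC}$; line two analogously gives $2E_1^{T_{BC}} \leq \tfrac{2}{r_1 r_2}\rho^{A'}\mathds{1}^{BC}$. Together these are exactly $|(\Lambda)^{T_{BC}}| \leq \tfrac{1}{r_1 r_2}\rho^{A'}\mathds{1}^{BC}$. For the $T_B$ bound, summing the leftmost and rightmost of the second line of Eq.~\eqref{ppt-b-theorem} eliminates the $E_2^{T_B}$ term and yields $(r_1^2-1)E_3^{T_B} + E_1^{T_B} \leq \rho^{A'}\mathds{1}^{BC}$. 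Combining this with the upper half of the first line of Eq.~\eqref{ppt-b-theorem}, namely $E_3^{T_B} \geq \tfrac{1}{r_1+1}E_1^{T_B}$, gives $r_1 E_1^{T_B} \leq \rho^{A'}\mathds{1}^{BC}$, so in particular $E_1^{T_B} \leq \rho^{A'}\mathds{1}^{BC}$. For the matching lower bound I would combine the same inequality with the lower half $E_1^{T_B} \geq -(r_1-1)E_3^{T_B}$, which first yields $r_1(r_1-1)E_3^{T_B} \leq \rho^{A'}\mathds{1}^{BC}$ and then $E_1^{T_B} \geq -\tfrac{1}{r_1}\rho^{A'}\mathds{1}^{BC} \geq -\rho^{A'}\mathds{1}^{BC}$. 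The $T_C$ bound follows by an entirely symmetric manipulation of Eq.~\eqref{ppt-c-theorem} under the exchange $B \leftrightarrow C$, $r_1 \leftrightarrow r_2$.

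The main obstacle is the PPT$_B$ (and dually PPT$_C$) piece: unlike PPT$_A$, whose bound drops out cleanly by a symmetric elimination with the optimal constant $1/(r_1 r_2)$, PPT$_B$ requires interleaving one inequality from each of the two lines of \eqref{ppt-b-theorem} together with a positivity trick to bound $E_3^{T_B}$ in terms of $\rho^{A'}\mathds{1}^{BC}$. The argument naturally produces $1/r_1$ rather than $1$ on the right-hand side, so one must note that the proposition states a deliberately loosened bound. A small bookkeeping caveat is the degenerate case $r_1=1$ (and $r_2=1$), where one of the coefficients vanishes; there Eq.~\eqref{ppt-b-theorem} reduces to $0 \leq E_1^{T_B} \leq 2 E_3^{T_B}$ and the $E_2^{T_B}$-elimination directly yields $E_1^{T_B} \leq \rho^{A'}\mathds{1}^{BC}$ together with $E_1^{T_B} \geq 0$, so the stated bound still holds.
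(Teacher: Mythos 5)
Your proof is correct and takes essentially the same route as the paper's: keep $E_1^{A'BC}$ and $\rho^{A'}$, discard the other variables, and extract the surviving constraints from Theorem~\ref{full-fledged} by Fourier--Motzkin-style chaining, where the paper itself states this only as a one-line sketch. Your observation that the $T_B$ and $T_C$ arguments actually yield the sharper inequalities $\abs{(\Lambda^{A'BC})^{T_{B}}} \leq \tfrac{1}{r_1}\rho^{A'}\mathds{1}^{BC}$ and $\abs{(\Lambda^{A'BC})^{T_{C}}} \leq \tfrac{1}{r_2}\rho^{A'}\mathds{1}^{BC}$ is also correct; the proposition records the deliberately looser coefficient $1$, which is all that is needed downstream.
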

\begin{proof}
The relaxation involves eliminating all variables except $E_1^{A'BC}$, denoted as $\Lambda^{A'BC}$, and $\rho^{A'}$. More specifically, for the PPT conditions, we determine the implications of the inequalities on $E_1^{A'BC}$, while for the remaining conditions, we apply Fourier-Motzkin elimination.
\end{proof}

We are now ready to derive the one-shot characterization of quantum communication over quantum broadcast channels assisted by NS or PPT-preserving codes. This derivation will be presented in the next section.

\medskip

\section{Converse bounds for non-asymptotic EPR pair generation}
\label{weak-converse}
In the following, we explicitly formulate the optimization problem to determine the $\varepsilon$-one-shot sum-capacity of a quantum broadcast channel assisted by PPT-preserving and/or no-signalling codes. This problem is highly nonlinear and therefore not an SDP. We then explore various SDP relaxations to approximate its solution. The following proposition is based on Theorem \ref{full-fledged}. Alternatively, one can use the corresponding constraints in Proposition \ref{full-fledged-fourier-motzkin} equivalently.

\begin{proposition}
\label{full-fledged-os-capacity}
For any quantum broadcast channel $\mathcal{N}^{A'\to BC}$ with Choi matrix $N^{A'BC}$ and a specified error $\varepsilon$, the $\varepsilon$-one-shot sum-capacity of the channel, assisted by PPT-preserving and non-signaling codes, can be expressed as the solution of following optimization problem (using the notation of Theorem \ref{full-fledged}):
\begin{align*}
    Q^{\textbf{PPT} \,\cap\, \textbf{NS}}_{1}(\mathcal{N},\varepsilon) = \max &\log (r_1r_2),\\
    &\text{s.t.}\,\,\,\, f_c =
    \Tr E_1^{A'BC}(N^{A'BC})^T\geq 1 -\varepsilon,\\
& \,\,\,\,\,\,\,\,\,\,\,\, \text{and constraints}\,\, \eqref{cp-conditions} \,\,to\,\, \eqref{ppt-c-theorem}\,\, \text{in Theorem}\,\, \ref{full-fledged}.
\end{align*}
To obtain $Q^{\textbf{PPT}}_{1}(\mathcal{N},\varepsilon)$ or $Q^{\textbf{NS}}_{1}(\mathcal{N},\varepsilon)$, we maximize $\log(r_1r_2)$ subject to the appropriate constraints: For $Q^{\textbf{NS}}_{1}(\mathcal{N},\varepsilon)$, the constraints are Eqs. \eqref{cp-conditions} to \eqref{CnototAB-one}. For $Q^{\textbf{PPT}}_{1}(\mathcal{N},\varepsilon)$, the constraints are Eqs. \eqref{ppt-a-theorem} to \eqref{ppt-c-theorem}, and we also include $\Tr\rho^{A'}=1$. In both cases, we also impose the conditions \eqref{cp-conditions} and \eqref{bcnottoa}.
\end{proposition}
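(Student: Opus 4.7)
The plan is to prove the proposition as an essentially direct combination of Definition \ref{one-shot-def} with the SDP feasibility characterization in Theorem \ref{full-fledged}. Concretely, $Q^{\text{PPT} \cap \text{NS}}_1(\mathcal{N},\varepsilon)$ is defined as the supremum of $\log r_1 + \log r_2 = \log(r_1 r_2)$ over pairs $(r_1,r_2)$ such that $F^{\text{PPT} \cap \text{NS}}(\mathcal{N},r_1,r_2) \geq 1-\varepsilon$. Since $F^{\text{PPT} \cap \text{NS}}(\mathcal{N},r_1,r_2)$ is itself a supremum over admissible codes, the capacity equals a joint supremum over $(r_1,r_2)$ together with the underlying code variables—provided the codes are parameterized in a way that plays well with varying $(r_1,r_2)$.

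First, I would recall that Theorem \ref{full-fledged} gives a bijection between PPT-preserving and no-signalling forward-assisted codes of fixed size $(r_1,r_2)$, up to the twirling reduction that leaves the channel fidelity invariant, and feasible tuples $(E_1^{A'BC},E_2^{A'BC},E_3^{A'BC},\rho^{A'})$ satisfying the constraints \eqref{cp-conditions}--\eqref{ppt-c-theorem}. Consequently, for each fixed $(r_1,r_2)$,
\begin{equation*}
F^{\text{PPT} \cap \text{NS}}(\mathcal{N},r_1,r_2) = \sup\bigl\{\Tr E_1^{A'BC}(N^{A'BC})^T : \text{constraints \eqref{cp-conditions}--\eqref{ppt-c-theorem} hold}\bigr\}.
\end{equation*}
Substituting this identity into the definition of $Q^{\text{PPT} \cap \text{NS}}_1(\mathcal{N},\varepsilon)$ and rewriting the existential quantifier over $E_i$ and $\rho^{A'}$ as a joint maximization yields exactly the stated optimization problem, where now $(r_1,r_2)$ appear both in the objective $\log(r_1 r_2)$ and as parameters inside the constraints.

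The arguments for the $\text{NS}$-only and $\text{PPT}$-only variants are identical once one restricts Theorem \ref{full-fledged} to the relevant subset of constraints; I would note only that dropping the $A \not\to BC$ constraint \eqref{tp+ns} removes the implicit enforcement of trace preservation that was absorbed by the no-signalling equalities, which is why the PPT-only formulation must re-add $\Tr \rho^{A'} = 1$, as already explained in the proof of Theorem \ref{full-fledged}.

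The one substantive point worth flagging—though it is not an obstacle to the proof itself—is that the resulting program is genuinely nonlinear in $(r_1,r_2)$: the quadratic factors $(r_1^2-1)$, $(r_2^2-1)$ in \eqref{bcnottoa}, \eqref{BnototAC-one}, \eqref{CnottoAB-one}, the reciprocals in \eqref{ppt-a-theorem}, the $r_1 r_2$ equality constraint $\Lambda^{BC} = \mathds{1}^{BC}/(r_1^2 r_2^2)$ hidden in \eqref{tp+ns}, and the logarithmic objective all prevent a direct SDP reformulation. Thus the proposition should not be read as providing a computable characterization; it is a reformulation whose value lies in preparing the ground for the SDP relaxations (e.g.\ via Proposition \ref{fidelity-relax-1}) developed in the remainder of the section. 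The proof itself requires no new ingredients beyond Theorem \ref{full-fledged} and Definition \ref{one-shot-def}.
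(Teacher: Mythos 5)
Your proof is correct and supplies exactly the argument the paper leaves implicit (the paper states Proposition \ref{full-fledged-os-capacity} without a formal proof, merely noting it is ``based on Theorem \ref{full-fledged}''): combine Definition \ref{one-shot-def} of the $\varepsilon$-one-shot sum-capacity with the if-and-only-if feasibility characterization of Theorem \ref{full-fledged}, turning the nested supremum over code sizes and codes into a single joint maximization over $(r_1,r_2)$ and the operators $(E_1,E_2,E_3,\rho^{A'})$. Your remark on why the PPT-only variant must re-impose $\Tr\rho^{A'}=1$ also matches the explanation given inside the proof of Theorem \ref{full-fledged}.
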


In the following, we present two different relaxations of the above proposition. While these problems remain nonlinear and are therefore not SDPs, they have a transparent formulation that allows for further relaxation into SDPs. The first one below is based on Proposition \ref{fidelity-relax-1}.

\begin{proposition}
\label{non-sdp-relax-1}
    For any quantum broadcast channel $\mathcal{N}^{A'\to BC}$, its $\varepsilon$-one-shot sum-capacity, assisted with PPT-preserving and no-signalling codes, is bounded as $Q^{\textbf{PPT} \,\cap\, \textbf{NS}}_{1}(\mathcal{N},\varepsilon)\leq -\log g(\mathcal{N},\varepsilon)$, where 
\begin{align*}
    g(\mathcal{N},\varepsilon)& = \min \Tr S^{A'}\\
    &\text{s.t.}\,\,\,\,\,\, f_c =
    \Tr \Lambda^{A'BC}(N^{A'BC})^T \geq 1-\varepsilon,\\
     &  \hspace{0.9cm} 0 \leq \Lambda^{A'BC}\leq  \rho^{A'}\mathds{1}^{BC},\,\,\,\Tr\rho^{A'}=1, \Lambda^{BC}=m^2\mathds{1}^{BC},\\
    &\hspace{0.7cm} - S^{A'}\mathds{1}^{BC} \leq  (\Lambda^{A'BC})^{T_{BC}} \leq S^{A'}\mathds{1}^{BC},\\
   & \hspace{0.7cm} - \rho^{A'}\mathds{1}^{BC} \leq  (\Lambda^{A'BC})^{T_{B}} \leq \rho^{A'}\mathds{1}^{BC},\\
      & \hspace{0.7cm} - \rho^{A'}\mathds{1}^{BC} \leq  (\Lambda^{A'BC})^{T_{C}} \leq \rho^{A'}\mathds{1}^{BC}.
\end{align*}
\end{proposition}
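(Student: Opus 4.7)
The plan is to deduce the bound as a direct consequence of Proposition \ref{fidelity-relax-1} together with the $\varepsilon$-one-shot characterization in Proposition \ref{full-fledged-os-capacity}. Concretely, if $(r_1,r_2)$ is $\varepsilon$-one-shot achievable under PPT-preserving and no-signalling codes, then by Proposition \ref{fidelity-relax-1} there must exist $(\Lambda^{A'BC},\rho^{A'})$ satisfying all the constraints listed there together with $\Tr\Lambda^{A'BC}(N^{A'BC})^T \geq 1 - \varepsilon$. The first goal is to reorganize that constraint set so that $r_1 r_2$ no longer appears as an explicit optimization parameter and is instead tracked by an operator-trace objective.

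The key step will be a change of variables. I would introduce $S^{A'} := \frac{1}{r_1 r_2}\rho^{A'}$, which turns the PPT\textsubscript{A} inequality of Proposition \ref{fidelity-relax-1} into $-S^{A'}\mathds{1}^{BC} \leq (\Lambda^{A'BC})^{T_{BC}} \leq S^{A'}\mathds{1}^{BC}$, and rewrite the condition $\Lambda^{BC} = \frac{1}{r_1^2 r_2^2}\mathds{1}^{BC}$ as $\Lambda^{BC} = m^2\mathds{1}^{BC}$ with $m = \frac{1}{r_1 r_2}$. Then I would relax: allow $S^{A'}$ to be any positive operator consistent with the PPT\textsubscript{A} bounds, and allow $m>0$ to be any scalar making $\Lambda^{BC}$ proportional to the identity. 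The PPT\textsubscript{B}, PPT\textsubscript{C}, the operator-inequality $0 \leq \Lambda^{A'BC} \leq \rho^{A'}\mathds{1}^{BC}$, and the normalisation $\Tr\rho^{A'}=1$ are kept unchanged, as they do not involve $r_1 r_2$. Decoupling $S^{A'}$ from $\rho^{A'}$ and $m$ from $r_1 r_2$ enlarges the feasible set, so it can only weaken the constraints on $(\Lambda^{A'BC},\rho^{A'})$.

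To finish, observe that the tight substitution $S^{A'} = \frac{1}{r_1 r_2}\rho^{A'}$ is by construction feasible for the relaxed problem and yields $\Tr S^{A'} = \frac{1}{r_1 r_2}$. Hence $g(\mathcal{N},\varepsilon) \leq \frac{1}{r_1 r_2}$ whenever $(r_1,r_2)$ is $\varepsilon$-one-shot achievable, so $\log(r_1 r_2) \leq -\log g(\mathcal{N},\varepsilon)$. Taking the maximum over such pairs then yields $Q_1^{\text{PPT}\cap\text{NS}}(\mathcal{N},\varepsilon) \leq -\log g(\mathcal{N},\varepsilon)$, as claimed. I do not foresee a serious obstacle here; the substantive analytic work has already been absorbed into Proposition \ref{fidelity-relax-1}, and the only point that deserves explicit verification is that replacing the coupled quantity $\frac{1}{r_1 r_2}\rho^{A'}$ by an unrelated positive operator $S^{A'}$ genuinely relaxes, rather than strengthens, the PPT\textsubscript{A} constraint---which is immediate from the structure of the inequality.
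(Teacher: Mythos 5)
Your proposal is correct and follows essentially the same route as the paper: start from Proposition~\ref{fidelity-relax-1} with the added fidelity threshold, make the change of variables $S^{A'} = \rho^{A'}/(r_1 r_2)$, $m = 1/(r_1 r_2)$, observe that decoupling $S^{A'}$ and $m$ from $\rho^{A'}$ and the rate only enlarges the feasible set, and conclude $g(\mathcal{N},\varepsilon) \leq 1/(r_1 r_2)$ whenever $(r_1, r_2)$ is achievable, hence the logarithmic bound. You are somewhat more explicit than the paper's terse proof about why the substitution genuinely relaxes the program, but there is no substantive difference in method.
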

\begin{proof}
Consider the following nonlinear optimization program inspired by Proposition \ref{fidelity-relax-1}:
\begin{align*}
     \max & \,\,\, -\log \frac{1}{r_1r_2} \\
    &\text{s.t.}\,\,\,\,
    \Tr \Lambda^{A'BC}(N^{A'BC})^T \geq 1-\varepsilon,\\
     &  \hspace{0.9cm} 0 \leq \Lambda^{A'BC}\leq  \rho^{A'}\mathds{1}^{BC},\,\,\,\Tr\rho^{A'}=1, \Lambda^{BC}=\frac{1}{r_1^2r_2^2}\mathds{1}^{BC},\\
    &\hspace{0.7cm} - \frac{1}{r_1r_2} \rho^{A'}\mathds{1}^{BC} \leq  (\Lambda^{A'BC})^{T_{BC}} \leq \frac{1}{r_1r_2}\rho^{A'}\mathds{1}^{BC},\\
   & \hspace{0.7cm} - \rho^{A'}\mathds{1}^{BC} \leq  (\Lambda^{A'BC})^{T_{B}} \leq \rho^{A'}\mathds{1}^{BC},\\
      & \hspace{0.7cm} - \rho^{A'}\mathds{1}^{BC} \leq  (\Lambda^{A'BC})^{T_{C}} \leq \rho^{A'}\mathds{1}^{BC}.
\end{align*}
The first requirement follows from the definition of $\varepsilon$-one-shot sum-capacity. 
We define $m=\frac{1}{r_1r_2}$ and $S^{A'} = m\rho^{A'}$. Hence, $\Tr S^{A'} = m$.
\end{proof}

\bigskip

We present another relaxation based on Theorem \ref{full-fledged-fourier-motzkin}. Although this relaxation remains nonlinear again, it provides valuable insights for the subsequent steps towards a linear program.

\begin{proposition}
\label{non-sdp-relax-2}
    For any quantum broadcast channel $\mathcal{N}^{A'\to BC}$, its $\varepsilon$-one-shot sum-capacity, assisted with PPT-preserving and no-signalling codes, is bounded as $Q^{\textbf{PPT} \,\cap\, \textbf{NS}}_{1}(\mathcal{N},\varepsilon)\leq -\log h(\mathcal{N},\varepsilon)$, where 
\begin{align*}
 h(\mathcal{N},\varepsilon)& = \min \Tr S^{A'}\\
    &\text{s.t.}\,\,\,\,\,\, f_c =
    \Tr E_1^{A'BC}(N^{A'BC})^T \geq 1-\varepsilon,\\
     &\hspace{0.9cm} 0 \leq E^{A'BC}_1\leq \big(E_1^{A' C} + S_3^{A'C}\big) \otimes \frac{\mathds{1}^B}{\abs{B}}, \big(E_1^{A' B} + S_2^{A' B}\big) \otimes \frac{\mathds{1}^C}{\abs{C}},\\
    &\hspace{0.9cm} \big(E_1^{A' B} + S_2^{A' B}\big) \otimes \frac{\mathds{1}^C}{\abs{C}} + \big(E_1^{A' C} + S_3^{A' C}\big) \otimes \frac{\mathds{1}^B}{\abs{B}} \leq  \rho^{A'}\mathds{1}^{BC} + E_1^{A'BC}.\\
    & \hspace{0.9cm}\Tr\rho^{A'}=1, E_1^{BC}=m^2\mathds{1}^{BC},S_3^{C}=m^2\abs{B}\mathds{1}^{C},
     S_2^{B}=m^2\abs{C}\mathds{1}^{B},\\
    &\hspace{0.8cm}- S^{A'}\mathds{1}^{BC} \leq  (E_1^{A'BC})^{T_{BC}} \leq S^{A'}\mathds{1}^{BC},\\
   & \hspace{0.8cm}- \rho^{A'}\mathds{1}^{BC} \leq  (E_1^{A'BC})^{T_{B}} \leq \rho^{A'}\mathds{1}^{BC},\\
      &  \hspace{0.8cm}- \rho^{A'}\mathds{1}^{BC} \leq  (E_1^{A'BC})^{T_{C}} \leq \rho^{A'}\mathds{1}^{BC}.
\end{align*}
\end{proposition}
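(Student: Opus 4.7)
The plan is to derive Proposition \ref{non-sdp-relax-2} as a relaxation of the exact $\varepsilon$-one-shot formulation in Proposition \ref{full-fledged-os-capacity}, following the same strategy as in Proposition \ref{non-sdp-relax-1} but keeping richer auxiliary variables than a single operator $\Lambda^{A'BC}$. Rather than eliminating every $E_i$ except $E_1$, I would eliminate $E_4^{A'BC}$ via the causality equality Eq.~\eqref{causality} and replace the full-sized operators $E_2^{A'BC}, E_3^{A'BC}$ by lower-dimensional placeholders $S_2^{A'B}, S_3^{A'C}$ that retain only the marginal information coupling to $E_1^{A'BC}$ through the remaining no-signalling constraints.

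First, I would perform the same change of variables as in the proof of Proposition \ref{non-sdp-relax-1}: set $m = 1/(r_1 r_2)$ and $S^{A'} = m\,\rho^{A'}$, so that $\Tr \rho^{A'} = 1$ becomes $\Tr S^{A'} = m$ and maximizing $\log(r_1 r_2)$ becomes minimizing $\Tr S^{A'}$. This linearizes the dependence of the PPT-across-$A$ constraint on the code size.

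Second, I would introduce $S_2^{A'B}$ and $S_3^{A'C}$ to stand in for the combinations that appear in the no-signalling equalities \eqref{CnototAB-one} and \eqref{BnototAC-one}, namely
\begin{align*}
E_1^{A'BC} + (r_1^2-1) E_3^{A'BC} &= \bigl(E_1^{A'C} + (r_1^2-1) E_3^{A'C}\bigr) \otimes \mathds{1}^B/\abs{B}, \\
E_1^{A'BC} + (r_2^2-1) E_2^{A'BC} &= \bigl(E_1^{A'B} + (r_2^2-1) E_2^{A'B}\bigr) \otimes \mathds{1}^C/\abs{C}.
\end{align*}
Combined with $E_2, E_3 \geq 0$, these immediately yield the two upper bounds on $E_1^{A'BC}$ in the proposition. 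Adding the two equalities and then substituting the causality identity Eq.~\eqref{bcnottoa} gives the third line of the statement after discarding the nonpositive $-(r_1^2-1)(r_2^2-1)\,E_4^{A'BC}$ term. The marginal conditions $E_1^{BC} = m^2 \mathds{1}^{BC}$, $S_2^B = m^2\abs{C}\mathds{1}^B$, and $S_3^C = m^2\abs{B}\mathds{1}^C$ then follow from the $A \not\to BC$ equality \eqref{tp+ns} by tracing over the appropriate subsystem, with any remaining prefactors absorbed into the definitions of $S_2^{A'B}$ and $S_3^{A'C}$.

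Third, I would relax the three PPT blocks \eqref{ppt-a-theorem}, \eqref{ppt-b-theorem}, \eqref{ppt-c-theorem} by retaining only the weakest consequences on $E_1^{A'BC}$ alone: the $T_{BC}$ cut yields $-\tfrac{1}{r_1 r_2}\rho^{A'}\mathds{1}^{BC} \leq (E_1^{A'BC})^{T_{BC}} \leq \tfrac{1}{r_1 r_2}\rho^{A'}\mathds{1}^{BC}$, which in the new variables reads $\pm S^{A'}\mathds{1}^{BC}$, while the $T_B$ and $T_C$ cuts weaken to the coarser $\pm\rho^{A'}\mathds{1}^{BC}$ bounds on the single partial transpose of $E_1^{A'BC}$. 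Since each step only enlarges the feasible set while preserving the objective, the optimal value $-\log h(\mathcal{N},\varepsilon)$ of the resulting program upper bounds $Q_1^{\text{PPT} \cap \text{NS}}(\mathcal{N}, \varepsilon)$. The main obstacle is careful bookkeeping: verifying the constant factors in the definitions of $S_2^{A'B}$ and $S_3^{A'C}$, and checking that each PPT inequality of Theorem~\ref{full-fledged-fourier-motzkin} is genuinely slackened rather than accidentally tightened when the $E_2,E_3,E_4$ contributions are dropped. A cleaner alternative is direct Fourier--Motzkin elimination, as used in Proposition \ref{non-sdp-relax-1}, applied only to $E_4^{A'BC}$ and to the $BC$-marginals of $E_2,E_3$; this mechanically produces every inequality in the statement.
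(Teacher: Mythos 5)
Your proposal takes essentially the same route as the paper's proof: both start from the exact formulation of Theorem~\ref{full-fledged} (the paper explicitly works from its Fourier--Motzkin rewriting, Proposition~\ref{full-fledged-fourier-motzkin}), introduce $m = 1/(r_1 r_2)$ and $S^{A'} = m\rho^{A'}$, introduce the auxiliary operators $S_2^{A'B} = (r_2^2-1)E_2^{A'B}$ and $S_3^{A'C} = (r_1^2-1)E_3^{A'C}$, derive the structural inequalities on $E_1^{A'BC}$ from the no-signalling equalities plus the causality identity, and loosen the PPT blocks exactly as in Proposition~\ref{non-sdp-relax-1}. Your closing remark that direct Fourier--Motzkin elimination gives the same result is precisely what the paper does.

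The one step you handle too glibly is the claim that the marginal equalities $S_2^{B} = m^2\abs{C}\mathds{1}^{B}$ and $S_3^{C} = m^2\abs{B}\mathds{1}^{C}$ ``follow \ldots{} with any remaining prefactors absorbed into the definitions of $S_2^{A'B}$ and $S_3^{A'C}$.'' This absorption does not work: the definitions $S_2 = (r_2^2-1)E_2^{A'B}$ and $S_3 = (r_1^2-1)E_3^{A'C}$ are forced by the first three lines (the upper bounds $E_1 \leq (E_1^{A'C}+S_3^{A'C})\otimes\mathds{1}^B/\abs{B}$, etc.), and with those definitions the constraint~\eqref{tp+ns} yields $S_3^{C} = (r_1^2-1)m^2\abs{B}\mathds{1}^{C}$ and $S_2^{B} = (r_2^2-1)m^2\abs{C}\mathds{1}^{B}$, not the values in the proposition. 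Rescaling $S_2$, $S_3$ to hit the stated marginals would then break the upper bounds on $E_1^{A'BC}$. The paper's own one-line proof is silent on this point as well, but a self-contained argument cannot simply ``absorb'' the $(r_i^2-1)$ factors; it would need either to weaken those two equality constraints to one-sided bounds (which still serve the downstream propositions) or to justify that any feasible point of the exact program admits some other choice of feasible $S_2$, $S_3$ matching the stated marginals.
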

\begin{proof}
    This relaxation is obtained using Proposition \ref{full-fledged-fourier-motzkin}. The first line follows from the definition of $\varepsilon$-one-shot sum-capacity. We let $m=1/r_1r_2$ and define operators $S^{A'} = m\rho^{A'}$, $S_2^{A'B}=(r_2^2-1)E_{2}^{A'B}$, and $S_3^{A'B}=(r_1^2-1)E_{3}^{A'C}$. Then the first three lines are derived according to the first three constraints in Proposition \ref{full-fledged-fourier-motzkin}. The last three lines are derived similar to last three lines of Proposition \ref{non-sdp-relax-1}.
\end{proof}

The optimization problems in Propositions \ref{non-sdp-relax-1} and \ref{non-sdp-relax-2} are not SDPs because of the nonlinear term $m^2$ emerging from the no-signalling conditions. In the following, we present a hierarchy of SDP relaxations for each optimization problem. Specifically, we propose two distinct approaches to handle the nonlinear variable $m^2$ and derive the relaxations, inspired by \cite{wang-q}. The first approach consists in replacing $m^2$ with a variable $0 \leq t \leq 1$. 
Based on the nonlinear programs in Propositions \ref{non-sdp-relax-1} and \ref{non-sdp-relax-2}, we denote the SDPs resulting from this relaxation method as $\widetilde{g}(\mathcal{N},\varepsilon)$ and $\widetilde{h}(\mathcal{N},\varepsilon)$ corresponding to each proposition, respectively.

The second approach involves introducing an initial estimate for the value of $m$, possibly derived from Proposition \ref{full-fledged-os-capacity}, denoted by $\widehat{m}$, which satisfies the inequality
\begin{align*}
  Q^{\textbf{PPT} \cap \textbf{NS}}_{1}(\mathcal{N},\varepsilon) \leq -\log \widehat{m}.   
\end{align*}  
This initial value, $\widehat{m}$, is incorporated into the optimization problems from Propositions \ref{non-sdp-relax-1} and \ref{non-sdp-relax-2} by introducing a variable $t$ that searches in $\widehat{m}^2 \leq t < 1$ and improves upon the initial estimate of $\widehat{m}$. The value of $\widehat{m}$ is then refined iteratively. This results in the new relaxations $\widehat{g}(\mathcal{N},\varepsilon)$ and $\widehat{h}(\mathcal{N},\varepsilon)$, respectively. Although this method can achieve tighter bounds, it requires an initial step to find an estimate for $\widehat{m}$, and its refinement is gradual, necessitating repeated iterations of the program to obtain improved results.

Before presenting the relation between these upper bounds, we express them more formally as follows in the order they were explained:
\begin{align*}
    \widetilde{g}(\mathcal{N},\varepsilon)& = \min \Tr S^{A'}\\
    &\text{s.t.}\,\,\,\,\,\, f_c =
    \Tr E_1^{A'BC}(N^{A'BC})^{T} \geq 1-\varepsilon,\\
     &  \hspace{0.9cm} 0 \leq E^{A'BC}_1\leq  \rho^{A'}\mathds{1}^{BC},\,\,\,\Tr\rho^{A'}=1, E_1^{BC}=t\mathds{1}^{BC},\\
    &\hspace{0.7cm} - S^{A'}\mathds{1}^{BC} \leq  (E_1^{A'BC})^{T_{BC}} \leq S^{A'}\mathds{1}^{BC},\\
   & \hspace{0.7cm} - \rho^{A'}\mathds{1}^{BC} \leq  (E_1^{A'BC})^{T_{B}} \leq \rho^{A'}\mathds{1}^{BC},\\
      & \hspace{0.7cm} - \rho^{A'}\mathds{1}^{BC} \leq  (E_1^{A'BC})^{T_{C}} \leq \rho^{A'}\mathds{1}^{BC}.
\\
\\
 \widetilde{h}(\mathcal{N},\varepsilon)& = \min \Tr S^{A'}\\
    &\text{s.t.}\,\,\,\,\,\, f_c =
    \Tr E_1^{A'BC}(N^{A'BC})^{T} \geq 1-\varepsilon,\\
     &\hspace{0.9cm} 0 \leq E^{A'BC}_1\leq \big(E_1^{A' C} + S_3^{A'C}\big) \otimes \frac{\mathds{1}^B}{\abs{B}}, \big(E_1^{A' B} + S_2^{A' B}\big) \otimes \frac{\mathds{1}^C}{\abs{C}},\\
    &\hspace{0.9cm} \big(E_1^{A' B} + S_2^{A' B}\big) \otimes \frac{\mathds{1}^C}{\abs{C}} + \big(E_1^{A' C} + S_3^{A' C}\big) \otimes \frac{\mathds{1}^B}{\abs{B}} \leq  \rho^{A'}\mathds{1}^{BC} + E_1^{A'BC}.\\
    & \hspace{0.9cm}\Tr\rho^{A'}=1, E_1^{BC}=t\mathds{1}^{BC},S_3^{C}=t\abs{B}\mathds{1}^{C},
     S_2^{B}=t\abs{C}\mathds{1}^{B},\\
    &\hspace{0.8cm}- S^{A'}\mathds{1}^{BC} \leq  (E_1^{A'BC})^{T_{BC}} \leq S^{A'}\mathds{1}^{BC},\\
   & \hspace{0.8cm}- \rho^{A'}\mathds{1}^{BC} \leq  (E_1^{A'BC})^{T_{B}} \leq \rho^{A'}\mathds{1}^{BC},\\
      &  \hspace{0.8cm}- \rho^{A'}\mathds{1}^{BC} \leq  (E_1^{A'BC})^{T_{C}} \leq \rho^{A'}\mathds{1}^{BC}.
\\
\\
    \widehat{g}(\mathcal{N},\varepsilon)& = \min \Tr S^{A'}\\
    &\text{s.t.}\,\,\,\,\,\, f_c =
    \Tr E_1^{A'BC}(N^{A'BC})^{T} \geq 1-\varepsilon,\\
     &  \hspace{0.9cm} 0 \leq E^{A'BC}_1\leq  \rho^{A'}\mathds{1}^{BC},\,\,\,\Tr\rho^{A'}=1, E_1^{BC}=t\mathds{1}^{BC}, t\geq \widehat{m}^2,\\
    &\hspace{0.7cm} - S^{A'}\mathds{1}^{BC} \leq  (E_1^{A'BC})^{T_{BC}} \leq S^{A'}\mathds{1}^{BC},\\
   & \hspace{0.7cm} - \rho^{A'}\mathds{1}^{BC} \leq  (E_1^{A'BC})^{T_{B}} \leq \rho^{A'}\mathds{1}^{BC},\\
      & \hspace{0.7cm} - \rho^{A'}\mathds{1}^{BC} \leq  (E_1^{A'BC})^{T_{C}} \leq \rho^{A'}\mathds{1}^{BC}.
      \end{align*}
\begin{align*}
 \widehat{h}(\mathcal{N},\varepsilon)& = \min \Tr S^{A'}\\
    &\text{s.t.}\,\,\,\,\,\, f_c =
    \Tr E_1^{A'BC}(N^{A'BC})^{T} \geq 1-\varepsilon,\\
     &\hspace{0.9cm} 0 \leq E^{A'BC}_1\leq \big(E_1^{A' C} + S_3^{A'C}\big) \otimes \frac{\mathds{1}^B}{\abs{B}}, \big(E_1^{A' B} + S_2^{A' B}\big) \otimes \frac{\mathds{1}^C}{\abs{C}},\\
    &\hspace{0.9cm} \big(E_1^{A' B} + S_2^{A' B}\big) \otimes \frac{\mathds{1}^C}{\abs{C}} + \big(E_1^{A' C} + S_3^{A' C}\big) \otimes \frac{\mathds{1}^B}{\abs{B}} \leq  \rho^{A'}\mathds{1}^{BC} + E_1^{A'BC}.\\
    & \hspace{0.9cm}\Tr\rho^{A'}=1, E_1^{BC}=t\mathds{1}^{BC},S_3^{C}=t\abs{B}\mathds{1}^{C},
     S_2^{B}=t\abs{C}\mathds{1}^{B},   t \geq \widehat{m}^2 ,\\
    &\hspace{0.8cm}- S^{A'}\mathds{1}^{BC} \leq  (E_1^{A'BC})^{T_{BC}} \leq S^{A'}\mathds{1}^{BC},\\
   & \hspace{0.8cm}- \rho^{A'}\mathds{1}^{BC} \leq  (E_1^{A'BC})^{T_{B}} \leq \rho^{A'}\mathds{1}^{BC},\\
      &  \hspace{0.8cm}- \rho^{A'}\mathds{1}^{BC} \leq  (E_1^{A'BC})^{T_{C}} \leq \rho^{A'}\mathds{1}^{BC}.
\end{align*}

We present the comparison of the bounds derived so far in the following theorem.
\begin{theorem}
\label{comparison}
    For any quantum broadcast channel $\mathcal{N}^{A'\to BC}$, we obtain the following chain of inequalities among the upper bounds on its $\varepsilon$-one-shot sum-capacity, assisted with PPT-preserving and no-signalling codes:
    \begin{align*}
      Q^{(1)}(\mathcal{N},\varepsilon)&\leq  
      Q^{\textbf{PPT} \,\cap\, \textbf{NS}}_{1}(\mathcal{N},\varepsilon) \\
      &\leq -\log \widehat{h}(\mathcal{N},\varepsilon) \\ 
      & \leq -\log \widetilde{h}(\mathcal{N},\varepsilon)\\
     & \leq -\log \widetilde{g}(\mathcal{N},\varepsilon).
    \end{align*}
    Also
     \begin{align*}
      Q^{(1)}(\mathcal{N},\varepsilon)&\leq  
      Q^{\textbf{PPT} \,\cap\, \textbf{NS}}_{1}(\mathcal{N},\varepsilon) \\
      &\leq -\log \widehat{h}(\mathcal{N},\varepsilon)\\
      &\leq -\log \widehat{g}(\mathcal{N},\varepsilon) \\ 
     & \leq -\log \widetilde{g}(\mathcal{N},\varepsilon),
    \end{align*}
    Note that the two sets of inequalities only differ in their last two lines.
\end{theorem}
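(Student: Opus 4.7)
The plan is to verify the six individual inequalities that make up the two chains. Each one is either an inclusion of code classes, a direct feasibility argument carried over from Propositions~\ref{non-sdp-relax-1} and~\ref{non-sdp-relax-2}, or a straightforward comparison of the feasible sets of two SDPs.

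For $Q^{(1)} \leq Q^{\textbf{PPT} \cap \textbf{NS}}_{1}$, any unassisted code is a special case of a PPT-preserving and no-signalling code, so the supremum over the larger class cannot be smaller. For the crucial step $Q^{\textbf{PPT} \cap \textbf{NS}}_{1} \leq -\log \widehat{h}$, I would start from the true optimum of Proposition~\ref{full-fledged-os-capacity} with $m^* = 1/(r_1 r_2) = 2^{-Q^{\textbf{PPT}\cap\textbf{NS}}_{1}(\mathcal{N},\varepsilon)}$, perform the substitutions $S^{A'} = m^* \rho^{A'}$, $S_2^{A'B} = (r_2^2-1) E_2^{A'B}$, $S_3^{A'C} = (r_1^2-1) E_3^{A'C}$, $t = (m^*)^2$, and check that the resulting tuple is $\widehat{h}$-feasible. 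All structural and partial-transpose constraints were shown in the derivation of Proposition~\ref{non-sdp-relax-2} to be implied by the constraints of the true problem; the only new condition is $t \geq \widehat{m}^2$, which follows from $\widehat{m} \leq 2^{-Q} = m^*$, i.e.\ from the defining assumption $Q^{\textbf{PPT}\cap\textbf{NS}}_{1} \leq -\log \widehat{m}$. The objective then evaluates to $\Tr S^{A'} = m^*$, giving $\widehat{h} \leq m^*$ and hence the desired bound.

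The remaining four inequalities are pure SDP-relaxation statements. Dropping the constraint $t \geq \widehat{m}^2$ enlarges the feasible set and therefore cannot increase the optimum, giving $\widetilde{h} \leq \widehat{h}$ and $\widetilde{g} \leq \widehat{g}$. For the two $g$-versus-$h$ comparisons, the key step is that the constraints of $\widetilde{h}$ already imply the sandwich $0 \leq E_1^{A'BC} \leq \rho^{A'}\mathds{1}^{BC}$ that defines $\widetilde{g}$: summing the two upper bounds
\begin{align*}
E_1^{A'BC} &\leq (E_1^{A'C} + S_3^{A'C}) \otimes \frac{\mathds{1}^B}{\abs{B}},\\
E_1^{A'BC} &\leq (E_1^{A'B} + S_2^{A'B}) \otimes \frac{\mathds{1}^C}{\abs{C}},
\end{align*}
and invoking the third structural constraint of $\widetilde{h}$ yields $2 E_1^{A'BC} \leq \rho^{A'}\mathds{1}^{BC} + E_1^{A'BC}$, so that $E_1^{A'BC} \leq \rho^{A'}\mathds{1}^{BC}$. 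Since the partial-transpose sandwiches, the normalization $\Tr \rho^{A'} = 1$, the trace identity $E_1^{BC} = t\mathds{1}^{BC}$, and the fidelity lower bound appear verbatim in both programs, any $\widetilde{h}$-feasible point (after projecting out $S_2^{A'B}, S_3^{A'C}$) is $\widetilde{g}$-feasible with the same objective, giving $\widetilde{g} \leq \widetilde{h}$. The same derivation with $t \geq \widehat{m}^2$ retained gives $\widehat{g} \leq \widehat{h}$, completing both chains.

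The main obstacle is the second inequality $Q^{\textbf{PPT}\cap\textbf{NS}}_{1} \leq -\log \widehat{h}$: one must carefully verify that the full set of substitutions from Proposition~\ref{non-sdp-relax-2} produces a $\widehat{h}$-feasible point, and in particular that the scalar constraint $t \geq \widehat{m}^2$ does not accidentally exclude the true optimum. The remaining inequalities are set-inclusions requiring only elementary algebra on partial transposes and no technical ideas beyond those already used in the bipartite analysis of \cite{wang-q}.
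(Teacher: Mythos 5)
Your proposal follows essentially the same route as the paper: the first inequality is an inclusion of code classes; the second exhibits the true optimizer (after the substitutions $S^{A'}=m\rho^{A'}$, $S_2^{A'B}=(r_2^2-1)E_2^{A'B}$, $S_3^{A'C}=(r_1^2-1)E_3^{A'C}$, $t=m^2$) as a feasible point for $\widehat{h}$; and the remaining four inequalities compare feasible sets. You actually improve on the paper in the $g$-versus-$h$ comparisons by spelling out why every $\widetilde{h}$- (resp.\ $\widehat{h}$-) feasible point yields a $\widetilde{g}$- (resp.\ $\widehat{g}$-) feasible one: summing the two upper bounds on $E_1^{A'BC}$ and invoking the third structural constraint gives $2E_1^{A'BC}\leq\rho^{A'}\mathds{1}^{BC}+E_1^{A'BC}$, hence $E_1^{A'BC}\leq\rho^{A'}\mathds{1}^{BC}$, where the paper only asserts "smaller feasible set."

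However, your claim that "all structural and partial-transpose constraints were shown in the derivation of Proposition~\ref{non-sdp-relax-2} to be implied by the constraints of the true problem" is not actually verified there, and the feasibility check you defer to it fails as written. In Proposition~\ref{full-fledged-fourier-motzkin} the no-signalling conditions force $E_2^B=\frac{\abs{C}}{r_1^2r_2^2}\mathds{1}^B=m^2\abs{C}\mathds{1}^B$ and $E_3^C=\frac{\abs{B}}{r_1^2r_2^2}\mathds{1}^C=m^2\abs{B}\mathds{1}^C$. With your substitutions one then gets $S_2^B=(r_2^2-1)m^2\abs{C}\mathds{1}^B$ and $S_3^C=(r_1^2-1)m^2\abs{B}\mathds{1}^C$, whereas the SDP $\widehat{h}$ demands the equalities $S_2^B=t\abs{C}\mathds{1}^B$ and $S_3^C=t\abs{B}\mathds{1}^C$ with $t=m^2$. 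These match only when $r_1=r_2=\sqrt{2}$, so the constructed point is not $\widehat{h}$-feasible in general, and the bound $\widehat{h}\leq m^*$ is not established. This gap is inherited directly from the paper, whose proof also says only that "it is easy to verify" feasibility without addressing these two equality constraints; the trace conditions on $S_2,S_3$ in Proposition~\ref{non-sdp-relax-2} and its hatted/tilded variants appear to be mistyped or extraneous. If those two equalities are dropped (or corrected to involve $(r_1^2-1)$ and $(r_2^2-1)$ factors), the rest of your argument — including your explicit $g$-versus-$h$ comparisons, which do not rely on them — goes through unchanged.
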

\begin{proof}
    In the first set of inequalities, the first inequality is trivial by definition. The second inequality follows because the optimal solution for $Q^{\textbf{PPT} \,\cap\, \textbf{NS}}_{1}(\mathcal{N},\varepsilon)$ is a feasible point for $\widehat{h}(\mathcal{N},\varepsilon)$. Specifically, let the optimal solution of $Q^{\textbf{PPT} \,\cap\, \textbf{NS}}_{1}(\mathcal{N},\varepsilon)$ be given by  
\[
(E_1^{A'BC}, E_2^{A'B}, E_3^{A'C}, \rho^{A'}, m = \frac{1}{r_1r_2}).
\]  
(Remember that this SDP arises as a relaxation of Proposition \ref{full-fledged-fourier-motzkin}.) Define  
\begin{align*}
  S_2^{A'B} = (r_2^2-1)E_2^{A'B}, \quad S_3^{A'C} = (r_1^2-1)E_3^{A'C}, \quad S^{A'} = m\rho^{A'}.  
\end{align*}  
It is easy to verify that $\{S_2^{A'B}, S_3^{A'C}, S^{A'}, \rho^{A'}, m\}$ forms a feasible solution to the SDP defining $\widehat{h}(\mathcal{N},\varepsilon)$. Therefore,
\begin{align*}
  \widehat{h}(\mathcal{N},\varepsilon) \leq \Tr S^{A'} = m,  
\end{align*} 
which implies 
\begin{align*}
 Q^{\textbf{PPT} \,\cap\, \textbf{NS}}_{1}(\mathcal{N},\varepsilon) = -\log m \leq -\log \widehat{h}(\mathcal{N},\varepsilon).   
\end{align*}
The third inequality follows easily, as the minimization is performed over a smaller feasible set, which results in a larger value, leading to a higher rate or a looser upper bound.  

For the second set of inequalities, the first and second inequalities are identical to those in the first set. The third and fourth inequalities hold for the same reason: the minimization is carried out over a smaller feasible set.
\end{proof}

\begin{remark}
    Concerning Theorem \ref{comparison}, the only case that is not immediately clear is whether $\widetilde{h}(\mathcal{N},\varepsilon) \lessgtr \widehat{g}(\mathcal{N},\varepsilon)$. On one hand, $\widetilde{h}(\mathcal{N},\varepsilon)$ is obtained by minimizing over a smaller set of feasible points. On the other hand, $\widehat{g}(\mathcal{N},\varepsilon)$ refines its value iteratively. Consequently, the final result may depend on the specific properties of the channel.
\end{remark}
We now turn to strong converse bounds. As we will see shortly, obtaining strong converses requires additional relaxations. This is expected, as strong converses impose significantly stronger constraints on the communication rates.

\bigskip

\section{Strong converse bound for EPR pair generation over quantum broadcast channels}
\label{sec:strong-converse}
We begin with the following relaxation as an upper bound on the $\varepsilon$-one-shot sum-capacity $Q^{\textbf{PPT} \cap \textbf{NS}}_{1}(\mathcal{N},\varepsilon)$. This relaxation builds on the SDP formulation introduced in Proposition \ref{non-sdp-relax-1} and leads to a strong converse similar to the one in \cite{wang-q} for point-to-point channels. However, our bound is tighter, as it incorporates a specific no-signaling condition while preserving the required properties.

\begin{proposition}
    \label{non-sdp-relax-for-strong-converse}
    For any quantum broadcast channel $\mathcal{N}^{A'\to BC}$, the $\varepsilon$-one-shot sum-capacity, assisted by PPT-preserving and no-signalling codes, is bounded as follows:
    \begin{align*}
    Q^{\textbf{PPT} \,\cap\, \textbf{NS}}_{1}(\mathcal{N},\varepsilon)&\leq - \log \min m \\
    &\text{s.t.}\,\,\,\,\,\,
    \Tr \Lambda^{A'BC}(N^{A'BC})^T \geq 1-\varepsilon,\\
     &  \hspace{0.9cm}  \Lambda^{A'BC},\, \rho^{A'} \geq 0,\,\,\,\Tr\rho^{A'}=1, \,\,\,\Lambda^{BC}=m^2\mathds{1}^{BC},\\
    &\hspace{0.7cm} - m \rho^{A'}\mathds{1}^{BC} \leq  (\Lambda^{A'BC})^{T_{BC}} \leq m\rho^{A'}\mathds{1}^{BC}.
\end{align*}
\end{proposition}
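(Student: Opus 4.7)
The plan is to exhibit, for every feasible code of size $(r_1,r_2)$ for $Q^{\textbf{PPT}\cap\textbf{NS}}_{1}(\mathcal{N},\varepsilon)$, a corresponding feasible point of the new SDP with $m = 1/(r_1 r_2)$. Taking the supremum over codes then gives $Q^{\textbf{PPT}\cap\textbf{NS}}_{1}(\mathcal{N},\varepsilon) = \max\log(r_1 r_2) \leq -\log \min m$, which is the claim. The construction mirrors the one used in the proof of Proposition \ref{non-sdp-relax-1}: set $\Lambda^{A'BC} := E_1^{A'BC}$ (the first block of the twirled Choi matrix of Theorem \ref{full-fledged}), let $\rho^{A'}$ be the average channel input state, and take $m := 1/(r_1 r_2)$. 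The easy constraints, namely positivity of $\Lambda^{A'BC}$ and $\rho^{A'}$, the normalization $\Tr \rho^{A'} = 1$, the fidelity inequality, and the identity $\Lambda^{BC} = m^{2}\mathds{1}^{BC}$, descend directly from the corresponding constraints of Theorem \ref{full-fledged}, in particular Eqs. \eqref{cp-conditions} and \eqref{tp+ns}.

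The only nontrivial step is the two-sided partial-transpose bound $-m\rho^{A'}\mathds{1}^{BC} \leq (\Lambda^{A'BC})^{T_{BC}} \leq m\rho^{A'}\mathds{1}^{BC}$, which I would extract from the PPT\textsubscript{A} condition exactly as already done upstream of Proposition \ref{non-sdp-relax-1}. Concretely, one combines the four PSD inequalities arising from $\widebar{Z}^{T_{BB'CC'}} \geq 0$ with weights $(r_1 \pm 1)(r_2 \pm 1)$ (the choices $(+,+)$ with $(-,-)$ give the lower bound on $E_1^{T_{BC}}$, and $(+,-)$ with $(-,+)$ give the upper bound), and then uses the causality equation \eqref{causality}, after partial transposition over $BC$, to eliminate $E_{2}^{T_{BC}}$, $E_{3}^{T_{BC}}$ and $E_{4}^{T_{BC}}$. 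The cross-terms cancel and one is left with exactly $|E_1^{T_{BC}}| \leq \frac{1}{r_1 r_2}\rho^{A'}\mathds{1}^{BC}$, which is the desired $T_{BC}$ constraint with $m = 1/(r_1 r_2)$.

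The remaining content is purely conceptual: the new SDP differs from $g(\mathcal{N},\varepsilon)$ of Proposition \ref{non-sdp-relax-1} by (i) dropping the upper bound $\Lambda^{A'BC} \leq \rho^{A'}\mathds{1}^{BC}$ and the PPT\textsubscript{B}, PPT\textsubscript{C} bounds on $\Lambda^{T_{B}}$, $\Lambda^{T_{C}}$, which only enlarges the feasible set, and (ii) reparametrizing the matrix variable $S^{A'}$ as $m\rho^{A'}$, which is lossless since $S^{A'} \geq 0$ is automatic from $\Lambda \geq 0$ and the $T_{BC}$ inequality (apply $\Tr_{BC}$), and under this parametrization $\Tr S^{A'} = m$. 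Hence $\min m$ in the new program is no larger than $g(\mathcal{N},\varepsilon)$, and the bound follows. I do not anticipate any real obstacle: the analytical heavy lifting was already completed in Theorem \ref{full-fledged} and Proposition \ref{non-sdp-relax-1}, and this proposition is essentially a cleanly packaged relaxation designed so that the remaining constraints are compatible with the tensor-product structure that will be needed for the additivity and strong-converse arguments of Section \ref{sec:strong-converse}.
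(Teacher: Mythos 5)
Your proof is correct and takes essentially the same route as the paper: the paper's own proof is the single sentence ``This formulation is similar to the SDP in Proposition \ref{non-sdp-relax-1}, but with certain constraints removed,'' and your argument simply makes this explicit by exhibiting, for each PPT$\cap$NS code of size $(r_1,r_2)$ with fidelity $\geq 1-\varepsilon$, the feasible point $\Lambda^{A'BC}=E_1^{A'BC}$, $\rho^{A'}$ the average input, $m=1/(r_1r_2)$, and by rederiving the two-sided bound $\abs{E_1^{T_{BC}}}\leq\frac{1}{r_1r_2}\rho^{A'}\mathds{1}^{BC}$ from the four PPT$_A$ blocks combined with the causality equation \eqref{causality}, exactly as is already done implicitly in the chain Theorem \ref{full-fledged} $\to$ Proposition \ref{fidelity-relax-1} $\to$ Proposition \ref{non-sdp-relax-1}.

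One small caveat, which does not affect the validity of the proof: the closing ``conceptual'' claim that $\min m \leq g(\mathcal{N},\varepsilon)$ because the reparametrization $S^{A'}=m\rho^{A'}$ is lossless does not actually go through. In the new program the scalar $m$ is already pinned down by $\Lambda^{BC}=m^2\mathds{1}^{BC}$, so for a generic feasible point $(\Lambda,\rho,S,m)$ of $g(\mathcal{N},\varepsilon)$ one cannot in general set $\rho'=S/\Tr S$ and have $\abs{\Lambda^{T_{BC}}}\leq m\rho'\mathds{1}^{BC}$ with objective value $\Tr S$; this would require $m\geq\Tr S$, which is not guaranteed. Fortunately this comparison is not needed: the direct construction from a code already establishes $\min m \leq 1/(r_1r_2)$ for every admissible code, which gives the claimed inequality, and is precisely the mechanism the paper relies on.
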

\begin{proof}
This formulation is similar to the SDP in Proposition \ref{non-sdp-relax-1}, but with certain constraints removed.
\end{proof}

\bigskip
\bigskip
We shall then continue as follows:

\begin{proposition}
\label{strong-1}
    For any quantum broadcast channel $\mathcal{N}^{A'\to BC}$ and a specified error $\varepsilon$, the $\varepsilon$-one-shot sum-capacity of the channel, assisted by PPT-preserving and no-signalling codes, is bounded as follows:
\begin{align}
\label{one-shot-converse}
    Q^{\textbf{PPT} \,\cap\, \textbf{NS}}_{1}(\mathcal{N},\varepsilon) \leq Q_{\Gamma}(\mathcal{N}) - \log(1-\varepsilon),
    \end{align}
    where $Q_{\Gamma}(\mathcal{N}) = \log\Gamma(\mathcal{N})$, and
    \begin{align}
    \nonumber
        \text{(Primal)}\,\,\,\,\,\,\,\,\,  \Gamma(\mathcal{N})  = &\max \, \Tr R^{A'BC}(N^{A'BC})^T\\ \label{primal}
    & \text{s.t.}
       \hspace{0.4cm} \, R^{A'BC},\rho^{A'}\geq 0,\,\,\,\Tr\rho^{A'}=1,\,\,\, R^{BC} = m\mathds{1}^{BC},\\
       \nonumber
    &\hspace{0.6cm} \,\,\,- \rho^{A'}\mathds{1}^{BC} \leq  (R^{A'BC})^{T_{BC}} \leq \rho^{A'}\mathds{1}^{BC},
    \end{align}
\begin{align}
\nonumber
        \text{(Dual)}\,\,\,\,\,\,\,\,\,  \Gamma(\mathcal{N}) & =  \min \, \mu\\
        \nonumber
    & \text{s.t.}
       \hspace{0.3cm} (N^{A'BC})^T \leq (V^{A'BC} - Y^{A'BC})^{T_{BC}} + \mathds{1}^{A'}W^{BC},\\ \label{dual}
    &\hspace{0.7cm} \Tr_{BC}\{V^{A'BC} + Y^{A'BC}\}\leq \mu\mathds{1}^{A'},\\   \nonumber
    &\hspace{0.7cm} \Tr W^{BC} \leq 0,\\ \nonumber
    &\hspace{0.8cm} Y^{A'BC}, V^{A'BC}\geq 0. 
    \end{align}
\end{proposition}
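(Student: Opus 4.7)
The plan is to prove \eqref{one-shot-converse} by combining the relaxation in Proposition~\ref{non-sdp-relax-for-strong-converse} with a rescaling that linearizes the nonlinear dependence on $m$, and then to derive the dual \eqref{dual} from the primal \eqref{primal} via standard SDP Lagrangian duality.

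For the inequality, I start from any feasible triple $(\Lambda^{A'BC}, \rho^{A'}, m)$ of the program in Proposition~\ref{non-sdp-relax-for-strong-converse}. The fidelity constraint $\Tr \Lambda^{A'BC}(N^{A'BC})^{T}\ge 1-\varepsilon$, combined with $\Lambda^{BC}=m^{2}\mathds{1}^{BC}$ and $\Lambda^{A'BC}\geq 0$, forces $m>0$ whenever $\varepsilon<1$, so the substitution $R^{A'BC} := \Lambda^{A'BC}/m$ is well-defined. Under this rescaling, $R^{A'BC}\geq 0$, $R^{BC} = m\mathds{1}^{BC}$, and the sandwich $\abs{(\Lambda^{A'BC})^{T_{BC}}} \leq m\rho^{A'}\mathds{1}^{BC}$ becomes the linear constraint $\abs{(R^{A'BC})^{T_{BC}}} \leq \rho^{A'}\mathds{1}^{BC}$. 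Thus $(R,\rho,m)$ is feasible for \eqref{primal}, with objective $\Tr R^{A'BC}(N^{A'BC})^{T} = \tfrac{1}{m}\Tr \Lambda^{A'BC}(N^{A'BC})^{T}\geq (1-\varepsilon)/m$. Hence $\Gamma(\mathcal{N})\geq (1-\varepsilon)/m$, so $m\geq (1-\varepsilon)/\Gamma(\mathcal{N})$. Taking $-\log$ and the minimum over $m$ gives $-\log \min m \leq \log\Gamma(\mathcal{N})-\log(1-\varepsilon)$, which together with Proposition~\ref{non-sdp-relax-for-strong-converse} yields \eqref{one-shot-converse}.

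For the primal--dual equivalence, I would form the Lagrangian with multipliers $\mu\in\mathbb{R}$ for $\Tr\rho^{A'}=1$, Hermitian $W^{BC}$ for the equality $\Tr_{A'}R^{A'BC}=m\mathds{1}^{BC}$, $V^{A'BC},Y^{A'BC}\geq 0$ for the two half-space constraints on $(R^{A'BC})^{T_{BC}}$, and an implicit $w\geq 0$ for $m\geq 0$ (which is forced by $R^{A'BC}\geq 0$). Using the identities $\Tr[V(R)^{T_{BC}}] = \Tr[(V)^{T_{BC}} R]$ and $\Tr[W^{BC} R^{BC}] = \Tr[(\mathds{1}^{A'}\otimes W^{BC})R^{A'BC}]$, the Lagrangian collects into $\mu$ plus terms linear in $R^{A'BC}$, $\rho^{A'}$, and $m$. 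Finiteness of the supremum over $R^{A'BC}\geq 0$ enforces $(N^{A'BC})^{T}\leq (V^{A'BC}-Y^{A'BC})^{T_{BC}}+\mathds{1}^{A'}W^{BC}$; finiteness over $\rho^{A'}\geq 0$ enforces $\Tr_{BC}(V^{A'BC}+Y^{A'BC})\leq \mu\mathds{1}^{A'}$; and finiteness over the scalar $m$ gives the coefficient condition $\Tr W^{BC}+w=0$, which after eliminating $w\geq 0$ is equivalent to $\Tr W^{BC}\leq 0$. Strong duality follows from Slater's condition, verified by the strictly feasible primal point $R^{A'BC} = \mathds{1}^{A'BC}/|A'BC|$, $\rho^{A'} = \mathds{1}^{A'}/|A'|$, $m=1/|BC|$ whenever $|BC|>1$.

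The main obstacle I anticipate is the careful sign bookkeeping for $W^{BC}$: because the scalar $m$ is constrained to be nonnegative (implicitly, via $R^{A'BC}\geq 0$) rather than free in $\mathbb{R}$, the dual condition on $W^{BC}$ is the inequality $\Tr W^{BC}\leq 0$ rather than the equality one would obtain for a genuinely free $m$. This asymmetry is the only subtle point; everything else reduces to routine SDP algebra.
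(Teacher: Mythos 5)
Your proposal is correct and follows essentially the same route as the paper's own proof: rescale $\Lambda^{A'BC}\mapsto R^{A'BC}=\Lambda^{A'BC}/m$ to land in the feasible set of \eqref{primal}, conclude $\Gamma(\mathcal{N})\geq (1-\varepsilon)/m$, and derive \eqref{dual} via the Lagrangian with the same assignment of multipliers; your remark that $m\geq 0$ (implicit through $R^{A'BC}\geq 0$) is what produces $\Tr W^{BC}\leq 0$ rather than an equality is a fair clarification of the same bookkeeping the paper performs, and your Slater point matches theirs (up to your extra, innocuous hypothesis $\abs{B}\abs{C}>1$).
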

\begin{proof}
Suppose the optimal solution to the optimization problem in Proposition \ref{non-sdp-relax-for-strong-converse} is achieved at $\{\Lambda^{A'BC}, \rho^{A'}, m \}$, so that $Q^{\textbf{PPT} \,\cap\, \textbf{NS}}_{1}(\mathcal{N},\varepsilon)\leq - \log \min m$. Define $R^{A'BC} = \frac{1}{m}\Lambda^{A'BC}$, and observe that $\{R^{A'BC}, \rho^{A'}, m\}$ constitutes a feasible solution to the
Primal program for determining $\Gamma(\mathcal{N})$. Therefore,
\begin{align*}
    Q_{\Gamma}(\mathcal{N}) &\geq \log \Tr R^{A'BC}(N^{A'BC})^T\\
    & \geq \log \frac{1}{m} \Tr \Lambda^{A'BC}(N^{A'BC})^T \\
    &\geq \log \frac{1}{m} (1-\varepsilon)\\
    & = -\log m  + \log (1-\varepsilon) \\
    & \geq Q^{\textbf{PPT} \,\cap\, \textbf{NS}}_{1}(\mathcal{N},\varepsilon) + \log (1-\varepsilon).
\end{align*}
This concludes the proof of the upper bound. 
The dual SDP is derived using the Lagrange multiplier method, where positive semi-definite operators are associated with inequality constraints, and Hermitian operators are assigned to equality constraints. Specifically, the operators $V^{A'BC},Y^{A'BC}\geq 0$ are assigned to the inequality constraints, while the Hermitian operator $W^{BC}$ and the real multiplier $\mu$ correspond to the equality constraints, as appropriate. We obtain the following Lagrangian:
\begin{align*}
    \Gamma (\mathcal{N}) = &\Tr R^{A'BC}(N^{A'BC})^T \\
    &+ \mu(1-\Tr\rho^{A'}) \\
    & + \Tr{\mathds{1}^{A'}W^{BC}\left(m\abs{A'}^{-1}\mathds{1}^{A'BC} - R^{A'BC}\right)} \\
    & + \Tr{Y^{A'BC}\left( \rho^{A'}\mathds{1}^{BC} +  (R^{A'BC})^{T_{BC}} \right)} \\
   & + \Tr{V^{A'BC}\left( \rho^{A'}\mathds{1}^{BC} -  (R^{A'BC})^{T_{BC}} \right)} \\
     = &\Tr\Big\{R^{A'BC}\big((N^{A'BC})^T - \mathds{1}^{A'}W^{BC} + (Y^{A'BC} -V^{A'BC})^{T_{BC}}\big)\Big\} \\
    &+  \Tr\Big\{\rho^{A'}\big(\Tr_{BC}\big\{(Y^{A'BC} + V^{A'BC})^{T_{BC}} \big\} - \mu\mathds{1}^{A'}\big)\Big\} \\
    & + \mu +  m \Tr W^{BC}.
\end{align*}
The terms multiplied to the primal variables are required to be negative semi-definite and less than or equal to zero, as appropriate. Therefore, the dual SDP involves minimizing $\mu$ subject to
\begin{align*}
     (N^{A'BC})^T - \mathds{1}^{A'}W^{BC} + (Y^{A'BC} -V^{A'BC})^{T_{BC}} &\leq 0, \\
     \Tr_{BC}\big\{(Y^{A'BC} + V^{A'BC})^{T_{BC}} \big\} - \mu\mathds{1}^{A'}&\leq 0, \\
    \Tr W^{BC} &\leq 0.
\end{align*}
Thus, the dual constraints involve the above conditions along with $V^{A'BC},Y^{A'BC}\geq 0$. Note that the Hermicity of $W^{BC}$ is implicitly assumed. Moreover, the primal problem satisfies a strict feasibility condition. For example, one could choose 
\begin{align*}
    R^{A'BC} = \frac{\mathds{1}^{A'BC}}{\abs{A'}\abs{B}\abs{C}}, \quad \rho^{A'} = R^{A'}, \quad m = \frac{1}{\abs{B}\abs{C}}.
\end{align*}
Slater's theorem therefore ensures that strong duality holds (so that the optimal values of the primal and dual problems are equal, and the dual problem has an attainable (finite) optimal solution).
\end{proof}

The following proposition shows that the quantity $Q_{\Gamma}(\mathcal{N})$ is additive: 

\begin{proposition}
\label{additivity}
    For any quantum broadcast channels $\mathcal{N}_1^{A'BC}$ and $\mathcal{N}_2^{A'BC}$, the following holds:
    \begin{align*}
        Q_{\Gamma}(\mathcal{N}_1\otimes \mathcal{N}_2) = Q_{\Gamma}(\mathcal{N}_1) + Q_{\Gamma}(\mathcal{N}_2),
    \end{align*}
    where $Q_{\Gamma}(\mathcal{N})$ is defined in Proposition \ref{strong-1}. 
\end{proposition}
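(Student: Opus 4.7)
The plan is to establish the equivalent multiplicativity $\Gamma(\mathcal{N}_1\otimes\mathcal{N}_2)=\Gamma(\mathcal{N}_1)\,\Gamma(\mathcal{N}_2)$, which upon taking logarithms yields additivity of $Q_\Gamma=\log\Gamma$. I split this into two inequalities: supermultiplicativity via the primal \eqref{primal}, and submultiplicativity via the dual \eqref{dual}.

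For $\Gamma(\mathcal{N}_1\otimes\mathcal{N}_2)\geq\Gamma(\mathcal{N}_1)\Gamma(\mathcal{N}_2)$, take primal-optimal triples $(R_i,\rho_i,m_i)$ and submit the product candidate $R=R_1\otimes R_2$, $\rho=\rho_1\otimes\rho_2$, $m=m_1m_2$. Positivity of $R,\rho$, the normalization $\Tr\rho=1$, and $R^{B_1C_1B_2C_2}=m\mathds{1}^{BC}$ are immediate. The two-sided partial-transpose bound follows by summing the four PSD operators $(\rho_1\mathds{1}\pm R_1^{T_{B_1C_1}})\otimes(\rho_2\mathds{1}\pm R_2^{T_{B_2C_2}})$ with matching or opposite sign choices: the $(+,+)+(-,-)$ combination yields $\rho_1\rho_2\mathds{1}^{BC}+R^{T_{BC}}\geq 0$, while $(+,-)+(-,+)$ yields $\rho_1\rho_2\mathds{1}^{BC}-R^{T_{BC}}\geq 0$. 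The objective factorizes as $\Tr R_1 N_1^T\cdot\Tr R_2 N_2^T=\Gamma(\mathcal{N}_1)\Gamma(\mathcal{N}_2)$.

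For the reverse inequality, start from dual-optimal quadruples $(V_i,Y_i,W_i,\mu_i)$ with $\mu_i=\Gamma(\mathcal{N}_i)$ (attained by the strong duality already noted in the proof of Proposition~\ref{strong-1}), and propose
\[
V^{(12)}=V_1\otimes V_2+Y_1\otimes Y_2,\qquad Y^{(12)}=V_1\otimes Y_2+Y_1\otimes V_2.
\]
Both are positive semidefinite as sums of PSD tensor products, and satisfy the algebraic identities $V^{(12)}-Y^{(12)}=(V_1-Y_1)\otimes(V_2-Y_2)$ and $V^{(12)}+Y^{(12)}=(V_1+Y_1)\otimes(V_2+Y_2)$. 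The partial-trace bound $\Tr_{B_1C_1B_2C_2}(V^{(12)}+Y^{(12)})\leq\mu_1\mu_2\mathds{1}^{A_1'A_2'}$ then follows by tensoring the individual bounds using that both factors are PSD.

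The main obstacle will be the construction of a Hermitian $W^{(12)}$ on $B_1C_1B_2C_2$ with $\Tr W^{(12)}\leq 0$ such that $N_1^T\otimes N_2^T\leq(V^{(12)}-Y^{(12)})^{T_{B_1C_1B_2C_2}}+\mathds{1}^{A_1'A_2'}W^{(12)}$. Setting $P_i:=(V_i-Y_i)^{T_{B_iC_i}}+\mathds{1}^{A_i'}W_i\geq N_i^T\geq 0$ gives $P_1\otimes P_2\geq N_1^T\otimes N_2^T$, so it suffices to dominate the residual cross-term operator $\mathcal{X}:=P_1\otimes\mathds{1}^{A_2'}W_2+\mathds{1}^{A_1'}W_1\otimes P_2-\mathds{1}^{A_1'A_2'}W_1\otimes W_2$ by $\mathds{1}^{A_1'A_2'}W^{(12)}$. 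The trace of $\mathcal{X}$ is automatically non-positive, since $\Tr P_i\geq 0$ and $\Tr W_i\leq 0$ render each of the three summands non-positive. My plan is to follow the point-to-point template of \cite{wang-q}, defining $W^{(12)}$ as a linear combination of $W_1\otimes W_2$ with block-trace-normalized projections of $P_i$ onto the $A_i'$-trivial subspace, and tuning the coefficients so that the operator domination holds while $\Tr W^{(12)}\leq 0$ is preserved. The construction is expected to go through unchanged for the broadcast channel because the partial transpose $T_{BC}$ and the $\mathds{1}^{A'}$ factor structure both tensorize identically across the two channels, leaving the bipartite template intact.
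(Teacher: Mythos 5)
Your supermultiplicativity argument via the primal \eqref{primal} is correct and matches the paper's; the tensoring of the two-sided bound is exactly the lemma the paper invokes, and your $(\pm,\pm)$ sum-decomposition is a fine way to reprove it. The choice of $V^{(12)}$, $Y^{(12)}$ and the partial-trace bound $\Tr_{BC}(V^{(12)}+Y^{(12)})\le\mu_1\mu_2\mathds{1}$ also match the paper's proof of submultiplicativity.

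The gap is exactly where you flag it: the construction of $W^{(12)}$. Your plan rests on a false premise. The SDP in \cite{wang-q} has \emph{no} marginal constraint $R^{BC}=m\mathds{1}^{BC}$, hence no dual variable $W$ at all (the present paper makes precisely this point when comparing to \cite[Theorem~7]{wang-q}), so there is no ``point-to-point template'' to follow. More substantively, the residual operator $\mathcal{X}=P_1\otimes\mathds{1}^{A_2'}W_2+\mathds{1}^{A_1'}W_1\otimes P_2-\mathds{1}^{A_1'A_2'}W_1\otimes W_2$ depends nontrivially on $A_1'A_2'$ through $P_1,P_2$, whereas the dual variable must have the product form $\mathds{1}^{A_1'A_2'}\otimes W^{(12)}$ with $W^{(12)}$ Hermitian on $B_1C_1B_2C_2$. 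Replacing $P_i$ by its block-trace-normalized $A_i'$-average is not an operator upper bound for $P_i$, so the domination $\mathcal{X}\le\mathds{1}^{A'}W^{(12)}$ would not follow from that substitution; and $\Tr\mathcal{X}\le 0$ is only a necessary condition, not sufficient. You do not invoke the relation $-\mathds{1}^{A'}W_i\le(V_i-Y_i)^{T_{BC}}$ (a consequence of $N_i^T\ge 0$ and dual feasibility), which is the ingredient the paper singles out as crucial for verifying the last dual constraint; some use of it, possibly together with enlarging $V^{(12)},Y^{(12)}$ to absorb the $A'$-dependent part of the cross terms, is needed to close the argument. As written, the submultiplicativity direction is unproven.
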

\begin{proof}
    We only need to demonstrate the multiplicativity of the function $\Gamma(\mathcal{N})$, i.e. $\Gamma(\mathcal{N}_1\otimes \mathcal{N}_2) = \Gamma(\mathcal{N}_1)\Gamma(\mathcal{N}_2)$. To establish super-multiplicativity, from the primal program Eq. \eqref{primal}, suppose the optimal solutions for $\Gamma(\mathcal{N}_1)$ and $\Gamma(\mathcal{N}_2)$ are given by $\{R_1,\rho^{A'}_1,m_1\}$ and $\{R_2,\rho^{A'}_2,m_2\}$, respectively. It can then be verified that $\{R_1 \otimes R_2,\rho^{A'}_1\otimes \rho^{A'}_2,m_1m_2\}$ is a feasible solution for the program corresponding to $\Gamma(\mathcal{N}_1\otimes \mathcal{N}_2)$. In particular, we use the fact that for Hermitian operators $A_1,A_2,B_1$ and $B_2$, if $-A_1\leq B_1 \leq A_1$ and $-A_2\leq B_2 \leq A_2$, then it follows that $-A_1 \otimes A_2 \leq B_1 \otimes B_2 \leq A_1 \otimes A_2$. Consequently, we conclude that
    \begin{align*}
        \Gamma(\mathcal{N}_1\otimes \mathcal{N}_2) &\geq \Tr{(R_1 \otimes R_2)(N_1\otimes N_2)}\\
        & = \Gamma(\mathcal{N}_1)\Gamma(\mathcal{N}_2).
    \end{align*}
    To show that $\Gamma(\mathcal{N})$ is sub-multiplicative, we utilize the dual formulation of the problem. Let $\{V_1^{A'BC},Y_1^{A'BC},W_1^{BC},\mu_1\}$ and $\{V_2^{A'BC},Y_2^{A'BC},W_2^{BC},\mu_2\}$ denote the optimal solutions of the dual SDP in Eq. \eqref{dual} corresponding to $\mathcal{N}_1$ and $\mathcal{N}_2$, respectively. Consider the following variables:
    \begin{align*}
        V &= V_1\otimes V_2 + Y_1 \otimes Y_2,\,\, 
        Y = V_1 \otimes Y_2 \otimes Y_1 \otimes V_2, \\
        W &= V_1 \otimes \mathds{1}^{A'} W^{BC}_2 - Y_1 \otimes \mathds{1}^{A'} W^{BC}_2 + \mathds{1}^{A'}W^{BC}_2 \otimes V_2 - \mathds{1}^{A'}W^{BC}_2 \otimes Y_2  + \mathds{1}^{A'}W^{BC}_1 \otimes \mathds{1}^{A'}W^{BC}_2.
    \end{align*}
    It can be verified that these variables constitute a feasible solution for $\Gamma(\mathcal{N}_1\otimes \mathcal{N}_2)$. To establish the feasibility of the last line, we utilize the relation $- \mathds{1}^{A'}W^{BC} \leq  (V^{A'BC} -Y^{A'BC})^{T_{BC}}$. Hence,
    \begin{align*}
        \Gamma(\mathcal{N}_1\otimes \mathcal{N}_2) \leq  \Gamma(\mathcal{N}_1)\Gamma(\mathcal{N}_2).
    \end{align*}
    This completes the proof of the additivity of $\Gamma(\mathcal{N})$. 
\end{proof}

We are now ready to present the main results of this section. We first prove that $Q_{\Gamma}(\mathcal{N})$ is a weak converse bound on the asymptotic sum-capacity of the quantum broadcast channel $\mathcal{N}$, and subsequently we demonstrate that $Q_{\Gamma}(\mathcal{N})$ is, in fact, a strong converse bound.
\begin{theorem}
    For any quantum broadcast channel $\mathcal{N}^{A'\to BC}$, we have
    \begin{align*}
        Q^{\textbf{UN}}(\mathcal{N}) \leq Q^{\textbf{PPT} \,\cap\, \text{NS}}(\mathcal{N}) \leq Q_{\Gamma}(\mathcal{N}),
    \end{align*}
    where $Q_{\Gamma}(\mathcal{N})$ is defined in Proposition \ref{strong-1}.
\end{theorem}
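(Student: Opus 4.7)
The plan is to combine the one-shot bound of Proposition \ref{strong-1} with the additivity of $Q_\Gamma$ from Proposition \ref{additivity}, and then take the asymptotic limits prescribed in \eqref{asym-sum-capacity}.

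The first inequality $Q^{\textbf{UN}}(\mathcal{N}) \leq Q^{\textbf{PPT} \,\cap\, \textbf{NS}}(\mathcal{N})$ is essentially tautological: an unassisted code has the product form $\mathcal{Z} = \mathcal{D}^{C\to C'}_c \otimes \mathcal{D}^{B\to B'}_b \otimes \mathcal{E}^{A\to A'}$, whose Choi matrix factorizes as $D_c^{CC'} \otimes D_b^{BB'} \otimes E^{AA'}$. This factorization immediately satisfies the no-signalling Choi-matrix conditions Eqs.~\eqref{anottobc}--\eqref{cnottoab} and the PPT-preserving conditions Eqs.~\eqref{ppt-a}--\eqref{ppt-c}, since partial transpose distributes over tensor products and each local factor is CP. Hence $\textbf{UN} \subseteq \textbf{PPT} \cap \textbf{NS}$ as classes of codes, giving $F^{\textbf{UN}}(\mathcal{N}, r_1, r_2) \leq F^{\textbf{PPT} \cap \textbf{NS}}(\mathcal{N}, r_1, r_2)$ for every code size and broadcast channel, and therefore the corresponding inequality at the level of $\varepsilon$-one-shot and asymptotic sum-capacities.

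For the second inequality, I apply Proposition \ref{strong-1} to the $n$-fold tensor product $\mathcal{N}^{\otimes n}$, which is again a quantum broadcast channel from $A'^{\otimes n}$ to $B^{\otimes n} C^{\otimes n}$:
\begin{align*}
    Q^{\textbf{PPT} \cap \textbf{NS}}_{1}(\mathcal{N}^{\otimes n},\varepsilon) \leq Q_\Gamma(\mathcal{N}^{\otimes n}) - \log(1-\varepsilon).
\end{align*}
By Proposition \ref{additivity} applied inductively, $Q_\Gamma(\mathcal{N}^{\otimes n}) = n\, Q_\Gamma(\mathcal{N})$. Dividing by $n$ yields
\begin{align*}
    \frac{1}{n} Q^{\textbf{PPT} \cap \textbf{NS}}_{1}(\mathcal{N}^{\otimes n},\varepsilon) \leq Q_\Gamma(\mathcal{N}) - \frac{\log(1-\varepsilon)}{n}.
\end{align*}
Taking $n \to \infty$ followed by $\varepsilon \to 0$ according to \eqref{asym-sum-capacity}, the slack $-\log(1-\varepsilon)/n$ vanishes and produces $Q^{\textbf{PPT} \cap \textbf{NS}}(\mathcal{N}) \leq Q_\Gamma(\mathcal{N})$, as desired.

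The substantive content has been loaded into Propositions \ref{strong-1} and \ref{additivity}; granted those, the present theorem is a routine ``single-letter bound plus additivity'' asymptotic converse, and I do not anticipate any genuine obstacle. It is worth noting in passing that the slack $-\log(1-\varepsilon)/n$ vanishes for \emph{every} fixed $\varepsilon \in [0,1)$ as $n \to \infty$, so the same manipulation already gives $\limsup_{n\to\infty} \tfrac{1}{n} Q^{\textbf{PPT} \cap \textbf{NS}}_{1}(\mathcal{N}^{\otimes n},\varepsilon) \leq Q_\Gamma(\mathcal{N})$ without invoking the outer $\varepsilon \to 0$ limit; this observation is what will be promoted to a strong converse in Section~\ref{sec:strong-converse}.
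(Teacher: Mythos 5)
Your proof is correct and follows essentially the same route as the paper: the first inequality is the trivial containment of code classes, and the second comes from applying the one-shot bound of Proposition \ref{strong-1} to $\mathcal{N}^{\otimes n}$, invoking the additivity of $Q_\Gamma$ from Proposition \ref{additivity}, normalizing by $n$, and taking the two limits. The extra detail you supply for the first inequality (that the product Choi matrix of an unassisted code manifestly satisfies the NS and PPT constraints) is a harmless elaboration of what the paper simply calls trivial.
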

\begin{proof}
    The inequality on the left-hand-side is trivial. For the inequality on the right-hand-side, we regularize both sides of Eq. \eqref{one-shot-converse} as follows:
    \begin{align*}
        Q^{\textbf{PPT} \,\cap\, \textbf{NS}}(\mathcal{N}) 
        &= \lim_{\varepsilon\to 0}\lim_{n\to\infty} Q^{\textbf{PPT} \,\cap\, \textbf{NS}}_{1}(\mathcal{N}^{\otimes n},\varepsilon) \\
         &\leq \lim_{\varepsilon\to 0}\lim_{n\to\infty} \frac{1}{n} \big [Q_{\Gamma}(\mathcal{N}^{\otimes n}) - \log(1-\varepsilon)\big]\\
         & = Q_{\Gamma}(\mathcal{N}),
    \end{align*}
    where the last equality follows from the additivity of $Q_{\Gamma}(\mathcal{N})$ as per Proposition \ref{additivity}.
\end{proof}

\medskip

We now present our strong converse bound, which serves as the key result of this section:

\begin{theorem}
\label{strong-converse}
    For any quantum broadcast channel $\mathcal{N}^{A'\to BC}$, let $R_1$ and $R_2$ represent the asymptotic rates from the sender to receiver $B$ and receiver $C$, respectively. If $R_1 + R_2 \geq Q_{\Gamma}(\mathcal{N})$,
    then the pair $(R_1,R_2)$ forms a strong converse bound for the capacity region of $\mathcal{N}^{A'\to BC}$, with $Q_{\Gamma}(\mathcal{N})$ as defined in Proposition \ref{strong-1}.
\end{theorem}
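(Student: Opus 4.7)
The plan is to combine the one-shot bound of Proposition \ref{strong-1} with the additivity of $Q_\Gamma$ from Proposition \ref{additivity}, exactly the two-ingredient recipe that yields strong converses from SDP-style bounds with a logarithmic error term. Concretely, I would apply Proposition \ref{strong-1} not to $\mathcal{N}$ but to the $n$-fold tensor product channel $\mathcal{N}^{\otimes n}$, giving
\begin{align*}
    Q^{\textbf{PPT}\,\cap\,\textbf{NS}}_{1}(\mathcal{N}^{\otimes n},\varepsilon_n) \,\leq\, Q_{\Gamma}(\mathcal{N}^{\otimes n}) - \log(1-\varepsilon_n),
\end{align*}
where $\varepsilon_n$ is the error of the best length-$n$ code at the given rate. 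Invoking Proposition \ref{additivity} collapses $Q_{\Gamma}(\mathcal{N}^{\otimes n}) = n\,Q_{\Gamma}(\mathcal{N})$, turning this into a single-letter bound on the $n$-shot quantity.

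Next I would express the rate condition in terms of $\varepsilon_n$. Suppose a sequence of forward-assisted PPT$\cap$NS codes of size $(r_{1,n},r_{2,n})$ with $\log r_{1,n}\to n R_1$ and $\log r_{2,n}\to n R_2$ achieves entanglement fidelity $1-\varepsilon_n$ over $\mathcal{N}^{\otimes n}$. By Definition \ref{one-shot-def}, $\log(r_{1,n}r_{2,n}) \leq Q^{\textbf{PPT}\,\cap\,\textbf{NS}}_{1}(\mathcal{N}^{\otimes n},\varepsilon_n)$. Combining with the displayed inequality and rearranging yields
\begin{align*}
    \log(1-\varepsilon_n) \,\leq\, n\bigl(Q_{\Gamma}(\mathcal{N}) - (R_{1,n}+R_{2,n})\bigr),
\end{align*}
where $R_{i,n} = \tfrac{1}{n}\log r_{i,n}$. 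If the asymptotic rates satisfy $R_1 + R_2 > Q_{\Gamma}(\mathcal{N})$, then for all sufficiently large $n$ the right-hand side is bounded above by $-n\delta$ for some $\delta>0$, so that $1-\varepsilon_n \leq 2^{-n\delta}$. Thus the fidelity decays exponentially and $\varepsilon_n \to 1$. Since the capacity region under $\text{PPT}\cap\text{NS}$ contains the unassisted capacity region, the same conclusion applies to unassisted codes. The boundary case $R_1+R_2 = Q_{\Gamma}(\mathcal{N})$ is handled by a standard continuity/limiting argument (take rates strictly larger and let them approach the threshold).

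The proof is essentially a two-line consequence of Propositions \ref{strong-1} and \ref{additivity}, so there is no serious obstacle; the nontrivial content has already been discharged in establishing those two results, in particular the multiplicativity of $\Gamma(\mathcal{N})$ via the primal/dual SDP manipulation. The only mild care point is bookkeeping: making sure that the one-shot bound of Proposition \ref{strong-1}, which is stated for the sum-capacity $Q^{\textbf{PPT}\,\cap\,\textbf{NS}}_1$, is correctly invoked to control the pair $(R_1,R_2)$ via $R_1+R_2$, and that the exponential decay in the logarithmic term survives intact after regularization.
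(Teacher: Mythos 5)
Your proposal is correct and follows essentially the same route as the paper: invoke the one-shot bound of Proposition \ref{strong-1} on $\mathcal{N}^{\otimes n}$, collapse $Q_\Gamma(\mathcal{N}^{\otimes n}) = n\,Q_\Gamma(\mathcal{N})$ via Proposition \ref{additivity}, and read off exponential decay of fidelity from $\log(1-\varepsilon_n) \leq n\bigl(Q_\Gamma(\mathcal{N}) - R_1 - R_2\bigr)$. One small caveat: the exponential decay argument genuinely requires $R_1+R_2 > Q_\Gamma(\mathcal{N})$ (at exact equality the bound degenerates to $1-\varepsilon_n \le 1$), and the "limiting argument" you gesture at does not actually recover a strong converse \emph{at} the boundary point, but the paper's own proof is silent on this edge case as well, so your bookkeeping is if anything more explicit than the original.
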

\begin{proof}
    Assume the rate pair $(R_1,R_2)$ is asymptotically achievable. From Eq. \eqref{one-shot-converse} and the additivity of $Q_{\Gamma}(\mathcal{N})$, we obtain the inequality
    \begin{align*}
        n(R_1+R_2) \leq n Q_{\Gamma}(\mathcal{N}) - \log (1-\varepsilon),
    \end{align*}
which can be rewritten as
    \begin{align*}
        \varepsilon \geq 1- 2^{n\left(Q_{\Gamma}(\mathcal{N}) - R_1-R_2\right)}.
    \end{align*}
    This shows that if $R_1 + R_2 \geq Q_{\Gamma}(\mathcal{N})$, the error $\varepsilon$ will exponentially converge to $1$ as the number of channel uses tends to infinity.
\end{proof}

In \cite[Theorem 7]{wang-q}, a strong converse bound on the quantum capacity of a point-to-point channel is presented. The following proposition demonstrates that our strong converse bound can generally be tighter when applied to point-to-point channels (when the receivers $B$ and $C$ are viewed as a single system $BC$).

\begin{proposition}
    For any point-to-point quantum channel $\mathcal{N}^{A'\to BC}$, where $BC$ is treated as a single receiver, the strong converse bound in Proposition \ref{strong-converse} is generally tighter than its counter part in \cite[Theorem 7]{wang-q}.  
\end{proposition}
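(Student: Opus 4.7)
The plan is to compare the two SDPs side by side after specializing our formulation to a single receiver. First, I would recall from \cite[Theorem 7]{wang-q} the primal form of their strong converse bound for a point-to-point channel $\mathcal{N}^{A' \to B'}$: it is $\log \Gamma_{\text{W}}(\mathcal{N})$, where $\Gamma_{\text{W}}(\mathcal{N})$ is the maximum of $\Tr R^{A'B'}(N^{A'B'})^T$ over $R^{A'B'}\geq 0$ and $\rho^{A'}\geq 0$ with $\Tr\rho^{A'}=1$, subject to the sandwich constraint $-\rho^{A'}\otimes \mathds{1}^{B'}\leq (R^{A'B'})^{T_{B'}}\leq \rho^{A'}\otimes\mathds{1}^{B'}$. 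Next, I would specialize Proposition~\ref{strong-1} to the point-to-point case by identifying the joint receiver $BC$ with a single system $B'$: the objective and the partial-transpose sandwich constraints become textually identical to Wang's formulation, but our SDP carries one additional equality constraint inherited from the no-signaling condition $A\not\to BC$, namely $R^{B'}=m\mathds{1}^{B'}$ for some scalar $m\geq 0$.

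From this explicit side-by-side comparison, the key structural observation is that every feasible triple $(R^{A'B'},\rho^{A'},m)$ for our SDP gives a feasible pair $(R^{A'B'},\rho^{A'})$ for Wang's SDP achieving the same objective value, while the converse inclusion fails in general because a generic $R^{A'B'}$ feasible for Wang's SDP will not have $R^{B'}$ proportional to the identity. Hence our maximization is carried out over a (weakly) smaller feasible set, so $\Gamma(\mathcal{N})\leq \Gamma_{\text{W}}(\mathcal{N})$ and therefore $Q_\Gamma(\mathcal{N})\leq \log \Gamma_{\text{W}}(\mathcal{N})$, establishing the non-strict inequality. Equivalently, one can view the added constraint at the dual level: our dual in Eq.~\eqref{dual} has an extra free Hermitian variable $W^{BC}$ (with $\Tr W^{BC}\leq 0$) relative to Wang's dual, which only enlarges the dual feasible set and can only decrease the minimum.

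To justify the word ``generally,'' I would argue that the inequality is strict on a natural class of channels by exhibiting one example where the optimal $R^{A'B'}$ of the Wang SDP provably has $R^{B'}$ not proportional to the identity, so that adding the no-signaling constraint genuinely shrinks the feasible region and strictly lowers the optimal value. Natural candidates are the qubit amplitude-damping or the (non-depolarizing) generalized amplitude-damping channel; in each case one can either solve both SDPs numerically or, more cleanly, use symmetry to argue that the Wang optimizer inherits only the symmetry of $N^{A'B'}$, which need not enforce $R^{B'}\propto \mathds{1}^{B'}$.

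The main obstacle I foresee is purely bookkeeping: one must reconcile conventions between the two papers (unnormalized vs.\ normalized Choi matrices, placement of the transpose on $A'$ vs.\ $B'$, and the precise role of the auxiliary state $\rho^{A'}$) to make sure the two SDPs really coincide after removing the no-signaling constraint, rather than differing in some additional subtle way; once this alignment is done the inequality follows by set inclusion. A minor secondary task is to make the ``generally'' claim quantitative by producing at least one channel where strict inequality is verified, which I would treat as a short computational addendum rather than a conceptual difficulty.
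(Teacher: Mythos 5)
Your proposal is correct and follows essentially the same argument as the paper's (brief) proof, which likewise identifies the extra no-signaling constraint ($\Lambda^{BC} = m^2 \mathds{1}^{BC}$ in Proposition~\ref{non-sdp-relax-for-strong-converse}, equivalently $R^{BC} = m\mathds{1}^{BC}$ in the $\Gamma$-program) as the source of the tightening, noting that the feasible set shrinks while additivity is preserved. Your write-up is more explicit—working through both the primal feasible-set inclusion and the dual perspective, and correctly flagging that the word ``generally'' calls for a concrete channel exhibiting strict inequality—whereas the paper simply asserts the observation without exhibiting such an example.
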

\begin{proof}
    Note that the quantity $Q^{\textbf{PPT} \,\cap\, \textbf{NS}}_{1}(\mathcal{N},\varepsilon)$ in proposition \ref{non-sdp-relax-for-strong-converse} can generally be smaller than the corresponding quantity $Q^{\textbf{PPT}}_{1}(\mathcal{N},\varepsilon)$ in \cite[Eq. (23)]{wang-q}. This is due to our inclusion of the no-signalling condition $\Lambda^{BC}=m^2\mathds{1}^{BC}$, yet we were still able to prove the additivity of the quantity $Q_{\Gamma}(\mathcal{N})$.   
\end{proof}

In the next section, we explore how the codes derived so far can aid in entanglement distillation and, conversely, how existing entanglement distillation protocols can be adapted into quantum codes for quantum broadcast channels.

\bigskip

\section{From PPT-preserving codes to entanglement combing and back}
\label{combing}
In this section, we establish a correspondence between PPT-preserving quantum codes and PPT-preserving entanglement combing schemes, our Proposition \ref{code-combing-back}.   
To motivate the problem intuitively, suppose Alice generates an EPR pair of size $\abs{A'}$ and transmits half of it over the quantum broadcast channel $\mathcal{N}^{A'\to BC}$. As a result, the three of them share the state $\widetilde{N}^{A'BC}$, where $A'$, $B$, and $C$ belong to Alice, Bob, and Charlie, respectively. Suppose further that they can apply a particular PPT-preserving entanglement combing protocol to transform $\widetilde{N}^{A'BC}$ into a state $\sigma^{A_1A_2B'C'}$ such that 
\begin{align*}
    \bra{\phi}^{A_1B'}\bra{\phi}^{A_2C'}\sigma^{A_1A_2B'C'}\ket{\phi}^{A_1B'}\ket{\phi}^{A_2C'} = f,
\end{align*}
 shared between Alice $A_1,A_2$, Bob $B$, and Charlie $C$, respectively.  

Can the PPT-preserving entanglement combing protocol serve as a PPT-preserving quantum code for the quantum broadcast channel $\mathcal{N}^{A'\to BC}$? If so, how does the entanglement fidelity of the resulting code compare to that of the original combing protocol?
Conversely, suppose there exists a PPT-preserving code $\mathcal{Z}^{A'B'C'\to ABC}$ of channel fidelity $F(\mathcal{N},r_1,r_2) = f$ for the quantum broadcast channel $\mathcal{N}^{A'\to BC}$. Can this code be converted into a PPT-preserving entanglement combing protocol? If so, how is the entanglement fidelity of the combing protocol related to that of the channel?  In this section, we fully answer these questions.
The importance of this result is two fold — firstly, it allows us to use existing quantum codes for broadcast channels for entanglement combing \cite{8861115, salek2024threereceiverquantumbroadcastchannels}.  Secondly, in cases where a result for a class of PPT-preserving distillation protocols is already known as in \cite{salek2022}, we can potentially translate the result to a related class of PPT-preserving codes without any further work. The second approach is further investigated in \cite{3-salek-hayden}.

In this section, we work with the Choi state rather than the Choi matrix. As noted in the preliminaries, we use a tilde to indicate that the Choi state is derived from a normalized EPR pair, whereas the notation without a tilde refers to the Choi matrix.
We now define the maximum fidelity of a PPT-preserving entanglement combing operation. Recall that these operations correspond to those whose Choi state is PPT across all three cuts; see the introduction for further details.

\begin{definition}
For a tripartite state $\rho^{A'BC}$, its optimal entanglement combing fidelity with PPT-preserving operations is defined as follows:
    \begin{align*}
        F_{\Gamma}^{^{\textbf{PPT}}}(\rho^{A'BC}, r_1,r_2)&\coloneqq \max \Tr  \phi^{A_1B'} \phi^{A_2C'} \mathcal{Y}^{A'BC\to A_1A_2B'C'}(\rho^{A'BC})  \\
        &\hspace{1cm}\text{s.t.}\,\,\,\, \mathcal{Y}^{A'BC\to A_1A_2B'C'} \,\,\text{is PPT-preserving},\\
        &\hspace{1cm}\ket{\phi}^{A_1B'}=\frac{1}{\sqrt{r_1}}\sum_i\ket{i,i}^{A_1B'}, \ket{\phi}^{A_2C'}=\frac{1}{\sqrt{r_2}}\sum_i\ket{i,i}^{A_2C'},\\
        &\hspace{1cm} r_1 = \abs{A_1} = \abs{B'}, r_2 = \abs{A_2} = \abs{C'}.
    \end{align*}
The pair $(r_1,r_2)$ is referred to as the optimal size of the entanglement combing of the state $\rho^{A'BC}$ optimized over all PPT-preserving entanglement combing operations. We define the asymptotic sum-rate of entanglement combing of a state $\rho^{A'BC}$ as follows:

\begin{align*}
    D_{\Gamma}^{^{\textbf{PPT}}}(\rho^{A'BC})\coloneqq \sup\left\{R_1+R_2: \lim_{n\to\infty}F_{\Gamma}^{^{\textbf{PPT}}}\big((\rho^{A'BC})^{\otimes n}, \lfloor{2^{nR_1}}\rfloor,\lfloor{2^{nR_2}}\rfloor\big)=1\right\}.
\end{align*}
\end{definition}

In the following, we generalize the insights from \cite{leung-matthews} to establish a connection between the entanglement combing of the Choi state $\widetilde{N}^{A'BC}$ via tripartite PPT-preserving operations and PPT-preserving error correction codes for the quantum broadcast channel $\mathcal{N}^{A'\to BC}$.

\begin{proposition}
\label{code-combing-back}
    For any quantum broadcast channel $\mathcal{N}^{A'\to BC}$ with Choi state $\widetilde{N}^{A'BC}$,
    \begin{enumerate}[(i)]
        \item If a PPT-preserving entanglement combing operation can distill EPR pairs of size $(r_1,r_2)$ from the tripartite state $\widetilde{N}^{A'BC}$ with entanglement fidelity $f$, where $r_1$ is the dimension of the pair shared between Alice and Bob and $r_2$ that of Alice and Charlie, then there exists a PPT-preserving quantum code of size $(r_1,r_2)$ with channel fidelity $f$ for the broadcast channel $\mathcal{N}^{A'\to BC}$. Consequently, 
        \begin{align*}
            F^{\textbf{PPT}}(\mathcal{N}^{A'\to BC},r_1,r_2) \geq F_{\Gamma}^{\textbf{PPT}}(\widetilde{N}^{A'BC}, r_1,r_2).
        \end{align*}   
        \item If the quantum broadcast channel $\mathcal{N}^{A'\to BC}$ can be exactly simulated using a single copy of its Choi state $\widetilde{N}^{A'BC}$ and forward classical communication from Alice to Bob and Charlie, then the converse to $(i)$ also holds true, therefore
        \begin{align*}
            F^{\textbf{PPT}}(\mathcal{N}^{A'\to BC},r_1,r_2) = F_{\Gamma}^{\textbf{PPT}}(\widetilde{N}^{A'BC}, r_1,r_2).
        \end{align*}
        Moreover,
        \begin{align*}
         Q^{\textbf{PPT}}(\mathcal{N}^{A'\to BC}) =   D_{\Gamma}^{^{\textbf{PPT}}}(\widetilde{N}^{A'BC}),
        \end{align*}
        where $Q^{\textbf{PPT}}(\mathcal{N}^{A'\to BC})$ is defined in Eq. \eqref{asym-sum-capacity}.
    \end{enumerate}
\end{proposition}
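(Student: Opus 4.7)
Both directions translate between distillation on the Choi state $\widetilde{N}^{A'BC}$ and coding over the channel $\mathcal{N}^{A'\to BC}$ using two elementary facts: (a) sending half of an EPR pair through $\mathcal{N}$ produces exactly $\widetilde{N}^{A'BC}$, and (b) the standard teleportation protocol (Bell measurement plus Pauli correction) with any resource $\sigma$ induces a channel whose entanglement fidelity equals $\langle\phi|\sigma|\phi\rangle$, i.e.\ the fidelity of the resource to the EPR state. Every additional glue operation is either local or LOCC among Alice, Bob, and Charlie, and is therefore PPT-preserving across each of the three tripartite cuts $A|BC$, $B|AC$, $C|AB$; composed with the assumed PPT-preserving combing (respectively code) the total remains PPT-preserving. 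The structure follows the bipartite construction in \cite{leung-matthews}.

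\textbf{Part (i).} Given a PPT-preserving combing $\mathcal{Y}^{A'BC\to \hat{A}_1\hat{A}_2 B'C'}$ of fidelity $f$, I build a forward-assisted code as follows. Alice locally prepares a fresh EPR pair $\phi^{\widetilde{A}' A'}$ and sends $A'$ through $\mathcal{N}$, so the three parties share $\widetilde{N}^{\widetilde{A}'BC}$. Applying $\mathcal{Y}$ yields $\sigma^{\hat{A}_1\hat{A}_2 B'C'}$ with $\langle \phi^{\hat{A}_1 B'}\phi^{\hat{A}_2 C'}|\sigma|\phi^{\hat{A}_1 B'}\phi^{\hat{A}_2 C'}\rangle = f$. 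Alice then teleports her code inputs $A_1, A_2$ into $B'$ and $C'$ by two product Bell measurements on $(A_1\hat{A}_1)$ and $(A_2\hat{A}_2)$ and product Pauli corrections on $B'$ and $C'$. Since the product Bell basis on $(A_1A_2)(\hat{A}_1\hat{A}_2)$ is the Bell basis in dimension $r_1r_2$, and the product of local Paulis on $B'C'$ realizes the full Pauli set, the two parallel teleportations are operationally one $r_1r_2$-dimensional teleportation across the cut $\hat{A}_1\hat{A}_2 | B'C'$ with EPR target $\phi^{\hat{A}_1 B'}\phi^{\hat{A}_2 C'}$. By the standard teleportation-entanglement-fidelity identity, the induced code-plus-channel map $\mathcal{M}^{A_1A_2\to B'C'}$ has entanglement fidelity $f$. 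Local preparation by Alice, the PPT-preserving $\mathcal{Y}$, and Alice$\leftrightarrow$Bob / Alice$\leftrightarrow$Charlie LOCC preserve PPT across every tripartite cut, so the composite code is PPT-preserving. Hence $F^{\textbf{PPT}}(\mathcal{N},r_1,r_2)\ge f$, which gives $F^{\textbf{PPT}}(\mathcal{N},r_1,r_2)\ge F^{\textbf{PPT}}_{\Gamma}(\widetilde{N}^{A'BC},r_1,r_2)$.

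\textbf{Part (ii).} Invoke the extra hypothesis to fix a forward-classical-communication protocol $\mathcal{S}$ that consumes a single copy of $\widetilde{N}^{A'BC}$ and an Alice input register $A''$ with $|A''|=|A'|$ to reproduce the action of $\mathcal{N}^{A''\to BC}$ exactly. Given a PPT-preserving code $\mathcal{Z}=\mathcal{D}\circ\mathcal{E}$ of size $(r_1,r_2)$ with channel fidelity $f$, assemble a combing protocol on $\widetilde{N}^{A'BC}$ as follows: (A) Alice locally prepares her reference EPR pairs $\phi^{\hat{A}_1 X_1}\phi^{\hat{A}_2 X_2}$ and applies the encoder $\mathcal{E}$ to $X_1X_2$; (B) the channel-input register produced by $\mathcal{E}$ is fed into $\mathcal{S}$, which consumes $\widetilde{N}^{A'BC}$ and delivers the exact channel outputs to Bob and Charlie; (C) the decoder $\mathcal{D}$ is applied to produce $B'$ and $C'$ (the forwarding of any encoder side-register is internal to $\mathcal{Z}$ and covered by its PPT-preservation). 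Because $\mathcal{S}$ is LOCC, hence PPT-preserving, and $\mathcal{Z}$ is PPT-preserving, so is the composition. By construction, the output state on $\hat{A}_1\hat{A}_2 B'C'$ is exactly $(\text{id}^{\hat{A}_1\hat{A}_2}\otimes\mathcal{M}^{X_1X_2\to B'C'})(\phi^{\hat{A}_1 X_1}\phi^{\hat{A}_2 X_2})$ with $\mathcal{M}$ the code's induced channel, so its overlap with $\phi^{\hat{A}_1 B'}\phi^{\hat{A}_2 C'}$ equals the channel fidelity $f$. Combined with (i), this gives $F^{\textbf{PPT}}(\mathcal{N},r_1,r_2)=F^{\textbf{PPT}}_{\Gamma}(\widetilde{N}^{A'BC},r_1,r_2)$. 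The asymptotic identity $Q^{\textbf{PPT}}(\mathcal{N})=D^{\textbf{PPT}}_{\Gamma}(\widetilde{N}^{A'BC})$ then follows from the definitions, using that the Choi state of $\mathcal{N}^{\otimes n}$ equals $(\widetilde{N})^{\otimes n}$ and that $n$ parallel copies of $\mathcal{S}$ simulate $\mathcal{N}^{\otimes n}$ from $(\widetilde{N})^{\otimes n}$ with forward classical communication.

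\textbf{Main obstacle.} The most delicate step is the fidelity accounting in part (i): one must verify that two \emph{parallel} teleportations through a \emph{jointly} correlated resource $\sigma^{\hat{A}_1\hat{A}_2 B'C'}$ reproduce the \emph{joint} EPR overlap $f$ rather than some weaker product of marginal fidelities. The unlocking observation is that product Bell measurements and product Pauli corrections together implement a single $r_1r_2$-dimensional teleportation across the cut $\hat{A}_1\hat{A}_2 | B'C'$ whose EPR target is precisely the product state $\phi^{\hat{A}_1 B'}\phi^{\hat{A}_2 C'}$; the standard teleportation-entanglement-fidelity identity then applies directly. Once this reduction is in hand, PPT-preservation of Alice-local moves, bipartite Alice$\leftrightarrow$Bob and Alice$\leftrightarrow$Charlie LOCC, and the assumed PPT-preserving $\mathcal{Y}$ and $\mathcal{S}$ combine routinely, placing the whole argument within the tripartite framework developed earlier in the paper.
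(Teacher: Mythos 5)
Your argument for part (ii) matches the paper's: substitute the teleportation simulation for the black-box channel, and observe that the composite of the PPT-preserving code with the LOCC simulation is PPT-preserving and yields exactly the channel's output state on the reference registers, so the distillation fidelity equals the channel fidelity. That is correct as written.

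Part (i), however, has a genuine gap that the paper spends most of its effort closing. You verify that the composite operation is PPT-preserving, but you never verify that it is a valid \emph{forward-assisted} code, i.e., that it satisfies the causality constraint $BC \not\to A$. A generic PPT-preserving combing operation $\mathcal{Y}^{A'BC\to \hat A_1\hat A_2 B'C'}$ can signal from Bob and Charlie to Alice; in that case Alice's part of $\mathcal{Y}$ — which acts on the $\widetilde{A}'$ system she kept — cannot be moved to before the channel use, since it would depend on the channel's outputs $B$ and $C$. Equivalently, the resulting tripartite operation $\mathcal{Z}$ would not admit the decomposition (Alice isometry) $\to$ (forward register) $\to$ (joint Bob--Charlie decoder) guaranteed by Lemma \ref{forward-assisted-decompose}, and hence would not be counted in $F^{\textbf{PPT}}(\mathcal{N},r_1,r_2)$, which is defined as a supremum over forward-assisted codes. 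Your "PPT-preservation composes" step does not address this, because PPT-preservation and $BC\not\to A$ are independent constraints.

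The paper's proof fills this gap with the twirling trick: it shows that averaging $\mathcal{Y}$ over the conjugate action of $\widehat{U}^{\hat A_1\hat A_2 B'C'}$ leaves both the combing fidelity (the target $\phi^{\hat A_1 B'}\phi^{\hat A_2 C'}$ is invariant) and the PPT property unchanged, while rendering the twirled operation fully no-signalling. Once $\mathcal{Y}$ is no-signalling, its Alice part is a purely local isometry that can be absorbed into the encoder, restoring the required causal ordering. You should insert this reduction before your teleportation step; without it, the constructed operation is PPT-preserving but not necessarily a legitimate code, and the inequality $F^{\textbf{PPT}}(\mathcal{N},r_1,r_2)\ge f$ does not follow. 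Your careful parallel-teleportation fidelity accounting is fine once the twirling is in place, and is essentially what the paper asserts more tersely.
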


\begin{proof}
(i) Suppose there exists a tripartite quantum channel $\mathcal{Y}^{A'BC \to A_1A_2B'C'}$ that is PPT-preserving but otherwise arbitrary, and transforms the state $\widetilde{N}^{A'BC}$ into $\phi^{A_1B'} \otimes \phi^{A_2C'}$, where $\phi^{A_1B'}$ is shared between Alice ($A_1$) and Bob ($B'$), and $\phi^{A_2C'}$ is shared between Alice ($A_2$) and Charlie ($C'$), such that  
\begin{align*}
    \bra{\phi}^{A_1B'}\bra{\phi}^{A_2C'}\mathcal{Y}^{A'BC \to A_1A_2B'C'}\big(\widetilde{N}^{A'BC}\big)\ket{\phi}^{A_1B'}\ket{\phi}^{A_2C'} = f.
\end{align*}  
We recall that the only strict requirement for any tripartite operation to qualify as a physically realizable code is no-signalling from Bob and Charlie to Alice, i.e., $BC \not\to A$. This ensures the causality of the code. Hence, the PPT-preserving combing operation and any subsequent operations used to build a code must satisfy $BC \not\to A$.
However, we can show that any PPT-preserving, otherwise arbitrary, entanglement combing operation can, without loss of generality, be made fully no-signalling by applying a twirling trick similar to the twirling technique we used to derive the twirled Choi matrix in Eq. \eqref{twirled-choi-1}.

\begin{figure}
\centering
\includegraphics[width=0.9\textwidth]{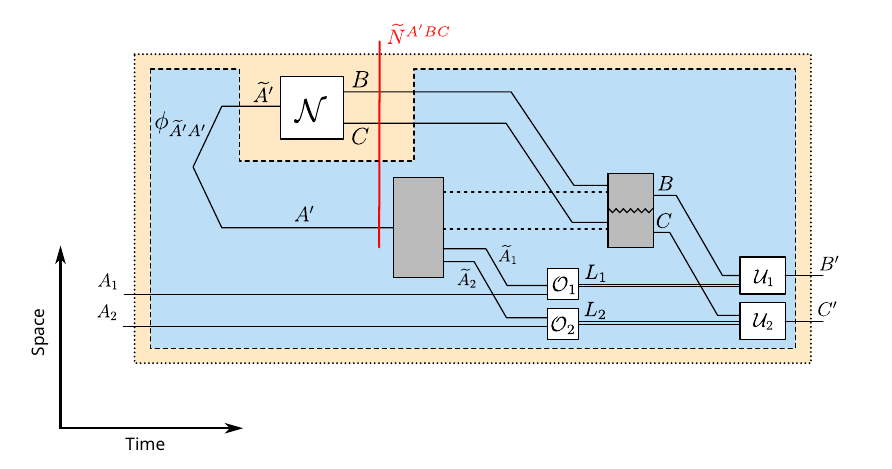}
\caption{Building a PPT-preserving code (the operation inside the dashed blue region) from a PPT-preserving entanglement combing protocol and quantum teleportation. The grey boxes and the dashed lines connecting them represent the PPT-preserving entanglement combing scheme, while the white box operations with double lines for classical messages depicts the teleportation scheme. The entire induced channel is enclosed within the dotted orange region. The protocol operates by first combing the Choi state of the channel and then using it in a teleportation scheme to transmit quantum information. The zigzag line on the side of Bob and Charlie's grey box indicates that they may collaborate, provided their operation remains PPT-preserving.}
\label{comb-to-code}
\end{figure}

 More precisely, we can show that the PPT-preserving operation $\mathcal{Y}^{A'BC \to A_1A_2B'C'}$ can be transformed into a fully no-signalling operation without altering its entanglement combing fidelity. This twirling is achieved using the same unitary group defined by the unitaries $\widehat{U}^{A_1A_2B'C'}$ in Eq. \eqref{concise-U}.
 applying the twirling procedure amounts to averaging $\mathcal{Y}^{A'BC \to A_1A_2B'C'}$ over the conjugate action of $\widehat{U}^{A_1A_2B'C'}$. In detail: 
\begin{align*}
    f &= \bra{\phi}^{A_1B'}\bra{\phi}^{A_2C'}\mathcal{Y}^{A'BC \to A_1A_2B'C'}\big(\widetilde{N}^{A'BC}\big)\ket{\phi}^{A_1B'}\ket{\phi}^{A_2C'} \\
    &= \bra{\phi}^{A_1B'}\bra{\phi}^{A_2C'}(\widehat{U}^{A_1A_2B'C'})^{\dagger}\mathcal{Y}^{A'BC \to A_1A_2B'C'}\big(\widetilde{N}^{A'BC}\big)\widehat{U}^{A_1A_2B'C'}\ket{\phi}^{A_1B'}\ket{\phi}^{A_2C'} \\
    &= \bra{\phi}^{A_1B'}\bra{\phi}^{A_2C'}\left[\int_{U}d\mu(U)
(\widehat{U}^{A_1A_2B'C'})^{\dagger}\mathcal{Y}^{A'BC \to A_1A_2B'C'}\big(\widetilde{N}^{A'BC}\big)\widehat{U}^{A_1A_2B'C'} \right] \ket{\phi}^{A_1B'}\ket{\phi}^{A_2C'},
\end{align*}
where the integral is taken over the Haar measure of the unitary group. This ensures that the resulting operation $\mathcal{Y}$ is fully no-signalling while preserving both its fidelity in entanglement combing and PPT-preserving property.

Applying this twirling, which relies only on shared classical random variables, transforms the PPT-preserving operation into a fully no-signalling operation in all directions. The reason is that the output of the channel, regardless of its form, becomes diagonal in the basis corresponding to the sub-representations of the group. As a result, the marginal state of each party is a maximally mixed state, independent of the initial input state. Consequently, we can assume without loss of generality that PPT-preserving operations are fully no-signalling.

We note that this twirling procedure employs the same technique and group structure as in twirling the code in Eq. \eqref{twirled-code}, but the setting here is different. In the case of the code, there was an input and a channel that connected specific outputs to specific inputs of the tripartite operation. The unitary could be interpreted as Alice applying the unitary to her input, while Bob and Charlie completed their actions after the code was used. In contrast, in the present setting, the three parties apply the unitaries after the tripartite channel has acted.

From this point onward, we aim to construct a code using the tripartite entanglement combing operation, which is PPT-preserving but otherwise allows (potentially quantum) communication from Alice to Bob and Charlie, or between Bob and Charlie, as long as the PPT-preserving condition remains intact. Once again, we can eliminate these signallings via twirling.  

Alice, Bob, and Charlie are given a PPT-preserving entanglement combing operation $\mathcal{Y}^{A'BC \to A_1A_2B'C'}$ with combing fidelity $f$ when applied to the Choi state $\widetilde{N}^{A'BC}$. Their goal is to use this operation to construct a quantum code for the quantum broadcast channel $\mathcal{N}^{A'\to BC}$ with channel fidelity $f$. Referring to Fig. \ref{comb-to-code}, Alice locally prepares an EPR pair $\phi^{A'\widetilde{A}'}$ and transmits half of it through the channel to Bob and Charlie, resulting in the shared state $\widetilde{N}^{A'BC}$. They then apply the operation $\mathcal{Y}^{A'BC \to A_1A_2B'C'}$ to convert this shared state into $\eta^{A_1A_2B'C'}$, which, by assumption, satisfies  
\begin{align*}  
    \bra{\phi}^{A_1B'}\bra{\phi}^{A_2C'}\eta^{A_1A_2B'C'}\ket{\phi}^{A_1B'}\ket{\phi}^{A_2C'} = f.
\end{align*}  

The three parties can now use $\eta^{A_1A_2B'C'}$ in a teleportation scheme to transmit arbitrary states from Alice to Bob and Charlie. It is straightforward to verify that the teleportation channel's fidelity matches the entanglement combing fidelity of the shared state. Since teleportation requires only classical communication from Alice to Bob and Charlie, the causality constraint $BC\not\to A$ remains satisfied.  

The entire process inside the dashed line can be expressed as  
\begin{align*}  
    \mathcal{M}^{A_1A_2 \to BC} (\cdot) = \mathcal{U}_1^{BL_1 \to B'} \mathcal{U}_2^{CL_2 \to C'}\mathcal{O}_1^{A_1\widetilde{A}_1\to L_1}\mathcal{O}_2^{ A_2\widetilde{A}_2\to L_2}\mathcal{Y}^{A'BC \to \widetilde{A}_1\widetilde{A}_2B'C'}\mathcal{N}^{\widetilde{A'}\to BC}\phi^{\widetilde{A'}A'} (\cdot),
\end{align*}  
where $(\cdot)$ represents any state on $A_1A_2$. The sequence above consists entirely of PPT-preserving operations and must itself be PPT-preserving. Given this, and noting that all communication between $A$ and $BC$ is forward-directed, we conclude that the entire operation (outlined by the dashed line in Fig. \ref{comb-to-code} in orange) satisfies the requirements to be a physically realizable PPT-preserving code.

\begin{figure}
\centering
\includegraphics[width=0.8\textwidth]{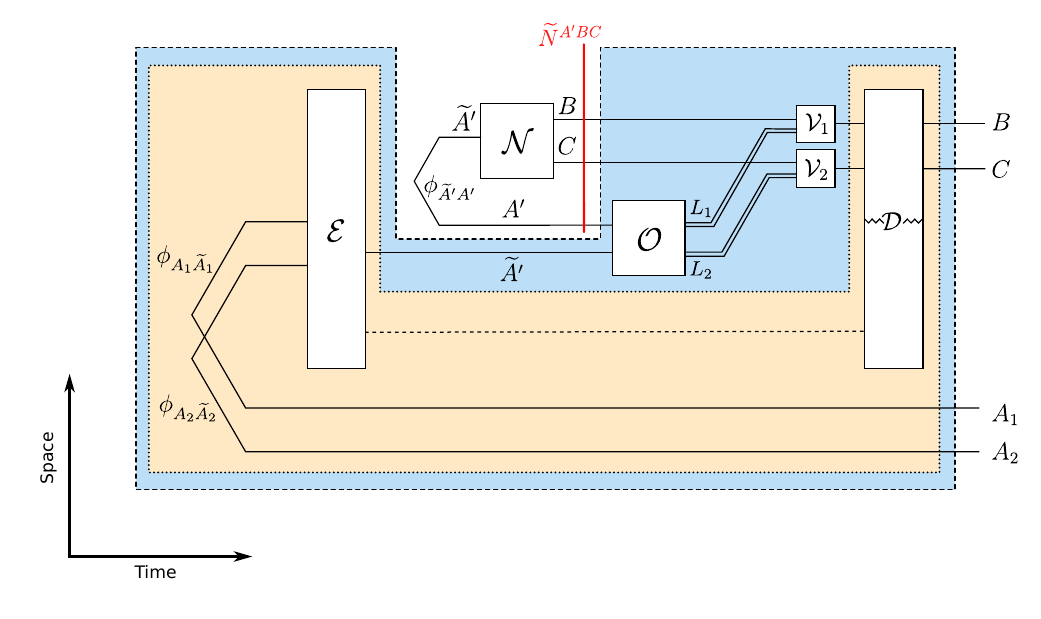}
\caption{Building a PPT-preserving entanglement combing protocol from a PPT-preserving code for a quantum broadcast channel. The code is depicted inside the dotted orange region, while the entanglement combing operation is shown within the dashed blue region. The latter consists of both a coding procedure and a teleportation scheme. The input to the combing region is a tripartite quantum state, which we illustrate as the Choi state of the channel $\mathcal{N}$, whose corresponding code with channel fidelity $f$ is contained within the dotted orange region. The operation involves simulating the action of the channel for the given code. Note that while we use the Choi state $\widetilde{N}^{A'BC}$ as the input to the blue entanglement combing scheme, the input could be any other state. However, different choices of input states lead to different entanglement fidelities. }
\label{fig:code2distillation}
\end{figure}

(ii) Suppose we have a quantum broadcast channel $\mathcal{N}^{A'\to BC}$ that is teleportation simulable. Teleportation simulable channels can be exactly simulated using a single copy of their Choi matrix along with forward classical communication. Referring to Fig. \ref{fig:code2distillation}, this means that
\begin{align*}
\mathcal{N}^{\widetilde{A'}\to BC}(\cdot) = \mathcal{V}{1}^{L_1B\to B}\mathcal{V}{2}^{L_2C\to C} O^{\widetilde{A'}A'\to L_1L_2}\big(\widetilde{N}^{A'BC} \otimes (\cdot)\big),
\end{align*}
where $(\cdot)$ represents an input state on $\widetilde{A'}$ (which is isomorphic to $A'$).
This property holds for certain channels, though a complete characterization remains open—even in the case of point-to-point channels. For further details, the reader is encouraged to consult Sec. \ref{comment-one-way}.

Now, assume there exists a PPT-preserving forward-assisted code $\mathcal{Z}^{ABC\to A'B'C'}$ for this channel, achieving an entanglement fidelity of $F^{\textbf{PPT}}(\mathcal{N}^{A'\to BC},r_1,r_2) = f$.
Rather than treating the channel as a black box, we can replace it with its teleportation simulation. This means that in the code $\mathcal{Z}^{ABC\to A'B'C'}$, the missing step—the action of the channel that transforms $A'$ into $BC$—is explicitly realized through the Choi state. In effect, the distillation of the Choi state occurs by simulating the channel itself. This naturally leads to a PPT-preserving entanglement combing operation with fidelity $f$. The operation remains PPT-preserving because it is constructed from a PPT-preserving code together with forward classical communication (specifically, the messages carried by systems $L_1$ and $L_2$).

Finally, note that if the channel is not teleportation simulable—or if the state we aim to distill is not the Choi state of $\mathcal{N}^{A'\to BC}$ but some other arbitrary tripartite state on $A'BC$—the distillation protocol still applies, but it may fail to achieve the desired entanglement combing fidelity.

\end{proof}

\medskip

\section{Discussion}
\label{discussion}
While the most general quantum codes for creating entanglement between two parties are modeled as bipartite quantum channels, generating entanglement among multiple nodes connected by noisy channels requires an extension of these models. In this work, we study tripartite quantum operations as quantum codes designed to generate EPR pairs among three parties.  
Unlike bipartite operations, tripartite operations have more complex definitions for no-signalling conditions, and depending on the specific no-signalling constraints chosen, an operation may or may not qualify as a valid quantum code for a given task without violating causality. Here, we focus on enforcing a no-signalling condition where two parties—treated as a bundle—cannot signal to the third party. We showed that this approach results in the most general quantum code for creating EPR pairs between a single party and the other two, connected by a single-input, two-output quantum channel. We obtained the channel fidelity as a semidefinite programming (SDP) problem.  

The derived SDP, however, becomes highly nonlinear when the sizes of the EPR pairs become variables themselves, which is the case for quantum capacity. We obtain a hierarchy of semidefinite programming relaxations that result in SDPs. We try to enforce the minimum amount of relaxation possible and explain different ways to obtain weak converse bounds in the form of SDPs. Our strong converse bound, which uses the additivity of a semidefinite program, recovers and strengthens the case of the point-to-point channel.  

Along the way, our results imply new insights and give rise to open problems. For example, since EPR generation is considered secure—ensuring no environment can be correlated with pure states—the bounds here also hold for secret key agreement with two legal receivers and one eavesdropper. The weak and strong converse bounds that we derive for the capacity of this setting will also continue to hold in that case.  

In an upcoming work, we will present lower bounds on the entire rate region of the multiple-receiver quantum broadcast channel using entanglement combing. It would also be interesting to derive new lower bounds on the one-shot capacity of the quantum broadcast channel using one-shot state merging. These lower bounds could then be compared with the upper bounds obtained in this work.  

We are not aware of any lower bounds on the capacity region of the quantum broadcast channel, except for isometric broadcast channels \cite{q-yard}. However, the entanglement combing approach for mixed states \cite{salek2022} can provide lower bounds for quantum broadcast channels. We report these results in \cite{3-salek-hayden}. It is also interesting to compare the combing bounds to the squashed entanglement bounds obtained in \cite{7438836}. This establishes a connection between quantum state merging \cite{state-merging} and quantum broadcast channels. From a slightly different perspective, the relation between state merging and one-way entanglement distillation appears to be vaguely known.  

The operations that satisfy a no-signalling condition, according to the definitions provided by Eqs. \eqref{no-signalling-def-1} or \eqref{no-signalling-def-2}, are sometimes referred to as semicausal in the literature \cite{TEggeling_2002}. The latter reference shows that these no-signalling or semicausal operations are equivalent to classes of operations implementable by local operations and one-way quantum communication in only one direction and refers to them as localizable operations. This terminology was actually introduced earlier in \cite{PhysRevA.64.052309}.  %
Reference \cite[7.1.2]{gour2024resourcesquantumworld}, however, refers to the operations satisfying \eqref{no-signalling-def-1} or \eqref{no-signalling-def-2} as semicausal, while calling the semi-localizable operation no-signalling. 
Causal order and no-signalling correlations have been studied for bipartite channels \cite{Oreshkov2012-xd,PhysRevA.64.052309,TEggeling_2002,PhysRevA.74.012305,6757002}, but a thorough examination of these concepts in tripartite channels is currently lacking. Important questions, such as single-party to single-party no-signalling and its interplay with bipartite no-signalling conditions, remain unexplored. A promising future direction involves investigating the relationship among the trace-preserving condition, no-signalling correlations, and causal order in tripartite channels.  

Another interesting future direction involves applying the approach of \cite{Tomamichel2016} to quantum broadcast channels. This may allow us to determine a lower bound on the number of coherent uses of the broadcast channel necessary to witness super-additivity of the coherent information.

Another frequently asked question in the community is: If one-way distillation is equivalent to an error correction code (see Sec. \ref{comment-one-way}), what is the role of two-way distillation? We comment on this question as follows: two-way distillation cannot constitute a quantum error-correcting code, because a valid code, to be a physically realizable channel, cannot allow messages to be exchanged in both directions—specifically, from Bob to Alice—without violating causality. 
One might ask whether a two-way distillation protocol could be followed by a twirling trick, similar to Proposition \ref{code-combing-back}, to transform the protocol into a fully no-signalling process. However, note that messages from Bob to Alice would have to be exchanged before the use of the channel, which contradicts the required causal order for a valid code. This issue requires further investigation.

\medskip

\appendix
\section{Appendix}
\label{appendix}

\subsection{A Variable-reduced SDP for Theorem \ref{full-fledged} via Fourier-Motzkin Elimination}

The following proposition is equivalent to Theorem \ref{full-fledged}, derived by using Fourier-Motzkin elimination on operators.

\begin{proposition}
    \label{full-fledged-fourier-motzkin}
    For a quantum broadcast channel $\mathcal{N}^{A'\to BC}$, there is a forward-assisted code of size $(r_1,r_2)$, average channel input $\rho^{A'}$ and channel fidelity $f_c$, which is PPT-preserving and/or no-signalling if and only if the following SDP has a feasible solution
    \begin{align*}
     &f_c =
    \Tr E_1^{A'BC}(N^{A'BC})^T\\
     & 0 \leq E^{A'BC}_1\leq \big(E_1^{A' C} + (r_1^2 - 1)E_3^{A' C}\big) \otimes \frac{\mathds{1}^B}{\abs{B}}, \big(E_1^{A' B} + (r_2^2 - 1)E_2^{A' B}\big) \otimes \frac{\mathds{1}^C}{\abs{C}},\\
    & \big(E_1^{A' B} + (r_2^2 - 1)E_2^{A' B}\big) \otimes \frac{\mathds{1}^C}{\abs{C}} + \big(E_1^{A' C} + (r_1^2 - 1)E_3^{A' C}\big) \otimes \frac{\mathds{1}^B}{\abs{B}} \leq  \rho^{A'}\mathds{1}^{BC} + E_1^{A'BC}.\\
     & E_1^{BC}=\frac{\mathds{1}^{BC}}{r_1^2r_2^2},E_3^{C}=\frac{\abs{B}\mathds{1}^{C}}{r_1^2r_2^2},
     E_2^{B}=\frac{\abs{C}\mathds{1}^{B}}{r_1^2r_2^2},
    \end{align*}
    \vspace{-0.3cm}
    \small
    \begin{align*}  
   & \abs{\big(\frac{1}{r_2}(E_1^{A' B})^{T_B} + \frac{(r_2^2 - 1)}{r_2}(E_2^{A' B})^{T_B}\big) \otimes \frac{\mathds{1}^C}{\abs{C}}
 +\frac{1}{r_1}(E_1^{A'C})^{T_C}\otimes\frac{\mathds{1}^{B}}{\abs{B}} + \frac{r_1^2-1}{r_1}(E_3^{A'C})^{T_{C}}\otimes\frac{\mathds{1}^{B}}{\abs{B}}} \leq 
  \frac{1}{r_1r_2}\rho^{A'}\mathds{1}^{BC} + E_1^{T_{BC}},\\
&\abs{\big(\frac{1}{r_2}(E_1^{A' B})^{T_B} + \frac{(r_2^2 - 1)}{r_2}(E_2^{A' B})^{T_B}\big) \otimes \frac{\mathds{1}^C}{\abs{C}}
 - \frac{1}{r_1}(E_1^{A'C})^{T_C}\otimes\frac{\mathds{1}^{B}}{\abs{B}} - \frac{r_1^2-1}{r_1}(E_3^{A'C})^{T_{C}}\otimes\frac{\mathds{1}^{B}}{\abs{B}}} \leq 
  \frac{1}{r_1r_2}\rho^{A'}\mathds{1}^{BC} - E_1^{T_{BC}} ,\\
      &   \abs{E_1^{T_B}} \leq 
        \frac{1}{r_1}\big(E_1^{A' C} + (r_1^2 - 1)E_3^{A' C}\big) \otimes \frac{\mathds{1}^B}{\abs{B}},\\
&\abs{r_1 \big((E_1^{A' B})^{T_B} + (r_2^2 - 1)(E_2^{A' B})^{T_B}\big) \otimes \frac{\mathds{1}^C}{\abs{C}} - r_1 E_1^{T_B}} \leq  \rho^{A'}\mathds{1}^{BC} - \big(E_1^{A' C} + (r_1^2 - 1)E_3^{A' C}\big) \otimes \frac{\mathds{1}^B}{\abs{B}},\\
       &  \abs{E_1^{T_C}} \leq 
        \frac{1}{r_2} \big(E_1^{A' B} + (r_2^2 - 1)E_2^{A' B}\big) \otimes \frac{\mathds{1}^C}{\abs{C}} ,\\
&\abs{r_2 (E_1^{A' C})^{T_C} + r_2(r_1^2 - 1)(E_3^{A' C})^{T_C} \otimes \frac{\mathds{1}^B}{\abs{B}} - r_2 E_1^{T_C}}
\leq  \rho^{A'}\mathds{1}^{BC} - \big(E_1^{A' B} + (r_2^2 - 1)E_2^{A' B}\big) \otimes \frac{\mathds{1}^C}{\abs{C}}.
    \end{align*}
\end{proposition}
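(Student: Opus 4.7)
The plan is to derive Proposition~\ref{full-fledged-fourier-motzkin} from Theorem~\ref{full-fledged} by eliminating the operator variables $E_2^{A'BC}$ and $E_3^{A'BC}$ in favor of $E_1^{A'BC}$ together with their partial traces $E_2^{A'B}$ and $E_3^{A'C}$. The key enabling fact is that the no-signalling equalities $B\not\to AC$ (Eq.~\eqref{BnototAC-one}) and $C\not\to AB$ (Eq.~\eqref{CnototAB-one}) can be solved explicitly for the eliminated variables,
\begin{align*}
E_3^{A'BC} &= \frac{1}{r_1^2-1}\left[\bigl(E_1^{A'C} + (r_1^2-1)E_3^{A'C}\bigr)\otimes\frac{\mathds{1}^B}{\abs{B}} - E_1^{A'BC}\right], \\
E_2^{A'BC} &= \frac{1}{r_2^2-1}\left[\bigl(E_1^{A'B} + (r_2^2-1)E_2^{A'B}\bigr)\otimes\frac{\mathds{1}^C}{\abs{C}} - E_1^{A'BC}\right].
\end{align*}
This is Fourier--Motzkin elimination lifted to operator (in)equalities: the two no-signalling equalities play the role of substitution rules, and the remaining constraints of Theorem~\ref{full-fledged} get rewritten in the surviving variables.

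With the substitution in hand, I would go through the constraints of Theorem~\ref{full-fledged} one by one. The positivity constraints $E_2^{A'BC},E_3^{A'BC}\geq 0$ become the two upper bounds on $E_1^{A'BC}$ in the first line of the proposition; the causality inequality \eqref{bcnottoa} simplifies, after the $\pm E_1^{A'BC}$ terms cancel pairwise, to the second line. The NS condition $A\not\to BC$ from \eqref{tp+ns} is retained as $E_1^{BC}=\mathds{1}^{BC}/(r_1^2r_2^2)$ and, for $i=2,3$, reduces to the marginal identities $E_2^{B}=\abs{C}\mathds{1}^B/(r_1^2r_2^2)$ and $E_3^{C}=\abs{B}\mathds{1}^C/(r_1^2r_2^2)$ after partial tracing the substitution formulas over $A'C$ and $A'B$ respectively; the fully nonlocal identities $E_2^{BC}=E_3^{BC}=\mathds{1}^{BC}/(r_1^2r_2^2)$ then follow automatically from the kept conditions combined with \eqref{BnototAC-one}--\eqref{CnototAB-one}, so no information is lost.

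The remaining and most intricate step is the PPT block, where I must commute partial transposes with the substitution. I would apply $T_B$ to the formula for $E_3^{A'BC}$ (using that $\mathds{1}^B$ is $T_B$-invariant) and exploit $(r_1-1)(r_1+1)=r_1^2-1$ to collapse the two-sided bound $-(r_1-1)E_3^{T_B}\leq E_1^{T_B}\leq (r_1+1)E_3^{T_B}$ from \eqref{ppt-b-theorem} into the compact form $\abs{(E_1^{A'BC})^{T_B}}\leq \frac{1}{r_1}\bigl(E_1^{A'C}+(r_1^2-1)E_3^{A'C}\bigr)\otimes\mathds{1}^B/\abs{B}$, and proceed similarly for the $E_2^{T_B}$ line; PPT\textsubscript{C} follows by the $B\leftrightarrow C$ symmetry. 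For PPT\textsubscript{A} I would apply $T_{BC}$ (equivalent to $T_{AA'}$ up to a global transpose) so that every $\mathds{1}^{BC}$ factor remains invariant; the four inequalities of \eqref{ppt-a-theorem} then regroup into upper and lower bounds on the two operators $(E_1^{A'B})^{T_B}\otimes\mathds{1}^C/\abs{C}$ and $(E_1^{A'C})^{T_C}\otimes\mathds{1}^B/\abs{B}$, yielding the two compact PPT lines of the proposition. The main obstacle is pure bookkeeping: keeping signs straight under the rescalings $1/(r_i^2-1)$ and verifying that no constraint is relaxed or doubly counted. The converse direction is then immediate: given any feasible point of Proposition~\ref{full-fledged-fourier-motzkin}, the two displayed substitution formulas reconstruct $E_2^{A'BC}$ and $E_3^{A'BC}$, and a constraint-by-constraint check shows that the resulting tuple is feasible for Theorem~\ref{full-fledged}.
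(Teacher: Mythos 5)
Your proposal is correct and follows the same route as the paper's (very terse) proof: solve the two no-signalling equalities \eqref{BnototAC-one} and \eqref{CnototAB-one} for $E_3^{A'BC}$ and $E_2^{A'BC}$, substitute into the remaining constraints of Theorem~\ref{full-fledged}, and verify that the resulting inequalities — including the collapse of the two-sided $\text{PPT}_B$, $\text{PPT}_C$, $\text{PPT}_A$ bounds into absolute-value form using $(r_i-1)(r_i+1)=r_i^2-1$ — match the proposition line by line. The only quibbles are cosmetic: the causality terms give $+E_1-E_1-E_1=-E_1$ (one cancellation, not a pairwise cancellation of all three), and the marginal identities $E_2^B$, $E_3^C$ are obtained by tracing the Theorem~\ref{full-fledged} condition $E_i^{BC}=\mathds{1}^{BC}/(r_1^2r_2^2)$ over $A'$ in the substitution formula rather than over $A'C$, $A'B$ as you write; neither affects the argument.
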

\begin{proof}
    This follows from applying Fourier-Motzkin elimination to the constraints of Theorem \ref{full-fledged}. More precisely, we use Eqs. \eqref{BnototAC-one} and \eqref{CnototAB-one} to remove the operators $E_3^{A'BC}$ and $E_2^{A'BC}$, respectively. 
\end{proof} 

\begin{remark}
        Note that the Fourier-Motzkin elimination in the proof of Propostion \ref{full-fledged-fourier-motzkin} cannot proceed further because the next step involves eliminating the operators $E_2^{A'B}$ and $E_3^{A'C}$, which appear alongside their partial transposes. It is generally not feasible to remove $E_2^{A'B}$ and $E_3^{A'C}$ without losing information. 
We explain this via the following simple example. For Hermitian operators $X^{AB},Y^{AB},W^{AB},Z^{AB},$ and $E^{AB}$, assume the following hold:
\begin{align*}
    X^{AB} &\leq E^{AB} \leq Y^{AB},\\
    W^{AB} &\leq (E^{AB})^{T_A} \leq Z^{AB}.
\end{align*}
In the absence of the partial transpose $T_A$, removing the operator $E^{AB}$ via the Fourier-Motzkin elimination results in the following four inequalities $\{X^{AB},W^{AB}\}\leq\{Y^{AB},Z^{AB}\}$. However, removing the partial transpose in general is not possible.
\end{remark}

\medskip

\subsection{On the equivalence of one-way entanglement distillation and quantum error correction}
\label{comment-one-way}

In this section, we clarify a common statement in the literature attributed to \cite{PhysRevA.54.3824}. It is typically mentioned that (possibly asymptotic) distillation of EPR pairs from a bipartite quantum state using one-way classical communication is equivalent to a quantum error correction code for transmitting quantum information. We aim to make this statement precise and identify the conditions under which this equivalence holds.   

Specifically, the so-called equivalence, if valid, relates a quantum code for a given channel $\mathcal{N}^{A'\to B}$ to a one-way distillation protocol for the channel’s Choi state $\widetilde{N}^{A'B}$, and vice versa. However, just as a quantum code optimized for one channel may not perform equally well for another, a distillation protocol derived from a code for the channel $\mathcal{N}^{A'\to B}$ may not achieve the same entanglement fidelity when applied to an arbitrary state rather than $\widetilde{N}^{A'B}$. The reverse also holds: A distillation protocol designed for $\widetilde{N}^{A'B}$ may not be an effective quantum code for a different channel.

Therefore, any one-way distillation protocol must influence the Choi state of the channel $\mathcal{N}^{A'\to B}$. Suppose a one-way entanglement distillation protocol transforms the (normalized) Choi state $\widetilde{N}^{A'B}$ into a state $\sigma^{AB'}$, where $\abs{A}=\abs{B'}$, with entanglement fidelity $f$, i.e., $\bra{\phi}^{AB'}\sigma^{AB'}\ket{\phi}^{AB'} = f$. Then, there exists a quantum code for Alice and Bob with channel fidelity equal to $f$.  
This follows directly: Alice prepares an EPR pair $\phi^{A'\widetilde{A'}}$ and sends $\widetilde{A'}$ through the channel, leading to the shared state $\widetilde{N}^{A'B}$. Applying the one-way distillation protocol, they obtain $\sigma^{AB'}$, which serves as a resource for teleportation. Since the fidelity of a teleportation channel using $\sigma^{AB'}$ as a shared resource matches its entanglement fidelity, Alice and Bob effectively obtain a channel with fidelity $f$, which they can then use to transmit arbitrary quantum states.

On the other hand, and importantly, this does not always hold in the reverse direction. That is, if there exists a quantum code for the channel $\mathcal{N}^{A'\to B}$ such that Alice and Bob can use it to share an EPR pair, resulting in a state $\sigma^{AB'}$ with fidelity $\bra{\phi}^{AB'}\sigma^{AB'}\ket{\phi}^{AB'} = f$, it is not necessarily possible to convert this code into a distillation protocol that extracts $\sigma^{AB'}$ from the Choi state $\widetilde{N}^{A'\to B}$ with the same fidelity $f$. However, this conversion is possible if the channel $\mathcal{N}^{A'\to B}$ is teleportation-simulable, meaning it can be simulated using a single copy of its Choi state.  

Before explaining how a quantum code can be turned into a one-way distillation protocol under teleportation simulability, we first clarify that simulating a channel via a single copy of its Choi state is a probabilistic process.  
Let $N^{A'B}$ be a Choi matrix. If Alice wishes to transmit an arbitrary state $\rho^{A'}$ through the channel, we have  
\begin{align*}
    \Tr_{A'}\left\{N^{A'B} (\rho^{A'})^{T_{A'}}\right\} &= 
    \abs{A'}  \bra{\phi}^{A'\widetilde{A'}}\big(N^{A'B} (\rho^{A'})^{T_{A'}} \otimes \mathds{1}^{\widetilde{A'}}\big )\ket{\phi}^{A'\widetilde{A'}}\\
    &= \abs{A'} \bra{\phi}^{A'\widetilde{A'}}(N^{A'B} \otimes \rho^{A'}) \ket{\phi}^{A'\widetilde{A'}} \\
    &= \abs{A'}^2 \bra{\phi}^{A'\widetilde{A'}}(\widetilde{N}^{A'B} \otimes \rho^{A'}) \ket{\phi}^{A'\widetilde{A'}},
\end{align*}
where the first line follows from Eq. \eqref{trace-EPR}, the second from the transpose trick Eq. \eqref{eq:transposetrick}, and in the last line we use the (normalized) Choi state $\widetilde{N}^{A'B} = \frac{N^{A'B}}{\abs{A'}}$ of the channel $\mathcal{N}^{A'\to B}$.  

This result shows that Alice and Bob can fully simulate the channel’s action using the Choi state as a shared resource, provided Alice's measurement yields the EPR pair $\ket{\phi}=1/\sqrt{\abs{A'}}\sum_i\ket{ii}^{A'\widetilde{A'}}$. However, if Alice obtains any other Bell state, the simulation fails due to the transpose trick used in the second line. More precisely, while the first equality holds for any Bell state, the transpose trick is valid only for $\ket{\phi}$. Furthermore, the factor $\abs{A'}^2$ in the last line corresponds to the probability of Alice obtaining $\ket{\phi}$, which is the same for any Bell state.  

We now explain how a code for a teleportation-simulable channel $\mathcal{N}^{A'\to B}$ leads to a one-way distillation protocol. Alice and Bob have access to a code consisting of an encoder and a decoder, $(\mathcal{E}^{A\to A'},\mathcal{D}^{B'\to B})$, which achieves channel fidelity $f$ when used with $\mathcal{N}^{A'\to B}$. Given that they share the Choi state $\widetilde{N}^{A'B}$, their goal is to apply this code to convert $\widetilde{N}^{A'B}$ into a state with entanglement fidelity $f$. While this protocol could be applied to any bipartite state, we specifically show that, when applied to the Choi state, it achieves the desired fidelity.  

Alice first prepares a maximally entangled Bell state $\ket{\phi}^{A\widetilde{A}}$ and applies her encoder to one half, producing the state $\Theta^{AA'} = \mathcal{E}^{\widetilde{A}\to A'}(\phi^{A\widetilde{A}})$. At this point, the entire state remains in Alice's possession.  
Since Alice and Bob seek to distill the shared state $\widetilde{N}^{A'B}$, they now run a teleportation protocol using this state to transmit the $A'$ system of $\Theta^{AA'}$ to Bob. Because the channel $\mathcal{N}^{A'\to B}$ is teleportation-simulable, this process effectively mimics direct transmission of $A'$ through the channel. Bob then applies his decoder to the received state, realizing the code and producing $\sigma^{AB'}$ with entanglement fidelity $f$. Thus, Alice and Bob can use the pair $(\mathcal{E}^{A\to A'},\mathcal{D}^{B'\to B})$ to distill $\widetilde{N}^{A'B}$ into $\sigma^{AB'}$, leveraging the teleportation simulability of the channel. We note that even if the result of the teleportation is not the Bell state $\phi$, Alice and Bob may still perform post-processing using Alice's classical messages and Bob's unitaries. However, the precise details of this process requires further investigation.

\medskip

\textbf{Acknowledgements}: We are grateful for discussions with Emilio Onorati, Marco Tomamichel, Mark Wilde, Andreas Winter, and Michael Wolf. FS is supported by the Walter Benjamin Fellowship, DFG project No. 524058134. PH was supported by ARO (award W911NF2120214), DOE (Q-NEXT), CIFAR and the Simons Foundation. HR is supported by the Munich Quantum Valley. DL received support from an NSERC discovery grant and also from an NSERC Alliance grant under the project QUORUM (ALLRP-578455-2022). This research was supported in part by Perimeter Institute for Theoretical Physics. Research at Perimeter Institute is supported by the Government of Canada through the Department of Innovation, Science, and Economic Development, and by the Province of Ontario through the Ministry of Colleges and Universities.

\medskip

\bibliographystyle{alpha}
\bibliography{sdp-broadcast}

\end{document}